\newtheorem{mylem}{Lemma}
\newtheorem{myclaim}{Claim}
\newtheorem{mythem}{Theorem}
\newtheorem{mydef}{Definition}
\begin{document}

\title{First Stretch then Shrink and Bulk: A Two Phase Approach for Enumeration  of Maximal $(\Delta, \gamma)$\mbox{-}Cliques of a Temporal Network
\thanks{Suman Banerjee is supported by the Post Doctoral Fellowship Grant sponsored by the Indian Institute of Technology Gandhinagar (Project No. MIS/IITGN/PD-SCH/201415/006). \\
Both the authors have contributed equally in this work and they are joint first authors.}
}
%\subtitle{Do you have a subtitle?\\ If so, write it here}

%\titlerunning{Short form of title}        % if too long for running head

\author{Suman Banerjee         \and
        Bithika Pal %etc.
}

%\authorrunning{Short form of author list} % if too long for running head

\institute{Suman Banerjee \at
              Department of Computer Science and Engineering \\
              Indian Institute of Technology Gandhinagar, India. \\
%              Fax: +123-45-678910\\
              \email{suman.b@iitgn.ac.in}           %  \\
%             \emph{Present address:} of F. Author  %  if needed
           \and
           Bithika Pal \at
              Department of Computer Science and Engineering, \\
              Indian Institute of Technology Kharagpur, India. \\
              \email{bithikapal@iitkgp.ac.in}
}

\date{Received: date / Accepted: date}
% The correct dates will be entered by the editor

\maketitle

\begin{abstract}
A \emph{Temporal Network} (also known as \emph{Link Stream} or \emph{Time-Varying Graph}) is often used to model a time-varying relationship among a group of agents. It is typically represented as a collection of triplets of the form $(u,v,t)$ that denotes the interaction between the agents $u$ and $v$ at time $t$. For analyzing the contact patterns of the agents forming a temporal network, recently the notion of classical \textit{clique} of a \textit{static graph} has been generalized as \textit{$\Delta$\mbox{-}Clique} of a Temporal Network. In the same direction, one of our previous studies introduces the notion of \textit{$(\Delta, \gamma)$\mbox{-}Clique}, which is basically a \textit{vertex set}, \textit{time interval} pair, in which every pair of the clique vertices are linked atleast $\gamma$ times in every $\Delta$ duration of the time interval. In this paper, we propose a different methodology for enumerating all the maximal $(\Delta, \gamma)$\mbox{-}Cliques of a given temporal network. The proposed  methodology is broadly divided into two phases. In the first phase, each temporal link is processed for constructing $(\Delta, \gamma)$\mbox{-}Clique(s) with maximum duration. In the second phase, these initial cliques are expanded by vertex addition to form the maximal cliques. By sequential arguments, we show that the proposed methodology correctly enumerates all the maximal $(\Delta, \gamma)$\mbox{-}Cliques. Comprehensive analysis for running time and space requirement of the proposed methodology has also been done. From the experimentation carried out on $5$ real\mbox{-}world temporal network datasets, we observe that the proposed methodology enumerates all the maximal $(\Delta,\gamma)$\mbox{-}Cliques efficiently, particularly when the dataset is sparse. As a special case ($\gamma=1$), the proposed methodology is also able to enumerate $(\Delta,1) \equiv \Delta$\mbox{-}cliques with much less time compared to the existing methods.  
\keywords{Temporal Network \and Enumeration Algorithm \and $(\Delta, \gamma)$\mbox{-}Clique \and Maximal $(\Delta, \gamma)$\mbox{-}Clique}
% \PACS{PACS code1 \and PACS code2 \and more}
% \subclass{MSC code1 \and MSC code2 \and more}
\end{abstract}

\section{Introduction} \label{intro}
Network (also called graph) is a mathematical object which is used extensively to represent a \textit{binary relation} among a group of agents. Analyzing such networks for different structural patterns remains an active area of study in different domains including \emph{Computational Biology} \citep{hulovatyy2015exploring}, \emph{Social Network Analysis},  \emph{Computational Epidemiology} \citep{masuda2017temporal} and many more. Among many one such structural pattern is the maximally connected subgraphs, which is popularly called as \emph{cliques}. Finding the maximum cardinality clique in a given network is a well known \textit{NP\mbox{-}Complete} Problem \citep{garey2002computers}. However, in network analysis perspective more general problem is not only just finding the maximum size clique, but also to enumerate all the maximal cliques present in the network. Bron and Kerbosch \citep{bron1973algorithm} first proposed an enumeration algorithm for maximal cliques in the network which forms the foundation of study on this problem. Later, there were advancements for this problem for different types of networks \citep{cheng2012fast, eppstein2011listing, eppstein2013listing1} etc.
\par Real-world networks from biological to social are \emph{time varying}, which means that the existence of an edge between any two agents changes with time. Temporal networks \citep{holme2012temporal} (also known as \emph{link streams} or \emph{time varying networks}) are the mathematical objects used to formally represent the time varying relationships. For these type of networks, a natural supplement of clique is the \textit{temporal clique} which consists of two things: a subset of the vertices, and a time interval. In this direction, recently, Virad et al. \citep{viard2015revealing, viard2016computing} put forward the notion of $\Delta$\mbox{-}Clique, where a vertex subset along with a time interval is said to be a $\Delta$\mbox{-}Clique if every vertex pair from that set have at least a single edge in every $\Delta$ duration within the time interval. Next, we report the existing studies on clique enumeration on networks.
\subsection{Relevant Studies}
The problem of maximal clique enumeration is a classic computational problem on network algorithms and has been extensively studied on static networks. \cite{akkoyunlu1973enumeration} was the first to propose an algorithm for this problem. Later, \cite{DBLP:journals/cacm/BronK73} introduced a recursive approach for the maximal clique enumeration problem. These two studies are the foundations on maximal clique enumeration and trigger a huge amount of research due to many practical applications from computational biology to spatial data analytic \cite{al2007enumeration, bhowmick2015clustering}. Since past two decades several methodologies have been developed for enumerating maximal cliques in different computational paradigms, and different kinds of networks, such as in sparse graphs \cite{eppstein2011listing, eppstein2013listing}, in large networks \cite{cheng2010finding, cheng2011finding, rossi2014fast}, in map reduce framework \cite{hou2016efficient, xiang2013scalable}, in uncertain graphs \cite{mukherjee2015mining, mukherjee2016enumeration, zou2010finding}, in parallel computing framework \cite{chen2016parallelizing, du2006parallel, rossi2015parallel, schmidt2009scalable} and many more.
\par Though there are many existing studies on maximal clique enumeration on static networks, however, the literature on temporal graphs is limited. Viard et al. \citep{viard2015revealing} proposed an enumeration  algorithm for maximal $\Delta$\mbox{-}Clique of a temporal network. They did a detailed analysisof contact relationship among a group of students, based on their introduced methodology. Thry were able to show that their analysis draws deeper insights of their communication pattern \citep{viard2015revealing}. Later, Himmel et al. \cite{himmel2016enumerating} proposed a different approach for maximal $\Delta$\mbox{-}Clique enumeration problem. Their methodology is based on the \textit{Bron\mbox{-}Kerbosch Algorithm} for maximal clique enumeration in static graphs. Their methodology is better in both of the following aspects: theoretically (measured in terms of worst case computational complexity analysis), as well as practically (measured in terms of computational time when the algorithm is implemented with real\mbox{-}world datasets). 
%Reported results in \citep{himmel2016enumerating} show that, their proposed algorithm performs much better than that of in \citep{viard2016computing} in terms of both worst case computational time analysis as well as in experimentaion with real\mbox{-}life data sets. 
Recently, Molter et al. \cite{molter2019enumerating} introduced the notion of \emph{isolation} in clique enumeration of a time varying graph. They developed fixed parameter enumeration algorithm based on different notion of isolation employing the parameter ``degree of isolation". Recently, Banerjee and Pal \cite{DBLP:conf/comad/BanerjeeP19} proposed an enumeration algorithm for maximal $(\Delta, \gamma)$\mbox{-}\textit{Cliques} present in a time varying graph. As far as we know, other than the last one there is no other work available which studies $(\Delta, \gamma)$\mbox{-}\textit{Cliques}.
\subsection{Contribution of the Paper}
As mentioned previously, a temporal network consists of a set of agents and a time varying relationship. Now, the following questions are essential to understand the contact pattern among them: which subset of agents comes in contact very frequently among each other?  Given a time duration how many times they contact with each other? etc. The frequency of communication also adds another dimension of information to their relationship strength. Motivated by such questions, recently the notion of $\Delta$\mbox{-}Clique has been extended to $(\Delta, \gamma)$\mbox{-}\textit{Cliques}, which is basically a vertex subset and time interval pair in which each pair of communicating vertices of the subset has minimum $\gamma$ interactions in every $\Delta$ duration within the time interval. In this paper, we give a different approach for listing out all the maximal $(\Delta, \gamma)$\mbox{-}Cliques that are there in a temporal network. The main  contributions of this paper are as follows:
\begin{itemize}
	\item In this paper, we propose a different approach, namely, first stretch and then shrink and bulk, for listing out maximal $(\Delta, \gamma)$\mbox{-}Cliques that are there in a temporal network.
	\item By drawing sequential arguments, we prove the correctness of the proposed methodology.
	\item A detailed analysis of the proposed methodology has been done to understand its computational time and space requirement. 
	%\item Proposed methodology has been analyzed in detail and elaborated with an example. 
	\item  The proposed methodology has been implemented with five publicly available temporal network datasets to bring out nontrivial insights about contact patterns and compare the efficiency of the proposed methodology with the existing one.
	\item Also, a set of experiments have been conducted to show that the proposed methodology of maximal $(\Delta, \gamma)$\mbox{-}Clique enumeration can also be efficiently used for enumerating maximal $\Delta$\mbox{-}Clique as well (By putting $\gamma=1$).
\end{itemize}

\subsection{Structure of this Article} 
Remaining portion of this article is arranged in the following way: Section \ref{Sec:Preli} discusses some preliminary concepts regarding temporal network and formally defines the maximal $(\Delta, \gamma)$\mbox{-}Clique enumeration problem formally. Section \ref{Sec:PET} contains the proposed enumeration technique with its detailed analysis, proof of correctness and an illustrative example. Section \ref{Sec:Experiment} describes experimental evaluation of the proposed methodology in details. Finally, Section \ref{Sec:CFD} concludes study and gives future directions.
\section{Background and Problem Definition} \label{Sec:Preli}
In this section we present some preliminary concepts to understand the problem, that we work out in this paper, and the proposed solution methodology. In a \textit{temporal network}, its edges are marked with the corresponding occurrence timestamp(s). Formally, it is stated in Definition \ref{Def:1}.
\begin{mydef}[Temporal Network] \cite{holme2013temporal} \label{Def:1}
A temporal network is defined as $\mathcal{G}(V, E, \mathcal{T})$, where $V(\mathcal{G})$ is the set of vertices of the network and $E(\mathcal{G})$ is the set of edges among them. $\mathcal{T}$ is the mapping that maps each edge of the graph to its occurrence time stamp(s).
\end{mydef}
\begin{figure}
	\centering
	\resizebox{6.5 cm}{1.8 cm}{\includegraphics{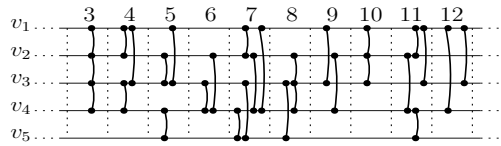}}
	\caption{Links of a Temporal Network}
	\label{Fig:TG}
\end{figure}
Figure \ref{Fig:TG} shows a temporal graph with $5$ vertices and $29$ edges, where edges are shown in the time horizon. In temporal network analysis, it is assumed that the network changes its topology in discrete time steps. So, starting at time $t$, if the network is observed in every $dt$ time difference till $t^{'}$, the time instances are $\mathbb{T}=\{t, t+dt, t+2dt, \dots, t^{'}\}$. In rest of our study we assume, $t,t^{'} \in \mathbb{Z}^{+}$ and $dt=1$. The difference between the beginning and ending time stamp, i.e., $t^{'}-t$ is called as the \textit{Life Time of the Network}. In the temporal network $\mathcal{G}$, if there is an edge between two vertices $v_i$ and $v_j$ at time $t$, then it is symbolized as $(v_i,v_j,t)$, signifying that there is a contact between $v_i$ and $v_j$ at time $t$. For some $t \in \mathbb{T}$ if $(v_i,v_j,t) \in E(\mathcal{G})$, then we say, that there exists a static edge between $v_i$ and $v_j$. The \textit{frequency} of an edge is defined as the number of $t \in \mathbb{T}$ such that $(v_i,v_j,t) \in E(\mathcal{G})$ and denoted as $f_{(v_iv_j)}$, i.e., $f_{(v_iv_j)}=|\{t \in \mathbb{T}: (v_i,v_j,t) \in E(\mathcal{G}) \}|$. If $(v_i,v_j) \notin E(\mathcal{G})$, then we say that $f_{(v_iv_j)}=0$. In rest of our study, we work with undirected temporal network, i.e., there is no difference between $(v_i,v_j,t)$ and $(v_j,v_i,t)$.
\par In a static network, a subset of vertices, where every pair is adjacent is known as a \textit{clique}. The size of the clique is defined as the number of vertices it contains. A clique is said to be maximal if it is not part of another clique of larger size. The general notion of clique is  extended for temporal graphs as $\Delta$\mbox{-}clique, which is vertex subset  and time sub\mbox{-}interval pair, such that, in each $\Delta$ duration of the sub\mbox{-}interval there exist at least one link between every pair of vertices in the vertex subset. Formally it is stated in Definition \ref{Def:2}.
\begin{mydef}[$\Delta$\mbox{-}Clique] \cite{viard2016computing} \label{Def:2}
	Given a temporal network $\mathcal{G}(V, E, \mathcal{T})$ and time duration $\Delta$, a $\Delta$\mbox{-}Clique of $\mathcal{G}$ is a vertex set, time interval pair, i.e., $(\mathcal{X}, [t_a,t_b])$ with $\mathcal{X} \subseteq V(\mathcal{G})$, $\vert \mathcal{X} \vert \geq 2$ and $[t_a,t_b] \subset \mathbb{T}$, such that $\forall v_i,v_j \in \mathcal{X}$ and $\tau \in [t_a, max(t_b - \Delta, t_a)]$ there is an edge $(v_i, v_j, t_{ij}) \in E(\mathcal{G})$ with $t_{ij} \in [\tau, min (\tau + \Delta, t_b)]$.
\end{mydef}
In one of our recent study, we introduced the notion of $(\Delta, \gamma)$\mbox{-}clique by extending the concept of $\Delta$\mbox{-}Clique and incorporating an additional parameter $\gamma$ as a frequency threshold. This is stated in Definition \ref{Def:DG}.
\begin{mydef}[$(\Delta, \gamma)$\mbox{-}Clique] \cite{DBLP:conf/comad/BanerjeeP19} \label{Def:DG}
	Given a temporal network $\mathcal{G}(V, E, \mathcal{T})$, time duration $\Delta$, and a frequency threshold $\gamma \in \mathbb{Z}^{+}$, a $(\Delta, \gamma)$\mbox{-}Clique of $\mathcal{G}$ is a tuple consisting of vertex subset, and time interval, i.e., $(\mathcal{X}, [t_a,t_b])$ where $\mathcal{X} \subseteq V(\mathcal{G})$, $\vert \mathcal{X} \vert \geq 2$, and  $[t_a,t_b] \subseteq \mathbb{T}$. Here $\forall v_i,v_j \in \mathcal{X}$ and $\tau \in [t_a, max(t_b - \Delta, t_a)]$, there must exist at least $\gamma$ number of edges, i.e., $(v_i, v_j, t_{ij}) \in E(\mathcal{G})$ and $f_{(v_iv_j)} \geq \gamma$ with $t_{ij} \in [\tau, min (\tau + \Delta, t_b)]$. Here, $f_{(v_iv_j)}$ denotes the frequency of the static edge $(v_i, v_j)$.
\end{mydef}

In a static graph $G(V, E)$, a maximal clique is formed as $\mathcal{S} \subset V(G)$, if for each $ v \in V(G) \setminus \mathcal{S}$, $\mathcal{S} \cup \{v\}$ is not a clique. Now, as the $(\Delta, \gamma)$\mbox{-}Clique is defined in the setting of temporal networks, so its maximality depends on two parameters: one is the \textit{cardinality} and the other one is the \textit{time interval}. We introduce the maximality conditions for an arbitrary $(\Delta, \gamma)$\mbox{-}Clique in Definition \ref{Def:MDG} considering both the factors.
\begin{mydef}[Maximal $(\Delta, \gamma)$\mbox{-}Clique] \label{Def:MDG} \label{Def:maximal}
	Given a temporal network $\mathcal{G}(V, E, \mathcal{T})$ and a $(\Delta, \gamma)$\mbox{-}Clique $(\mathcal{X}, [t_a,t_b])$ of $\mathcal{G}$, $(\mathcal{X}, [t_a,t_b])$ will be maximal if none of the following is true.
	\begin{itemize}
		\item $\exists v \in V(\mathcal{G}) \setminus \mathcal{X}$ such that $(\mathcal{X} \cup \{v\}, [t_a,t_b])$ is a $(\Delta, \gamma)$\mbox{-}Clique.
		\item $(\mathcal{X}, [t_a - 1,t_b])$ is a $(\Delta, \gamma)$\mbox{-}Clique. This applies only if $t_a - 1 \geq t$.
		\item $(\mathcal{X}, [t_a,t_b + 1])$ is a $(\Delta, \gamma)$\mbox{-}Clique. This applies only if $t_b + 1 \leq t^{'}$.
	\end{itemize}
\end{mydef}
From Definition \ref{Def:maximal}, it is clear that the first condition addresses the cardinality, whereas the next two are due to time duration. In a static graph, among all of its maximal cliques, one with the highest cardinality is called the maximum clique or largest size clique. However, in case of $(\Delta, \gamma)$\mbox{-}Clique, maximum can be both in terms of  cardinality or duration. Hence, maximum $(\Delta, \gamma)$\mbox{-}Clique of a temporal network can be defined as follows. 
\begin{mydef}[Maximum $(\Delta, \gamma)$\mbox{-}Clique]
	Given a temporal network $\mathcal{G}(V, E, \mathcal{T})$, let $\mathcal{C}$ be the set of all maximal $(\Delta, \gamma)$\mbox{-}Cliques of $\mathcal{G}$. Now, $(\mathcal{X}, [t_a,t_b]) \in \mathcal{C}$ will be
	\begin{itemize}
		\item temporally maximum if $\forall (\mathcal{Y}, [t_a^{'},t_b^{'}]) \in \mathcal{S} \setminus (\mathcal{X}, [t_a,t_b])$, $t_b-t_a \geq t_b^{'} - t_a^{'}$.
		\item cardinally maximum if $\forall (\mathcal{Y}, [t_a^{'},t_b^{'}]) \in \mathcal{S} \setminus (\mathcal{X}, [t_a,t_b])$, $\vert \mathcal{X} \vert \geq \vert \mathcal{Y} \vert$.
	\end{itemize}
\end{mydef}
In this paper, we study the problem of listing out all the maximal $(\Delta,\gamma)$ cliques of a given temporal network, which we call as the Maximal $(\Delta, \gamma)$\mbox{-}Clique Enumeration Problem defined next.
\begin{mydef}[Maximal $(\Delta, \gamma)$\mbox{-}Clique Enumeration Problem] \label{Def:MDG Problem}
Given a temporal network $\mathcal{G}(V, E, \mathcal{T})$, $\Delta$, and $\gamma$ the maximal $(\Delta, \gamma)$\mbox{-}Clique Enumeration Problem asks to list out all the maximal $(\Delta, \gamma)$\mbox{-}Cliques (as mentioned in Definition \ref{Def:MDG}) present in $\mathcal{G}$.
\end{mydef}
Next, we proceed to describe the proposed enumeration methodology for maximal $(\Delta, \gamma)$\mbox{-}Cliques.
\section{Proposed Enumeration Technique} \label{Sec:PET}
As stated earlier, the proposed methodology is broadly divided into two steps and each of them is described in the following two subsections. The broad idea of the proposed enumeration process is as follows: given all the links with time duration of the temporal network, initially, we find out the maximal cliques of cardinality two. We call this phase as the \emph{Stretching} phase, because all the cliques after this phase are duration wise maximal, as if, we are stretching the cliques across the time horizon. Next, taking these duration wise maximal cliques, we add vertices into the clique without violating the definition of $(\Delta, \gamma)$\mbox{-}clique, as if, we are putting vertices into the initialized cliques to make them bulk. Hence, duration of the newly generated cliques are shrinking. Hence, we call the second phase as the \emph{Shrink and Bulk} Phase.

\subsection{Stretching Phase (Initialization)}
Algorithm \ref{Algo:Ini} describes the initialization process of the proposed methodology. For a given temporal network $\mathcal{G}$, initially, we construct the dictionary $\mathcal{D}_{e}$ with the static edges as the \textit{keys} and correspondingly, the occurrence time stamps are the \textit{values}. By the definition of $(\Delta, \gamma)$\mbox{-}clique, if the end vertices of an edge is part of a clique, then the edge has to occur atleast $\gamma$ times in the link stream. Hence, for each static edge $(uv)$ of $\mathcal{G}$, if its frequency is at least $\gamma$, it is processed further. The occurrence time stamps of $(uv)$ are fed into the list $\mathcal{T}_{(uv)}$. A temporary list, $Temp$, is created to store each current processing timestamps from $\mathcal{T}_{(uv)}$ with its previous occurrences, till it has maintained $(\Delta, \gamma)$\mbox{-}clique property. Now, the \texttt{for}\mbox{-}loop from Line 8 to 32 computes all the $(\Delta, \gamma)$\mbox{-}cliques with maximum duration where $\{u,v\}$ is the vertex set. During the processing of $\mathcal{T}_{(uv)}$, any of the following two cases can happen. In the first case, if the current length of $Temp$ is less than $\gamma$, the difference between the current timestamp from $\mathcal{T}_{(uv)}$ and the first entry of $Temp$ is checked (Line 10). Now, if the difference is less than or equal to $\Delta$, current timestamp is appended in $Temp$. Otherwise, all the previous timestamps that have occurred within past $\Delta$ duration from the current timestamp are added in $Temp$ (Line 14). This process basically checks $\Delta$ timestamp backward from each occurrence times of the static edge $(u,v)$. In the second case, when the current length of $Temp$ is greater than or equal to $\gamma$, it is checked whether the current processing time from $\mathcal{T}_{(uv)}$ falls within the interval of (last $\gamma$-th occurrence time + 1) to (last $\gamma$-th occurrence time + 1 + $\Delta$). Now, if it is true, the current timestamp is appended in $Temp$. It can be easily observed that this appending is done iff the at least consecutive $\gamma$ occurrences are within each $\Delta$ duration. Otherwise, the clique is added in $ \mathcal{C}^I_T$ with the vertex set $\{u,v\}$ and time interval $[t_a, t_b]$ (Line 22), where $t_a$ is the $\Delta$ ahead timestamp from the first $\gamma$-th entry in $Temp$ and $t_b$ is the $\Delta$ on\mbox{-}wards timestamp from the last $\gamma$-th entry in $Temp$. Next, all the previous timestamps that have occurred within past $\Delta$ duration from the current timestamp are added in $Temp$ as before (Line 24). It allows to consider overlapping clique. Now, this may happen when we process the last occurrence from $\mathcal{T}_{(uv)}$, it is added in $Temp$. However, no clique can be added by the condition of 9 to 26 if the length of $Temp$ is greater than or equal to $\gamma$. This situation is handled by Line 27 to 31. This process is iterated for each key from the dictionary $\mathcal{D}_{e}$. Now, we present few lemmas and all together they will help to argue the correctness of the proposed methodology.

\begin{algorithm}[h]
	\SetAlgoLined
	\KwData{The temporal network $\mathcal{G}(V, E, \mathcal{T}), \ \Delta, \ \gamma \in \mathbb{Z}^{+}$.}
	\KwResult{The initial clique set $ \mathcal{C}^I_T$ of $\mathcal{G}$}
	$\text{Construct the Dictionary } \mathcal{D}_{e}$\;
%	$\text{Prepare the Static Graph } G$\;
	$\mathcal{C}^I_T=\phi$\;
	\For{\text{Every } $(uv) \in \mathcal{D}_{e}.keys()$}{
		\If{$f_{(uv)} \geq \gamma$}{
			$\mathcal{T}_{(uv)}=\text{Time Stamps of } (uv)$\;
			$Temp=[ \ ]$\;
			$Temp.append(\mathcal{T}_{(uv)}[1])$\;
			
			\For{ $i=2$ to $len(\mathcal{T}_{(uv)})$}{

			   \eIf{$len(Temp) < \gamma$}{
			   	 \eIf{$\mathcal{T}_{(uv)}[i]-Temp[1] \leq \Delta$}{
			   	 	$Temp.append(\mathcal{T}_{(uv)}[i])$\;
		   	    	} 
	   	 	      {
	   	 	      $Temp=[ \ ]$\;
		   	 	   $Temp.append(\text{time stamps of the links occured in previous } \Delta \ Duration )$
	   	 	      }
			   	
			   }
		       %\uElseIf{$len(Temp) \geq \gamma$}{
		   	     {
		   	     \eIf{$Temp[len(Temp) - \gamma+1] +1 + \Delta  \geq \mathcal{T}_{(uv)}[i] $}{
		   	     	$Temp.append(\mathcal{T}_{(uv)}[i])$\;}
	   	     	   {
		   	     	$t_a = Temp[\gamma] - \Delta$ \tcp*{first $\gamma$\mbox{-}th occurrence of $(u,v)$ in \textit{Temp}}
		   	     	$t_b = Temp[len(Temp) - \gamma+1] + \Delta$ \tcp*{last $\gamma$\mbox{-}th occurrence of $(u,v)$ in \textit{Temp}}
		   	     	$\mathcal{C}^I_T.\texttt{addClique}(\{u,v\}, [t_a, t_b])$\;
		   	     	$Temp=[ \ ]$\;
		   	     	$Temp.append(\text{time stamps of the links occured in previous } \Delta \ Duration )$
	   	     	   }
		   	     
		       }

			  \If{$i = len(\mathcal{T}_{(uv)}) \text{ and }  len(Temp) \geq \gamma$}{
			  	$t_a = Temp[\gamma] - \Delta$ \tcp*{first $\gamma$\mbox{-}th occurrence of $(u,v)$ in \textit{Temp}}
			  	$t_b = Temp[len(Temp) - \gamma+1] + \Delta$ \tcp*{last $\gamma$\mbox{-}th occurrence of $(u,v)$ in \textit{Temp}}
			  	$\mathcal{C}^I_T.\texttt{addClique}(\{u,v\}, [t_a, t_b])$\;
		  	}

			}
		}
	}
	\caption{Stretching Phase of the $(\Delta, \gamma)$\mbox{-}Clique Enumeration}
	\label{Algo:Ini}
\end{algorithm}

\begin{mylem}\label{Lemma:ini_1}
 For a link $(uv)$, if there exist any consecutive $\gamma$ occurrences within $\Delta$ duration, then it has to be in `$Temp$' at some stage, in Algorithm \ref{Algo:Ini}.
\end{mylem}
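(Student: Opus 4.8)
The plan is to track the contents of the list \emph{Temp} through the single run of the inner \texttt{for}-loop of Algorithm \ref{Algo:Ini} that processes the occurrence list $\mathcal{T}_{(uv)}$. Let $\tau_1 < \tau_2 < \cdots < \tau_\gamma$ denote the $\gamma$ consecutive occurrences of $(uv)$ lying inside a $\Delta$ window, so that $\tau_\gamma - \tau_1 \leq \Delta$. Since they are \emph{consecutive} entries of $\mathcal{T}_{(uv)}$, the loop visits them in consecutive iterations with no other timestamp of $(uv)$ in between. The target is to show that at the moment right after $\tau_\gamma$ is processed, the whole set $\{\tau_1, \ldots, \tau_\gamma\}$ is contained in \emph{Temp}, which is exactly the ``at some stage'' conclusion of the lemma.

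First I would isolate the only two ways \emph{Temp} is ever altered in an iteration: an \emph{append}, which inserts the current timestamp and leaves all earlier entries untouched; and a \emph{reset} (Lines 14 and 24), which empties \emph{Temp} and re-inserts precisely the occurrences of $(uv)$ falling in the previous $\Delta$ duration measured from the current timestamp, the current timestamp included. Every branch of the case analysis in Lines 9--26 terminates in one of these two actions (the clique addition in Line 22, and likewise the end-of-list handling in Lines 27--31, does not disturb the post-reset contents of \emph{Temp}), so nothing else has to be considered.

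The core of the argument would be an induction on $j$ establishing the invariant \emph{after the iteration processing $\tau_j$ we have $\{\tau_1, \ldots, \tau_j\} \subseteq Temp$}. For the base case $j=1$, whichever action is taken when $\tau_1$ is read puts $\tau_1$ into \emph{Temp}: an append inserts it directly, a reset re-inserts it as the current timestamp. For the inductive step I would assume $\{\tau_1, \ldots, \tau_{j-1}\} \subseteq Temp$ at the start of the iteration for $\tau_j$ (legitimate because these are consecutive iterations with no modification in between). If the algorithm appends, the earlier entries survive and $\tau_j$ is added, giving $\{\tau_1, \ldots, \tau_j\}$; if instead a reset occurs, \emph{Temp} is repopulated with the occurrences in $[\tau_j - \Delta, \tau_j]$, and since $\tau_j \leq \tau_\gamma$ with $\tau_\gamma - \tau_1 \leq \Delta$ we get $\tau_j - \Delta \leq \tau_1 \leq \tau_i \leq \tau_j$ for every $i \leq j$, so all of $\tau_1, \ldots, \tau_j$ lie inside the reset window and are re-inserted. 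Setting $j = \gamma$ gives the claim.

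I expect the main obstacle to be the reset step, as it is the sole operation that \emph{removes} elements from \emph{Temp}: one must rule out that it ever evicts a member of the group before $\tau_\gamma$ has been read. The inequality $\tau_1 \geq \tau_j - \Delta$ for all $j \leq \gamma$, which follows directly from the whole group fitting inside one $\Delta$ window, is precisely what guarantees survival through any reset; hence the delicate point to state carefully is that the reset window is $[\,\text{current} - \Delta,\, \text{current}\,]$ \emph{with the current timestamp included}, since the base case and the reset branch of the inductive step both rely on it.
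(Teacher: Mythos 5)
Your proposal is correct, but it is worth noting that the paper's own ``proof'' of this lemma is the single sentence ``Follows from the description of Algorithm \ref{Algo:Ini}'' --- so you are supplying an actual argument where the authors give none. Your formalization is the natural one: isolate the two operations that can modify $Temp$ (append versus reset), and run an induction on the invariant that after $\tau_j$ is processed the whole prefix $\{\tau_1,\dots,\tau_j\}$ sits in $Temp$, with the inequality $\tau_j - \Delta \leq \tau_1$ (a consequence of $\tau_\gamma - \tau_1 \leq \Delta$) guaranteeing that no reset can evict an earlier member of the group before $\tau_\gamma$ arrives. Two points you flag are exactly the ones the paper leaves implicit and that a careful reader must resolve: first, that the reset window $[\text{current}-\Delta,\ \text{current}]$ must be read as including the current timestamp (the prose ``previous $\Delta$ duration'' in Lines 14 and 24 is ambiguous, but the algorithm is broken under the exclusive reading, and the surrounding text ``checks $\Delta$ timestamp backward from each occurrence'' supports inclusion); and second, that consecutiveness of the $\gamma$ occurrences in $\mathcal{T}_{(uv)}$ is what licenses the claim that $Temp$ is untouched between the iterations for $\tau_{j-1}$ and $\tau_j$. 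The only cosmetic gap is the degenerate base case where $\tau_1 = \mathcal{T}_{(uv)}[1]$, which is inserted at Line 7 before the loop rather than inside an iteration; this changes nothing. Your argument is a sound and complete replacement for the paper's one-line assertion, and it also makes explicit the hypothesis (inclusion of the current timestamp in the reset) on which Lemmas \ref{Lemma:ini_1a}--\ref{Lemma:ini_4} silently depend.
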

\begin{proof}
	Follows from the description of Algorithm \ref{Algo:Ini}.
\end{proof}

\begin{mylem}\label{Lemma:ini_1a}
In any arbitrary iteration of the `for loop' at Line 8 in Algorithm \ref{Algo:Ini}, each consecutive $\gamma$ occurrences of `$Temp$' will be within $\Delta$ duration.

 %the entries of `Temp' always follow the condition: ``Each consecutive $\gamma$ occurences of any particular $(uv)$ be within $\Delta$ duration". 
\end{mylem}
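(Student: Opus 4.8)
The plan is to read the statement as a loop invariant for the \texttt{for}-loop at Line~8 and to establish it by induction on the iteration counter $i$. First I would make the claim precise: if $Temp=[a_1,a_2,\dots,a_n]$ denotes the current contents of the list at the start of an iteration (an increasing sequence of distinct integer timestamps of $(uv)$), then for every index $j$ with $j+\gamma-1\le n$ we have $a_{j+\gamma-1}-a_j\le\Delta$. Two elementary observations drive the whole argument. \emph{(i)} Because every timestamp is a distinct positive integer and the entries are kept in increasing order, consecutive entries satisfy $a_{k+1}\ge a_k+1$. \emph{(ii)} Whenever the algorithm empties $Temp$ and reloads it with ``the time stamps of the links occured in previous $\Delta$ duration'' (Lines~14 and~24), every reloaded entry lies in $[\mathcal{T}_{(uv)}[i]-\Delta,\ \mathcal{T}_{(uv)}[i]]$, so any two of them differ by at most $\Delta$; in particular every window of $\gamma$ consecutive entries has span at most $\Delta$, and the invariant is automatically restored.

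The base case is immediate: right after Line~7 the list holds a single timestamp, so either $\gamma\ge2$ and no window of length $\gamma$ exists, or $\gamma=1$ and every singleton window has span $0\le\Delta$. For the inductive step I would assume the invariant at the start of an iteration and verify that each of the four branches preserves it. In the branch $len(Temp)<\gamma$, appending the current timestamp can produce a new $\gamma$-window only when the length grows from $\gamma-1$ to $\gamma$, and then the single such window is all of $Temp$, whose span is $\mathcal{T}_{(uv)}[i]-a_1\le\Delta$ by the guard at Line~10; the complementary sub-branch reloads $Temp$ and is covered by observation~\emph{(ii)}. In the branch $len(Temp)\ge\gamma$, the sub-branch that emits a clique also reloads $Temp$ (Lines~22 and~24) and is again covered by~\emph{(ii)}, so the only remaining work is the append performed when the guard $Temp[\,len(Temp)-\gamma+1\,]+1+\Delta\ge\mathcal{T}_{(uv)}[i]$ holds.

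This last append is the delicate point and the step I expect to be the main obstacle, because the guard is stated in terms of the $\gamma$-th-from-last entry of the \emph{old} list, whereas the only window that the append newly creates is anchored at the $\gamma$-th-from-last entry of the \emph{new} list. Writing $n=len(Temp)$ before the append, the freshly created window is $a_{n-\gamma+2},\dots,a_n,\mathcal{T}_{(uv)}[i]$, of span $\mathcal{T}_{(uv)}[i]-a_{n-\gamma+2}$; every other $\gamma$-window already existed and is within $\Delta$ by the induction hypothesis. The guard gives $\mathcal{T}_{(uv)}[i]\le a_{n-\gamma+1}+1+\Delta$, while observation~\emph{(i)} gives $a_{n-\gamma+2}\ge a_{n-\gamma+1}+1$; subtracting the second from the first yields $\mathcal{T}_{(uv)}[i]-a_{n-\gamma+2}\le\Delta$, so the new window also respects the bound and the invariant is preserved. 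This closes the induction. I would stress that the extra ``$+1$'' in the guard is precisely what absorbs the unit index shift through observation~\emph{(i)}, which is why the integrality of the timestamps is essential to the proof rather than a mere convenience.
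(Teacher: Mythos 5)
Your proof is correct and rests on the same key fact as the paper's: the guard at Line~17 refers to the $\gamma$-th-from-last entry of the \emph{old} list, and the unit gap between consecutive integer timestamps ($a_{n-\gamma+2}\ge a_{n-\gamma+1}+1$) absorbs the index shift, forcing the span of the one newly created $\gamma$-window to be at most $\Delta$. The paper packages this as a proof by contradiction on a hypothetical violating window $t^1,\dots,t^\gamma$ rather than as a forward loop invariant, but the argument is essentially the same; your version has the minor advantage of treating the reload branches (Lines~14 and~24) and the base case explicitly.
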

\begin{proof}
To prove this statement, we use the method of \textit{contradiction}. Initially, $Temp$ contains the first occurrence of a link. Now, when the length of $Temp$ is less than $\gamma$ (Line 9), next occurrence times are added in $Temp$ (Line 11) if the difference from initial to current occurrence time lies within $\Delta$ (Line 10), else the times at which the links have occurred in previous $\Delta$ duration from the current time are added (Line 13, 14). This clears that all the entries in $Temp$ are within $\Delta$ duration when the length of $Temp$ is less than $\gamma$. 

When the length of $Temp$ is greater than or equal to $\gamma$, without the loss of generality, let us take any arbitrary $\gamma$ occurrences of $Temp$ as $ t^1, t^2, \dots t^{(\gamma-1)}, t^{\gamma}$, which is not within $\Delta$ duration, i.e., $t^{\gamma} -t^1 > \Delta$. Let us also assume that from $t^{(\gamma-1)}$, all the previous occurrences in $Temp$ follow the statement of this lemma. %Without loss of generality, let us also assume that its first $\gamma-1$ occurences are within $\Delta$ duration, i.e., $t^{(\gamma-1)} - t^1 \leq \Delta$. 
Now, from our assumptions, we have the following conditions:
\begin{equation} \label{Eq:1}
t^{0}+\Delta \geq t^{\gamma-1} \implies t^{1}+\Delta>t^{(\gamma-1)}
\end{equation}

\begin{equation} \label{Eq:2}
t^{1}+\Delta<t^{\gamma}
\end{equation}

\begin{equation} \label{Eq:3}
t^{1} \geq t^{0}+1
\end{equation}

Now, let us assume the previous occurrence of the link from $t^1$ in $Temp$ is $t^0$ and our goal is to infer the possible positions of $t^{0}$ in the time horizon. From the definition of $(\Delta, \gamma)$\mbox{-}clique, there will be $\gamma$ occurrences from $t^{1}-\Delta$ to $t^{1}$. If first $(\gamma-1)$ links have occurred in consecutive times then $t^{0}=t^{1}-\Delta + \gamma -2$. This is the minimum value for $t^{0}$. From Equation \ref{Eq:3}, the maximum value for $t^{0}$ is $t^{1}-1$. Hence, $t^{0} +1 \leq t^{1} \leq t^{0}+ \Delta +2 -\gamma$. Now, from Equation \ref{Eq:2}, we have $t^{0} + \Delta + 1 < t^{\gamma}$, when $t^{1} = t^{0} +1$ and replacing $t^1$ with $ t^{0}+ \Delta +2 -\gamma$ in Equation \ref{Eq:2}, we get $t^{0} + \Delta + 1 + (\Delta+1 -\gamma) < t^{\gamma} \implies t^{0} + \Delta + 1 < t^{\gamma} $ as $\Delta+1 \geq \gamma$. This violates the condition imposed in Line 17. Hence, $t^{\gamma}$ can not be added in $Temp$. So, we reach the contradiction and this completes the proof. 
%Now, during processing $t^{\gamma}$ the condition imposed in Line 17

\end{proof}

\begin{figure}
	\centering
	\includegraphics[scale=0.8]{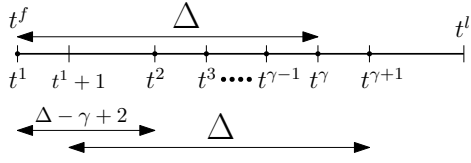}
	\caption{Demonstration Diagram for Lemma \ref{Lemma:ini_1b}} \label{fig:1}
\end{figure}

\begin{mylem}\label{Lemma:ini_1b}
Let, $t^f$ and $t^l$ be the first and last occurrence time of a link in $Temp$. In the interval $[t^f, t^l]$, $Temp$ contains at least $\gamma$ links in each $\Delta$ duration.
\end{mylem}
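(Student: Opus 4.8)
The plan is to argue directly about the (sorted) timestamps held in $Temp$ at the stage under consideration; writing them as $s^1 < s^2 < \dots < s^m$ so that $t^f = s^1$, $t^l = s^m$, and noting $m \ge \gamma$ at the point a clique is recorded, I want to show that every length-$\Delta$ window $[\tau, \tau + \Delta] \subseteq [t^f, t^l]$ — equivalently every $\tau \in [t^f, \max(t^l - \Delta, t^f)]$ — contains at least $\gamma$ of the $s^j$, which is exactly the window condition of the $(\Delta,\gamma)$-clique in Definition \ref{Def:DG}.

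The degenerate case $t^l - t^f \le \Delta$ is immediate: the only admissible window is $[t^f, t^l]$ itself, which already contains all $m \ge \gamma$ timestamps. So I would then assume $t^l - t^f > \Delta$ and fix $\tau \in [t^f, t^l - \Delta]$. Let $p$ be the smallest index with $s^p \ge \tau$. The idea is to exhibit the $\gamma$ consecutive timestamps $s^p, s^{p+1}, \dots, s^{p+\gamma-1}$ and show they all lie in $[\tau, \tau+\Delta]$; since each is $\ge s^p \ge \tau$, it suffices to prove $s^{p+\gamma-1} \le \tau + \Delta$. If $p = 1$ then $s^1 = t^f \le \tau \le s^1$ forces $s^1 = \tau$, and Lemma \ref{Lemma:ini_1a} (consecutive $\gamma$ occurrences lie within $\Delta$) gives $s^\gamma \le s^1 + \Delta = \tau + \Delta$. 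If $p \ge 2$ then $s^{p-1} < \tau$, i.e. $s^{p-1} \le \tau - 1$, and here I would invoke the finer invariant enforced by the append test of Line 17: whenever a timestamp is appended while $len(Temp) \ge \gamma$ it stays within $\Delta + 1$ of the timestamp $\gamma$ positions earlier, so $s^{p+\gamma-1} \le s^{(p+\gamma-1)-\gamma} + \Delta + 1 = s^{p-1} + \Delta + 1 \le \tau + \Delta$, as needed.

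Two points require care, and the second is the real obstacle. First, one must check that $s^{p+\gamma-1}$ actually exists in $Temp$; this follows because $s^m = t^l \ge \tau + \Delta$, whereas if $s^{p+\gamma-1}$ were absent the analogous Line-17 estimate applied to $s^m$ would force $s^m < \tau + \Delta$, a contradiction. Second, and more fundamentally, Lemma \ref{Lemma:ini_1a} alone is \emph{not} sufficient: a sequence obeying only ``every $\gamma$ consecutive timestamps lie within $\Delta$'' can still leave a length-$\Delta$ window straddling two timestamps with fewer than $\gamma$ inside (for instance evenly spaced timestamps a distance $\Delta/(\gamma-1)$ apart, such as $0,3,6,9$ with $\gamma=2,\Delta=3$, where the window $[1,4]$ holds only one). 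The crux is therefore to establish and use the strictly stronger structural property created by the flush-and-restart behaviour of Algorithm \ref{Algo:Ini} — that consecutive blocks of $\gamma + 1$ timestamps span at most $\Delta + 1$ — and to confirm that this invariant survives the reinitialisation of $Temp$ after each clique is emitted, where $Temp$ is reloaded with the links of the previous $\Delta$ duration and hence already sits inside a single $\Delta$-window. Handling that reinitialisation cleanly, together with the window positions abutting $t^f$ and $t^l$, is where the bookkeeping is most delicate.
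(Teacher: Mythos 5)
Your proposal is correct, and it rests on the same two ingredients as the paper's proof: Lemma \ref{Lemma:ini_1a} for the entries up to index $\gamma$, and the append test of Line 17 of Algorithm \ref{Algo:Ini}, which forces every later entry to lie within $\Delta+1$ of the entry $\gamma$ positions earlier. The difference is in how far each argument is carried. The paper examines only the first $\gamma+1$ entries of $Temp$, works out the ``extreme case'' $t^2 = t^1 + \Delta - \gamma + 2$, shows that then $t^{\gamma+1} = t^{\gamma}+1$, and asserts that the remaining window positions follow by the same sliding reasoning; you instead fix an arbitrary window start $\tau$, let $p$ be the first index with $s^p \geq \tau$, and close the bound $s^{p+\gamma-1} \leq s^{p-1} + \Delta + 1 \leq \tau + \Delta$ uniformly in $p$, together with an existence check for $s^{p+\gamma-1}$. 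Your version therefore covers exactly the cases the paper leaves to intuition, and your observation that Lemma \ref{Lemma:ini_1a} alone is \emph{not} sufficient (the configuration $0,3,6,9$ with $\gamma=2$, $\Delta=3$ satisfies it yet leaves the window $[1,4]$ with a single link) is a genuinely useful clarification: it pinpoints that the Line-17 invariant, not the consecutive-$\gamma$ property, is what makes the lemma true. The one place where your write-up is still a sketch --- that the $s^j \leq s^{j-\gamma}+\Delta+1$ invariant survives the bulk reload at Lines 14 and 24 --- is handled correctly by your remark that all reloaded timestamps sit inside a single $\Delta$-window, so $s^j \leq s^1 + \Delta \leq s^{j-\gamma}+\Delta+1$ there as well; spelling that out would make your argument fully self-contained and, in my view, a strict improvement on the published proof.
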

\begin{proof}
When the length of $Temp$ is less than $\gamma$, Line 9 to 15 in Algorithm \ref{Algo:Ini} allows to hold the statement of the lemma by adding consecutive $\gamma$ occurrences in $\Delta$ duration. So, it is trivial that we need to prove the statement when length of $Temp$ is greater than $\gamma$.
Let us assume that the occurrence times of first $\gamma+1$ entries of $Temp$ are $t^1, t^2, \dots , t^{\gamma}, t^{(\gamma+1)}$, where $t^1=t^f$ and $t^{(\gamma+1)} \leq t^l$. 

Now, by Lemma \ref{Lemma:ini_1a}, $t^{\gamma} - t^1 \leq \Delta$ and $t^{(\gamma+1)} - t^2 \leq \Delta$. Without loss of generality, we want to show that there exist at least $\gamma$ links from $t^1+1$ to $t^1 +1 + \Delta$. As $t^{\gamma} - t^1 \leq \Delta$, the maximum difference between $t^1$ and $t^2$ can be $(\Delta - \gamma + 2)$ and this case will arise when all the $\gamma-1$ links appear in each consecutive timestamp from $t^1+\Delta$ towards $t^1$ (shown in Figure \ref{fig:1}). Now, as $t^{(\gamma+1)} - t^2 \leq \Delta$, we have to show $t^{(\gamma+1)}=t^{\gamma}+1$. This extreme case will intuitively prove the rest of the cases. So, we can infer the following conclusion from Lemma \ref{Lemma:ini_1a} and the assumption $t^2=t^1+\Delta-\gamma+2$. Now,
\begin{equation*}
\begin{split}
t^{(\gamma+1)} - t^2 & \leq \Delta \\
t^{(\gamma+1)} - t^1-\Delta+\gamma-2 & \leq \Delta \\
t^{(\gamma+1)} & \leq  t^1 + \Delta + 1 + \{ (\Delta + 1) - \gamma\} \\
\end{split}
\end{equation*}
Again, from the condition imposed at Line 17 in Algorithm \ref{Algo:Ini}, we also have $t^{(\gamma+1)} \leq  t^1 + \Delta + 1$. Now, as per our assumption of extreme case $t^{\gamma} = t^1 + \Delta$. So, $t^{(\gamma+1)} \leq t^{\gamma} + 1 \implies t^{(\gamma+1)}= t^{\gamma} + 1$. 

Now, as $t^{(\gamma+1)} \leq  t^1 + \Delta + 1$, we can argue $t^{(\gamma+1)} <  t + \Delta $, for all $ t \in (t^1 + 1 , t^2]$. Moreover, from Lemma \ref{Lemma:ini_1a} there is $\gamma$ links within $[t^2, t^{(\gamma+1)} ]$, which concludes the existence of at least $\gamma$ links from $t$ to $t + \Delta$. Now, for any $t^i \in [t^f, t^l-\Delta]$, there will be atleast $\gamma$ links in $Temp$ from $t^i$ to $t^i+\Delta$. This completes the proof of the claimed statement.

\end{proof}

\begin{mylem}\label{Lemma:ini_2}
	In Algorithm \ref{Algo:Ini}, the contents of $\mathcal{C}_{T}^{I}$ are $(\Delta, \gamma)$\mbox{-}Cliques of size $2$.
\end{mylem}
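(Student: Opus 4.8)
The plan is to show that every tuple $(\{u,v\},[t_a,t_b])$ inserted into $\mathcal{C}_T^I$ — which happens only at Line 22 and Line 29 — satisfies both requirements of Definition \ref{Def:DG} for a $(\Delta,\gamma)$\mbox{-}clique of size $2$. The cardinality part and the frequency condition are immediate: the vertex set is literally $\{u,v\}$, so $\vert \mathcal{X} \vert = 2$, and a link is processed at all only after the guard $f_{(uv)}\geq\gamma$ at Line 4. Moreover, every entry of $Temp$ is an occurrence time of $(u,v)$ by construction (cf. Lemma \ref{Lemma:ini_1}). All the real work therefore goes into the temporal condition: for every $\tau\in[t_a,\max(t_b-\Delta,t_a)]$ the window $[\tau,\min(\tau+\Delta,t_b)]$ must contain at least $\gamma$ occurrences of $(u,v)$.

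First I would record what $t_a$ and $t_b$ are in terms of the final list $Temp$. Writing $m=\mathrm{len}(Temp)$, the algorithm sets $t_a=Temp[\gamma]-\Delta$ and $t_b=Temp[m-\gamma+1]+\Delta$. Combining these with Lemma \ref{Lemma:ini_1a} (each consecutive $\gamma$ entries of $Temp$ span at most $\Delta$) yields the anchoring identities $t_a+\Delta=Temp[\gamma]$ and $t_b-\Delta=Temp[m-\gamma+1]$, together with $t_a\leq Temp[1]=t^f$ and $Temp[m]=t^l\leq t_b$. These facts are precisely what let me pin the sliding window against the first and last $\gamma$ occurrences at the two ends.

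The core of the argument is to partition the range $[t_a,t_b-\Delta]$ into the left boundary $[t_a,t^f]$, the interior $[t^f,t^l-\Delta]$, and the right boundary $[t^l-\Delta,t_b-\Delta]$, and to treat them separately. For the interior I would simply invoke Lemma \ref{Lemma:ini_1b}, which already guarantees at least $\gamma$ links in every $\Delta$\mbox{-}window lying inside $[t^f,t^l]$. For the left boundary, for any $\tau\in[t_a,t^f]$ the entries $Temp[1],\dots,Temp[\gamma]$ all lie in $[\tau,\tau+\Delta]$: each is $\geq Temp[1]\geq\tau$, and each is $\leq Temp[\gamma]=t_a+\Delta\leq\tau+\Delta$, so the window captures the first $\gamma$ occurrences. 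Symmetrically, for the right boundary, for any $\tau\in[t^l-\Delta,t_b-\Delta]$ the last $\gamma$ entries $Temp[m-\gamma+1],\dots,Temp[m]$ lie in $[\tau,\tau+\Delta]$, since they are $\geq Temp[m-\gamma+1]\geq\tau$ (using $\tau\leq t_b-\Delta=Temp[m-\gamma+1]$) and $\leq Temp[m]=t^l\leq\tau+\Delta$ (using $\tau\geq t^l-\Delta$). Checking, again via the identities above and Lemma \ref{Lemma:ini_1a}, that these three subintervals actually cover $[t_a,\max(t_b-\Delta,t_a)]$ then closes the temporal condition; the degenerate case $t_b-\Delta<t_a$, where only $\tau=t_a$ survives and the single window $[t_a,t_b]$ must hold $\gamma$ links, is subsumed by the same left\mbox{-}boundary computation.

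The step I expect to be the genuine obstacle is the boundary analysis rather than the interior: Lemma \ref{Lemma:ini_1b} covers only $\tau\in[t^f,t^l-\Delta]$, so the windows that protrude past the first occurrence (near $t_a$) or past the last occurrence (near $t_b$) are exactly the ones not handled for free, and they must be controlled by the precise $\Delta$\mbox{-}offset definitions of $t_a$ and $t_b$ rather than by the lemma. A secondary nuisance is the bookkeeping that both insertion points — Line 22 inside the loop and Line 29 for the final flush — produce a $Temp$ to which Lemmas \ref{Lemma:ini_1a} and \ref{Lemma:ini_1b} apply, so that the same three\mbox{-}region argument is valid in both cases.
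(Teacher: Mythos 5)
Your proposal is correct and follows essentially the same route as the paper: size $2$ and the frequency guard are immediate, Lemma \ref{Lemma:ini_1b} handles the windows interior to $[t^f,t^l]$, and the $\Delta$\mbox{-}offset definitions of $t_a$ and $t_b$ handle the rest. Your explicit three\mbox{-}region decomposition with the left and right boundary computations supplies the detail that the paper compresses into the single closing sentence ``this ensures the existence of at least $\gamma$ occurrences \dots between $t_a$ to $t_b$,'' so it is a more careful write\mbox{-}up of the same argument rather than a different one.
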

\begin{proof}
	We are processing each static edge of the temporal network $\mathcal{G}$ in its time horizon and add the $(\Delta, \gamma)$\mbox{-}clique(s) formed by the end vertices of the edge into $\mathcal{C}_{T}^{I}$. Hence, the cliques in $\mathcal{C}_{T}^{I}$ are of size 2. Now, in Algorithm \ref{Algo:Ini}, the cliques are added into $\mathcal{C}_{T}^{I}$ in Line 22 and 30. In both the cases, cliques are added if the current length of the $Temp$ is greater than or equal to $\gamma$. As per Lemma \ref{Lemma:ini_1b}, $Temp$ at least $\gamma$ links in each $\Delta$ duration. While adding the duration of the clique, $t_a$ is obtained by subtracting $\Delta$ duration from first $\gamma$-th occurrence time and $t_b$ is obtained by adding $\Delta$ duration from last $\gamma$-th occurrence time in $Temp$. This ensures the existence of at least $\gamma$ occurrences of the link in each $\Delta$ duration between $t_a$ to $t_b$.
\end{proof}

\begin{mylem}\label{Lemma:ini_3}
	All the cliques returned by Algorithm \ref{Algo:Ini} and contained in $\mathcal{C}_{T}^{I}$ are duration wise maximal.
\end{mylem}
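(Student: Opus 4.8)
The plan is to prove duration-wise maximality directly against Definition \ref{Def:maximal}: for a clique $(\{u,v\}, [t_a, t_b])$ that Algorithm \ref{Algo:Ini} places in $\mathcal{C}_T^I$, I would show that neither $(\{u,v\}, [t_a - 1, t_b])$ nor $(\{u,v\}, [t_a, t_b + 1])$ is a $(\Delta,\gamma)$-clique. Since Lemma \ref{Lemma:ini_2} already certifies that $(\{u,v\}, [t_a, t_b])$ is itself a $(\Delta,\gamma)$-clique, it is enough to exhibit, for each candidate extension, a single length-$\Delta$ sliding window that holds strictly fewer than $\gamma$ occurrences of $(uv)$, which immediately falsifies the $(\Delta,\gamma)$-clique condition.

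First I would dispatch right-maximality, the cleaner direction. Since $t_b = Temp[len(Temp) - \gamma + 1] + \Delta$, the critical window for $[t_a, t_b + 1]$ is $\tau = t_b + 1 - \Delta = Temp[len(Temp) - \gamma + 1] + 1$, whose right endpoint is $t_b + 1$. A clique is emitted either because the test at Line 17 fails, which means the next occurrence satisfies $\mathcal{T}_{(uv)}[i] > Temp[len(Temp)-\gamma+1] + 1 + \Delta = t_b + 1$, or because the last occurrence of $(uv)$ has been reached (Line 27). In both cases there is no occurrence of $(uv)$ strictly after $t^l$ up to $t_b + 1$, so this window retains only the last $\gamma - 1$ entries of $Temp$, and the extension fails.

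For left-maximality I would use $t_a = Temp[\gamma] - \Delta$, so the critical window for $[t_a - 1, t_b]$ is $\tau = t_a - 1$ with right endpoint $t_a - 1 + \Delta = Temp[\gamma] - 1$ (which is below $t_b$ because $Temp[\gamma] \leq t^l \leq t_b$). This endpoint excludes the $\gamma$-th entry of $Temp$, so among the entries of $Temp$ only the first $\gamma - 1$ fall inside the window. The step I expect to be the main obstacle is ruling out any occurrence of $(uv)$ in $[t_a - 1, t^f - 1]$, i.e. one lying strictly before $t^f = Temp[1]$ yet still inside the window, since such an occurrence would make the left extension succeed and break the lemma. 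Here the re-seeding mechanism must be analysed: whenever $Temp$ is re-filled with the occurrences of the previous $\Delta$ duration (Lines 13--14 and 23--24), any earlier occurrence $p$ must lie strictly before $c - \Delta$, where $c$ is the reset-triggering timestamp. I would then combine this with the trigger inequalities to argue that the number $k$ of occurrences inside $[c - \Delta, c]$ is at most $\gamma$ (a density larger than $\gamma$ is incompatible with the gap that triggered the reset), so that $Temp[\gamma] \geq c$; together these force $p \leq t_a - 2$, leaving the window at $\tau = t_a - 1$ with exactly $\gamma - 1$ occurrences. Lemma \ref{Lemma:ini_1a} is the tool I would lean on to bound the span of the consecutive occurrences in each sub-case.

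Finally I would record the boundary cases: when $t_a - 1 < t$ or $t_b + 1 > t'$ the corresponding clause of Definition \ref{Def:maximal} is vacuous, and when $t^f$ is the very first occurrence of $(uv)$ the left argument is immediate because no earlier occurrence exists. The genuinely delicate part throughout is the interaction between overlapping cliques produced by the re-seeding step, so making the case analysis on $k$ (the occurrence density in the re-seeded $\Delta$ window) fully rigorous is where the proof demands the most care.
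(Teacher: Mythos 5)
Your route is genuinely different from the paper's: the paper argues by contradiction that if a longer interval were a $(\Delta,\gamma)$-clique, Algorithm \ref{Algo:Ini} would have emitted \emph{that} clique instead (via Lemma \ref{Lemma:ini_1} and the expansion at Lines 11/18), whereas you directly exhibit, for each one-step extension, a single length-$\Delta$ window containing fewer than $\gamma$ occurrences. Your right-maximality half is correct and in fact tighter than the paper's version: the emission condition (failure of the Line 17 test, or exhaustion of $\mathcal{T}_{(uv)}$) guarantees no occurrence in $(t^l,\, t_b+1]$, so the window starting at $Temp[len(Temp)-\gamma+1]+1$ holds exactly the last $\gamma-1$ entries of $Temp$, and the extension to $t_b+1$ fails.

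The left-maximality half, however, has a concrete off-by-one gap. Your chain is $p < c-\Delta$ (an occurrence missed by the re-seed) together with $Temp[\gamma]\ge c$ (from the density bound $k\le\gamma$), but these only yield $p \le c-\Delta-1 \le Temp[\gamma]-\Delta-1 = t_a-1$, not $p\le t_a-2$ as you claim. The boundary configuration $p=t_a-1$ places $p$ \emph{inside} your critical window $[t_a-1,\,Temp[\gamma]-1]$, giving it $\gamma$ occurrences and defeating the argument. To close this you need the extra observation that this configuration is incompatible with a reset at $c$: it forces $p=c-\Delta-1$ and exactly $\gamma$ occurrences in $[c-\Delta,c]$, so just before $c$ is processed the list $Temp$ ends with $p,q_1,\dots,q_{\gamma-1}$ (or, if $p$ were already dropped, has length $\gamma-1$ with $q_1\ge c-\Delta$, which contradicts the Line 10 reset trigger); in the former case $Temp[len(Temp)-\gamma+1]=p$ and the Line 17 test reads $p+1+\Delta = c \ge c$, so $c$ is appended rather than triggering a re-seed. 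With that case disposed of, every occurrence below $Temp[1]$ lies strictly below $t_a-1$, the window at $\tau=t_a-1$ holds only $\gamma-1$ occurrences, and your argument goes through. You correctly flag this as the delicate step, but the inequalities as written do not yet deliver it.
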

\begin{proof}
	
 We prove the duration wise maximality of each clique in $\mathcal{C}_{T}^{I}$ by \textit{contradiction}. Let us assume, a clique $(\{u,v\}, [t_a,t_b]) \in \mathcal{C}_{T}^{I}$ is not duration wise maximal. Then, there exists a $t_a^{'}$ with $t_a^{'} < t_a$ such that $(\{u,v\}, [t_a^{'},t_b])$ is a $(\Delta, \gamma)$\mbox{-}clique or a $t_{b}^{'}$ with $t_{b}^{'} > t_b$ such that $(\{u,v\}, [t_a,t_b^{'}])$ is a $(\Delta, \gamma)$\mbox{-}clique.

 Now, if $(\{u,v\}, [t_a^{'},t_b])$ is a $(\Delta, \gamma)$\mbox{-}clique, then its first $\gamma$ occurrences will be in $Temp$ at some stage as per Lemma \ref{Lemma:ini_1}. Later, this $Temp$ is expanded till $t_b$ either by Line 11 or 18 in Algorithm \ref{Algo:Ini}. Hence,  $(\{u,v\}, [t_a^{'},t_b])$ will be added in $\mathcal{C}_{T}^{I}$, instead of $(\{u,v\}, [t_a,t_b])$. So, the assumption that there exists a $t_a^{'}$ with $t_a^{'} < t_a$ is false. 
 
 Now, by Lemma \ref{Lemma:ini_2}, as $(\{u,v\}, [t_a,t_b])$ is a $(\Delta, \gamma)$\mbox{-}clique, in each $\Delta$ duration within $t_a$ to $t_b$ there will be atleast $\gamma$ links between $u$ and $v$. Let us assume, that $l^{\gamma}$ and $l^{(\gamma -1)}$ are the last $\gamma$\mbox{-}th and $(\gamma-1)$\mbox{-}th occurrence time of $(u,v)$ respectively. From the definition of $(\Delta, \gamma)$\mbox{-}clique, $l^{\gamma}+\Delta \geq t_b$, hence, $l^{(\gamma -1)}+\Delta > t_b$.  Now, to be $\{u, v\}$ a $(\Delta, \gamma)$\mbox{-}clique in the interval $[t_a, l^{(\gamma -1)}+\Delta]$, there must be atleast one link between $u$ and $v$ in the interval $[t_b, l^{(\gamma -1)}+\Delta]$. If there exists such links, it indicates the presence of $\gamma$ or more links in the interval $[l^{(\gamma-1)}, l^{(\gamma -1)}+\Delta]$. This case is handled by Algorithm \ref{Algo:Ini} either in Line $11$ or $18$ and $(\{u,v\}, [t_a,t_b])$ will not be added to $\mathcal{C}_{T}^{I}$. So, there can not exist any $t_b^{'}$ which is greater than $t_b$. 
 
 Hence, all the cliques of $\mathcal{C}_{T}^{I}$ returned by Algorithm \ref{Algo:Ini} are duration wise maximal.

%In the similar way, we assume that $s^{\gamma}$ and $s^{(\gamma -1)}$ are the first $\gamma$\mbox{-}th and $(\gamma-1)$\mbox{-}th occurrence time of $(u,v)$ respectively within $[t_a,t_b]$. Now, to be $\{u, v\}$ a $(\Delta, \gamma)$\mbox{-}clique in the interval $[s^{(\gamma -1)}-\Delta, t_b]$, there must be atleast one link between $u$ and $v$ in the interval $[s^{(\gamma -1)}-\Delta, t_a]$. If there exists such links, it indicates the presence of $\gamma$ or more links in the interval $[s^{(\gamma-1)}-\Delta, s^{(\gamma -1)}]$.

\end{proof}

\begin{mylem}\label{Lemma:ini_4}
	All the duration wise maximal $(\Delta, \gamma)$\mbox{-}cliques of size 2 are contained in $\mathcal{C}_{T}^{I}$.
\end{mylem}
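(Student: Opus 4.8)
The plan is to establish completeness by taking an arbitrary duration-wise maximal $(\Delta,\gamma)$\mbox{-}clique of size two, say $(\{u,v\},[t_a,t_b])$, and tracing it through Algorithm \ref{Algo:Ini} to exhibit the iteration of the inner \texttt{for}\mbox{-}loop at which it is inserted into $\mathcal{C}_{T}^{I}$. This is exactly the converse direction to Lemma \ref{Lemma:ini_3}, so the two lemmas together will pin down $\mathcal{C}_{T}^{I}$ as \emph{precisely} the set of size-two duration-wise maximal cliques. I would argue directly, constructing the emitting iteration, rather than by contradiction.

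First I would note that, since $(\{u,v\},[t_a,t_b])$ is a $(\Delta,\gamma)$\mbox{-}clique, the static edge $(uv)$ occurs at least $\gamma$ times, so $f_{(uv)}\geq\gamma$ and the edge clears the guard at Line 4 and is processed. Next, by Definition \ref{Def:DG} applied at $\tau=t_a$, the first $\gamma$ occurrences of $(uv)$ inside $[t_a,t_b]$ lie within a single window of length $\Delta$; these are consecutive $\gamma$ occurrences within $\Delta$ duration, hence by Lemma \ref{Lemma:ini_1} they must appear in $Temp$ at some stage. From that point I would show that $Temp$ is never flushed while the current timestamp still lies inside $[t_a,t_b]$: because the clique is valid on the whole interval, each subsequent contact keeps the test at Line 17 satisfied (the next occurrence is always within $\Delta+1$ of the current $\gamma$\mbox{-}th\mbox{-}from\mbox{-}last entry, using Lemma \ref{Lemma:ini_1a}), so control stays in the appending branch at Line 18 and $Temp$ keeps growing until every contact of the clique has been absorbed.

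I would then locate the exact iteration where the clique is committed: either the next processed timestamp first violates Line 17 — which, given maximality, can only occur once the window can no longer reach $t_b$, i.e. when the clique cannot be extended to the right — so a clique is added at Line 22; or we reach the last element of $\mathcal{T}_{(uv)}$ with $len(Temp)\geq\gamma$ and it is added at Line 30. In either case the recorded endpoints are $t_a=Temp[\gamma]-\Delta$ and $t_b=Temp[len(Temp)-\gamma+1]+\Delta$. Invoking duration-wise maximality I would argue that $Temp[\gamma]$ is exactly the first $\gamma$\mbox{-}th occurrence and $Temp[len(Temp)-\gamma+1]$ the last $\gamma$\mbox{-}th occurrence of the clique, so the emitted interval coincides with $[t_a,t_b]$ — the very endpoint calculation already used in Lemma \ref{Lemma:ini_3}.

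The main obstacle I expect is the bookkeeping around the flush\mbox{-}and\mbox{-}reset at Lines 22--24 (and at Line 14), where $Temp$ is reseeded with the contacts of the previous $\Delta$ duration to permit overlapping cliques. I must rule out that our clique gets \emph{fragmented} into two shorter ones, and, conversely, that its left end is not silently \emph{merged} with earlier contacts so that $Temp[\gamma]$ becomes an occurrence strictly before the clique's first $\gamma$\mbox{-}th contact (which would shift $t_a$ and contradict maximality). Closing this cleanly requires showing that the reseeding never leaves a contact lying before $t_a$ inside the segment finally emitted, and that every gap forcing a reset marks a genuine boundary of a maximal run of contacts; the structural facts in Lemmas \ref{Lemma:ini_1a} and \ref{Lemma:ini_1b} should supply exactly what is needed to rule out both the spurious split and the spurious merge.
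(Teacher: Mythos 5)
Your proposal is correct and follows essentially the same route as the paper's proof: both get the first $\gamma$ occurrences into $Temp$ via Lemma \ref{Lemma:ini_1}, argue that the Line 17 test keeps $Temp$ growing across the clique's interval, identify Line 22 or Line 30 as the emission point with endpoints $Temp[\gamma]-\Delta$ and $Temp[len(Temp)-\gamma+1]+\Delta$, and lean on the reseeding at Lines 14 and 24 to handle overlapping cliques. The paper merely organizes the same argument as a three-way case analysis on whether the clique sits at the beginning, end, or middle of the occurrence stream, and it is no more rigorous than you are about the flush-and-reset bookkeeping you flag as the delicate step.
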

\begin{proof}
	In Lemma \ref{Lemma:ini_2} and \ref{Lemma:ini_3}, we have already shown that each $(\Delta, \gamma)$\mbox{-}clique of $\mathcal{C}_{T}^{I}$ is of size $2$, and duration wise maximal, respectively. Hence, in this lemma, we have to prove that none of such cliques are missed out in the final $\mathcal{C}_{T}^{I}$. As each edge is processed independently by Algorithm \ref{Algo:Ini}, it is sufficient to prove that all the duration wise maximal $(\Delta, \gamma)$\mbox{-}cliques for a particular vertex pair (corresponding to an edge) are contained in $\mathcal{C}_{T}^{I}$.
	
	Let, $(\{u,v\}, [t_a,t_b])$ is a duration wise maximal $(\Delta, \gamma)$\mbox{-}clique and not present in $\mathcal{C}_{T}^{I}$. Now, as $(\{u,v\}, [t_a,t_b])$ is a $(\Delta, \gamma)$\mbox{-}clique, so there exist at least $\gamma$ links in each $\Delta$ duration from $t_a$ to $t_b$, and let $f^{\gamma}$ and $l^{\gamma}$ are the first ${\gamma}$-th and last ${\gamma}$-th occurrence time of the link $(uv)$, between $t_a$ to $t_b$. We denote the occurrence timestamps for the static edge $(u,v)$ as $t^1, t^2, \dots, t^{f_{(uv)}}$, and $f_{(uv)} \geq \gamma$. Now, there can be one of the following cases for the values of $t_a$ and $t_b$.
	\begin{enumerate}[i.]
		\item $t_a = t^{1+\gamma -1} - \Delta$ and $t_b \leq t^{f_{(uv)} - \gamma +1} + \Delta $ : The clique is formed at the beginning of the occurrence stream of $(u,v)$. According to Lemma 1, all the occurrence time will be in $Temp$. Now, if $t_b = t^{f_{(uv)} - \gamma +1} + \Delta$, it will be added in $\mathcal{C}_{T}^{I}$ by Line 30 of Algorithm \ref{Algo:Ini}. Otherwise, $\exists t^k : t^k > l^{\gamma} + 1+ \Delta$ and $t^{k-1} \leq t_b$. Hence, it breaks the if condition at Line 17, and the clique will be added in $\mathcal{C}_{T}^{I}$ by Line 22.
		\item $t_a \geq t^{1+\gamma -1} - \Delta$ and $t_b = t^{f_{(uv)} - \gamma +1} + \Delta $ : The clique is formed at the end of the occurrence stream of $(u,v)$. If $t_a = t^{1+\gamma -1} - \Delta$, it follows from the above case. For the else part, we need to show that $t_a = f^{\gamma}+\Delta > t^{1+\gamma -1} - \Delta $ is handled by the Algorithm \ref{Algo:Ini}. Here, $\exists t^k : t^k < f^{\gamma} -1 - \Delta$ and $t^{k-1} \geq t_a$. Along with Lemma 1 and 2, the Line 14 and 24 are responsible to have all the timestamps within $[t_a, t_b]$ must be $Temp$. So, the clique will be added in $\mathcal{C}_{T}^{I}$ by Line 30.
		\item $t_a > t^{1+\gamma -1} - \Delta$ and $t_b < t^{f_{(uv)} - \gamma +1} + \Delta $ : The clique is formed in the middle of the occurrence stream of $(u,v)$. Both the scenarios of $t_a$ and $t_b$ values are shown in the above two cases, so the clique will be added in $\mathcal{C}_{T}^{I}$ by Line 22.
	\end{enumerate}
%	It is sufficient to show the contradiction that the links occurred within $[t_a, t_b]$ for any of the values of $t_a$ and $t_b$, are within $Temp$
	
	%Now, let $f^{\gamma}$ and $l^{\gamma}$ are the first ${\gamma}$-th and last ${\gamma}$-th occurrence time of the link $(uv)$, between $t_a$ to $t_b$. So, $f^{\gamma} - t_a = \Delta$ and $ t_b - l^{\gamma} = \Delta$, due to duration wise maximality. Among all the occurred links between $u$ and $v$ in $[t_a, t_b]$, other than the appeared ones at $f^{\gamma}$ and $l^{\gamma}$, there may exist other occurrence patterns of $(uv)$, which may lead to $(\{u,v\}, [t_a,t_b])$ is a duration wise maximal $(\Delta, \gamma)$\mbox{-}clique. 
\end{proof}

\begin{mylem}\label{Lemma:ini_5}
	Running time of finding all the duration wise maximal $(\Delta, \gamma)$\mbox{-}cliques of size $2$ in Algorithm \ref{Algo:Ini} is of $\mathcal{O}(\gamma m)$. 
\end{mylem}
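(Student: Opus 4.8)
The plan is to charge the entire cost of Algorithm~\ref{Algo:Ini} to the temporal links of $\mathcal{G}$ and to show that, on average, each link contributes only $\mathcal{O}(\gamma)$ work. First I would observe that the outer \texttt{for}-loop (Line~3) visits each static edge once, and that the test $f_{(uv)} \geq \gamma$ on Line~4 discards an edge in $\mathcal{O}(1)$ time; since there are at most $m = |E(\mathcal{G})|$ static edges, these discards are dominated by the target bound. For a surviving edge $(uv)$ the inner \texttt{for}-loop (Line~8) runs exactly $f_{(uv)}-1$ times, one iteration per occurrence timestamp of $\mathcal{T}_{(uv)}$. Because the occurrence streams of distinct static edges are disjoint and together exhaust all the temporal links, the number of inner iterations summed over all edges is $\sum_{(uv)} \bigl(f_{(uv)}-1\bigr) \leq m$. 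Thus the whole analysis reduces to bounding the cost of a single inner iteration by $\mathcal{O}(\gamma)$.

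Next I would classify the operations performed inside one inner iteration. The length comparison (Line~9), the two gap tests (Lines~10 and~17), and each single \texttt{append} (Lines~11 and~18) are all $\mathcal{O}(1)$. When a clique is emitted (Lines~22 and~30) the endpoints are read as the $\gamma$-th entry from the front ($t_a$) and from the back ($t_b$) of $Temp$; storing $Temp$ as an indexable array makes these accesses $\mathcal{O}(1)$, and \texttt{addClique} records a constant-size tuple. The only step that can touch more than a constant number of entries is the reset-and-refill on Lines~14 and~24, where $Temp$ is rebuilt from the timestamps lying in the trailing window of length $\Delta$ ending at the current timestamp. Everything therefore hinges on bounding the cumulative cost of these refills.

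The main obstacle is precisely this refill step, and pinning down the $\gamma$ factor here is the delicate point: if one literally re-copied the full trailing $\Delta$-window, a dense edge could contribute $\Theta(\Delta)$ entries per reset and the bound would degrade to $\mathcal{O}(\Delta m)$. To obtain $\mathcal{O}(\gamma)$ I would argue two things. First, a reset is triggered only by the arrival of a fresh timestamp, so the failing branches on Lines~10 and~17 are reached at most once per loop index $i$; hence at most one reset is charged to each of the $m$ links. Second, to preserve the $(\Delta,\gamma)$-property and to allow the overlapping clique, it suffices to reinstate only the last $\mathcal{O}(\gamma)$ occurrences within the trailing window rather than the entire window: by Lemma~\ref{Lemma:ini_1a} every $\gamma$ consecutive occurrences of $Temp$ already span at most $\Delta$, so the portion that a subsequent clique can actually reuse has size $\mathcal{O}(\gamma)$, in line with the seeding used in the proofs of Lemmas~\ref{Lemma:ini_1b} and~\ref{Lemma:ini_4}. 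Combining, each inner iteration does $\mathcal{O}(\gamma)$ work, which sums to $\mathcal{O}\!\left(\gamma \sum_{(uv)} f_{(uv)}\right) = \mathcal{O}(\gamma m)$. Finally I would note that constructing the dictionary $\mathcal{D}_e$ and the lists $\mathcal{T}_{(uv)}$ (Lines~1 and~5) costs $\mathcal{O}(m)$ overall, which is dominated by $\mathcal{O}(\gamma m)$, yielding the claimed running time of the stretching phase.
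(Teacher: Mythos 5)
Your proof is correct and follows essentially the same route as the paper's: both identify the refill of $Temp$ at Lines 14 and 24 as the only super-constant step per iteration, bound it by $\mathcal{O}(\gamma)$, and multiply by the $\mathcal{O}(m)$ total inner iterations, with the $\mathcal{O}(m)$ dictionary construction absorbed. The one place you go beyond the paper is in justifying \emph{why} the trailing-$\Delta$-window refill touches only $\mathcal{O}(\gamma)$ entries rather than $\Theta(\Delta)$ (the paper simply asserts ``at most $\gamma-2$ previous entries''), which is a worthwhile clarification but not a different argument.
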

\begin{proof}
Preparing the dictionary $\mathcal{D}_{e}$ at Line 1 in Algorithm \ref{Algo:Ini} will take $\mathcal{O}( \sum_{(u,v,t) \in E(\mathcal{G})} f_{(uv)})$. Assuming the frequency of each static edge is atleast $\gamma$, we evaluate the running time for processing a static edge. It will be identical for rest of the edges. During the processing, all the operations from Line 8 to 32 take $\mathcal{O}(1)$ times except, the appending at Line 14 and 24. Now, the appending of previous occurrences within past $\Delta$ duration can leads to copying of at most $\gamma -2 $ previous entries in $Temp$, which takes $\mathcal{O}(\gamma)$ times. Now, the worst case may occur when in every iteration of the for loop at Line 8, $\gamma-2$ previous occurrences are copied in $Temp$ (at Line 24) and this case may occur at most $f_{(uv)} -\gamma + 1$ times. In this case, the running time of the for loop from Line 8 to 32 is $(\gamma-2)(f_{(uv)} -\gamma + 1) \approx \mathcal{O}(\gamma f_{(uv)})$ for a particular static edge. Now, for all the static edges the for loop at Line 3 will run with $\mathcal{O}(\sum_{(u,v,t) \in E(\mathcal{G})} \gamma f_{(uv)} )$ times. Now, the total running time of Algorithm \ref{Algo:Ini} is $\mathcal{O}( \sum_{(u,v,t) \in E(\mathcal{G})} f_{(uv)} +  \gamma \sum_{(u,v,t) \in E(\mathcal{G})} f_{(uv)}) = \mathcal{O}( \gamma \sum_{(u,v,t) \in E(\mathcal{G})} f_{(uv)}) $. Here, summing up all the frequencies of the static edges gives the total number of links of the temporal network, i.e., $m = \sum_{(u,v,t) \in E(\mathcal{G})} f_{(uv)}$. So, the time complexity of the initialization is of  $\mathcal{O}(\gamma m)$.
\end{proof}
We have provided a weak upper bound on running time of the initialization process (Algorithm \ref{Algo:Ini}) in Lemma \ref{Lemma:ini_5}. Now, we focus on space requirement of Algorithm \ref{Algo:Ini}. Storing the Dictionary $\mathcal{D}_{e}$ in Line Number $1$ requires $\mathcal{O}(m)$ space. In the worst case, space requirement by the list $\mathcal{T}_{uv}$ is of $\mathcal{O}(m)$. The size of $Temp$ can go upto the maximum number of times that any static edge has occurred consecutively more than gamma times in each delta duration, and in the worst case it may take $\mathcal{O}(m)$ space. As all the initial cliques are of size $2$, hence space requirement due to $\mathcal{C}_{T}^{I}$ is of $\mathcal{O}(n^{2}.f_{max})$, where $f_{max}$ is the highest frequency of the initial cliques. So, total space requirement by Algorithm \ref{Algo:Ini} is of $\mathcal{O}(m+n^{2}.f_{max})= \mathcal{O}(n^{2}.f_{max})$. Hence, Lemma \ref{Lemma:Space:Algo1} holds.
\begin{mylem} \label{Lemma:Space:Algo1}
The space requirement of Algorithm \ref{Algo:Ini} is of $\mathcal{O}(n^{2}.f_{max})$.
\end{mylem}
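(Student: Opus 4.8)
The plan is to carry out a term-by-term accounting of the peak space occupied by each data structure that Algorithm \ref{Algo:Ini} creates or populates, add the contributions, and then simplify the resulting expression asymptotically. The structures to bill are exactly four: the edge dictionary $\mathcal{D}_{e}$, the per-edge timestamp list $\mathcal{T}_{(uv)}$, the sliding buffer $Temp$, and the output collection $\mathcal{C}_{T}^{I}$.

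First I would bound the input structure. Since $\mathcal{D}_{e}$ stores, for every static edge, its full list of occurrence timestamps, the total number of stored entries equals the sum of the frequencies of all static edges, which is by definition the total number of temporal links $m = \sum_{(u,v,t)\in E(\mathcal{G})} f_{(uv)}$. Hence $\mathcal{D}_{e}$ needs $\mathcal{O}(m)$ space. Next I would handle the two auxiliary lists. The list $\mathcal{T}_{(uv)}$ holds the timestamps of a single static edge, so in the worst case (one edge carrying nearly all links) it is $\mathcal{O}(m)$; crucially it is reused across iterations of the outer \texttt{for}-loop, so it contributes only $\mathcal{O}(m)$ at any single moment rather than being summed over edges. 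The buffer $Temp$ never exceeds the length of the longest run of consecutive qualifying occurrences of one edge, which is again at most $\mathcal{O}(m)$, so these three pieces together are still $\mathcal{O}(m)$.

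Then I would bound the output. By Lemma \ref{Lemma:ini_2} every entry of $\mathcal{C}_{T}^{I}$ is a clique of size $2$, so each stored clique (a vertex pair together with an interval $[t_a,t_b]$) occupies only $\mathcal{O}(1)$ space. The number of such cliques is at most the number of distinct static edges, which is $\mathcal{O}(n^{2})$, times the maximum number of duration-wise maximal size-$2$ cliques generated from a single edge, which I denote $f_{max}$. Therefore $\mathcal{C}_{T}^{I}$ requires $\mathcal{O}(n^{2}\cdot f_{max})$ space. Summing the input/auxiliary term with the output term yields a total of $\mathcal{O}(m + n^{2}\cdot f_{max})$.

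The final and only delicate step is the asymptotic collapse $\mathcal{O}(m + n^{2}\cdot f_{max}) = \mathcal{O}(n^{2}\cdot f_{max})$, and this is where I expect the main obstacle to lie. It amounts to arguing that the number of links $m$ is dominated by $n^{2}\cdot f_{max}$, which in turn forces me to pin down the meaning of $f_{max}$ precisely enough that each link can be charged to at least one of the $\mathcal{O}(n^{2}\cdot f_{max})$ stored initial cliques. I would make this rigorous by observing that every link contributing to some output clique is accounted for by that clique's edge slot, so that the $\mathcal{O}(m)$ term is absorbed into the $\mathcal{O}(n^{2}\cdot f_{max})$ term; once that domination is established the stated bound follows immediately.
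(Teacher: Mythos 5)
Your proof takes essentially the same route as the paper's: the paper likewise bills $\mathcal{D}_{e}$, $\mathcal{T}_{(uv)}$, and $Temp$ at $\mathcal{O}(m)$ each, bills $\mathcal{C}_{T}^{I}$ at $\mathcal{O}(n^{2}\cdot f_{max})$ with $f_{max}$ the highest number of initial cliques per vertex pair, and then collapses $\mathcal{O}(m+n^{2}\cdot f_{max})$ to $\mathcal{O}(n^{2}\cdot f_{max})$. The one difference is that you explicitly flag that final absorption of the $\mathcal{O}(m)$ term as the delicate step, whereas the paper asserts the equality without comment; your caution there is warranted, as that is the least justified step in both versions.
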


\par
Now for the temporal network shown in Figure \ref{Fig:TG}, the initial cliques with $\Delta=3$ and $\gamma=2$, in $\mathcal{C}_{\mathcal{L}}^I$ are $(\{v_1, v_2\}, [1,7])$, $(\{v_1, v_2\}, [7,13])$, $(\{v_1, v_3\}, [2,7])$, $(\{v_1, v_3\}, [8,14])$, $(\{v_2, v_3\}, [2,6])$, $(\{v_2, v_3\}, [7,11])$, $(\{v_2, v_3\}, [5,8])$, $(\{v_2, v_4\}, [4, 12])$, $(\{v_3, v_4\}, [1,9])$, $(\{v_3, v_5\}, [5, 10])$, $(\{v_4, v_5\}, [4,8])$.

\subsection{Shrink and Bulk Phase (Enumeration)}
Algorithm \ref{Algo:Enum} describes the enumeration strategy of our proposed methodology. For the given temporal network $\mathcal{G}$, we construct a static graph $G$ where $V(G)$ is the vertex set of $\mathcal{G}$ and each link of $\mathcal{G}$ induces the corresponding edge in $E(G)$ without the time component, which we call as a static edge. Next, the dictionary $\mathcal{D}$ is built from the initial clique set $\mathcal{C}^I_T$ of Algorithm \ref{Algo:Ini}, where the vertex set of the clique is the key and corresponding occurrence time intervals are the values. This data structure is also updated in the intermediate steps of algorithm \ref{Algo:Enum}. Now, two sets $\mathcal{C}^{\mathcal{T}_1}$ and $\mathcal{C}^{\mathcal{T}_2}$ are maintained during the enumeration process. At any $i$\mbox{-}th iteration of the while loop at Line 5, $\mathcal{C}^{\mathcal{T}_1}$ maintains the current set of cliques which are yet to be processed for vertex addition and $\mathcal{C}^{\mathcal{T}_2}$ stores the new cliques formed in that $i$\mbox{-}th iteration. At the beginning, all the initial cliques from $\mathcal{C}^I_T$ are copied into $\mathcal{C}^{\mathcal{T}_1}$. A clique $(\mathcal{X},[t_a,t_b])$ is taken out from $\mathcal{C}^{\mathcal{T}_1}$ which is duration wise maximal and the IS\_MAX flag is set to true for indicating the current clique as maximal $(\Delta, \gamma)$\mbox{-}clique. For vertex addition, it is trivial to convince that only for the neighboring vertices of $\mathcal{X}$ $(v \in \mathcal{N}_{G}(\mathcal{X}))$, there is a possibility of $(\mathcal{X} \cup \{v\},[t_a^{'},t_b^{'}])$ to be a $(\Delta, \gamma)$\mbox{-}clique. If the new vertex set $\mathcal{X} \cup \{v\}$ is found in $\mathcal{D}$ with one of its value as $[t_a,t_b]$, the IS\_MAX flag is set to false, signifying that the processing clique $(\mathcal{X},[t_a,t_b])$ is not maximal. Otherwise, if $\mathcal{X} \cup \{v\}$ is not present in $\mathcal{D}$, all the possible time intervals in which $\mathcal{X} \cup \{v\}$ can form a $(\Delta, \gamma)$\mbox{-}clique are computed from Line 16 to 37. This process is iterated for all the neighboring vertices of $\mathcal{X}$ (Line $10$ to $38$). Now, we describe the statements from Line 17 to 36 in detail. As mentioned earlier, to form a $(\Delta, \gamma)$\mbox{-}clique with the new vertex set $\mathcal{X} \cup \{v\}$ all the possible combinations from $\mathcal{X} \cup \{v\}$ of size $\vert \mathcal{X} \vert$, (represented as C$(\mathcal{X} \cup \{v\}, \mathcal{X})$), has to be a $(\Delta, \gamma)$\mbox{-}clique. Now, for all $z \in $ C$(\mathcal{X} \cup \{v\}, \mathcal{X})$), if $z$ is present in $\mathcal{D}.keys()$, it signifies the possibility of forming a new clique with the vertex set $\mathcal{X} \cup \{v\}$ (Line 17). Now, all the entries of these combinations are taken into a temporary data structure $\mathcal{D}_{Temp}$ from $\mathcal{D}$. For the clarity of presentation, we describe the operations from Line 19 to 35 for one vertex addition, i.e., $\mathcal{X} \cup \{v\}$ with the help of an example shown in Figure \ref{fig:enum_describe}. Now, let the entries of $\mathcal{D}_{Temp}$ are $z_1, z_2, \dots z_n$, i.e., all $z_i \in$ C$(\mathcal{X} \cup \{v\}, \mathcal{X})$ and the length corresponding entries in $\mathcal{D}_{Temp}$ are $l_1, l_2, \dots l_n$ respectively. So, one sample from $z_1 \otimes z_2 \otimes \dots \otimes z_n$ is taken as $timeSet$ in Line 19 of Algorithm \ref{Algo:Enum}. One possible value of $timeSet$ is $[t_{11}, t_{21}, \dots, t_{n1}]$. For this value, the resultant interval $[t_{a}^{'}, t_{b}^{'}]$ is computed as $t_{11} \cap t_{21} \dots \cap t_{n11}=[max(t_{z_1}^{a^1}, t_{z_2}^{a^1}, \dots, t_{z_n}^{a^1}), \ min(t_{z_1}^{b^1}, t_{z_2}^{b^1}, \dots, t_{z_n}^{b^1})]$. If the difference between $t_{b}^{'}$ and $t_{a}^{'}$ is more than or equal to $\Delta$, then the newly formed $(\Delta, \gamma)$\mbox{-}clique, $(\mathcal{X} \cup \{v\}, [t_{a}^{'}, t_{b}^{'}])$, is added in $\mathcal{C}^{\mathcal{T}_2}$ and $\mathcal{D}$. Also, if   $[t_{a}^{'}, t_{b}^{'}]$ matches with the current interval of $\mathcal{X}$, then the flag $IS\_MAX$ is set to False, i.e., $(\mathcal{X}, [t_a, t_b])$ is not maximal. Now, this step is repeated for all the samples from $z_1 \otimes z_2 \otimes \dots \otimes z_n$ from Line 19 to 35. This ensures that all the intervals in which $\mathcal{X} \cup \{v\}$ forms $(\Delta, \gamma)$\mbox{-}clique are added in $\mathcal{D}$. Now, if none of the vertices from $\mathcal{N}_{G}(\mathcal{X}) \setminus \mathcal{X}$ is possible to add in $\mathcal{X}$, $(\mathcal{X}, [t_a, t_b])$ becomes maximal $(\Delta, \gamma)$\mbox{-}clique and added into final maximal clique set $\mathcal{C}_{\mathcal{L}}$ at Line 40. Vertex addition checking is performed for all the cliques of $\mathcal{C}^{\mathcal{T}_{1}}$ in the while loop from Line 7 to 42. When $\mathcal{C}^{\mathcal{T}_{1}}$ is exhausted and $\mathcal{C}^{\mathcal{T}_{2}}$ is not empty, the contents of $\mathcal{C}^{\mathcal{T}_{2}}$ are copied back into $\mathcal{C}^{\mathcal{T}_{1}}$ for further processing, signifying that all the maximal cliques have not been found yet. This is controlled using the flag $ALL\_MAXIMAL$ in the While loop at Line 5. If no clique is added into $\mathcal{C}^{\mathcal{T}_{2}}$, the flag $ALL\_MAXIMAL$ is set to true so that in the next iteration the condition of the While loop at Line 5 will be false and finally Algorithm \ref{Algo:Enum} terminates. At the end, for the temporal network $\mathcal{G}$, $\mathcal{C}_{T}$ contains all the maximal $(\Delta, \gamma)$\mbox{-}cliques of it. One illustrative example of the enumeration Algorithm is given in Figure \ref{Fig:Algo2_demo}.

\begin{algorithm}
	\SetAlgoLined
	\caption{Shrinking and bulking phase of the maximal $(\Delta, \gamma)$\mbox{-}Clique Enumeration} \label{Algo:Enum}	
	\KwData{A Temporal Network $\mathcal{G}$, Initial Clique Set $\mathcal{C}^I_T, \ \Delta, \ \gamma$.}
	\KwResult{Maximal $(\Delta, \gamma)$ Clique Set $\mathcal{C}_T$ of $\mathcal{G}$.}
	$\text{Construct the Static Graph } G$\;
	$\text{Prepare the dictionary } \mathcal{D} \text{ from } \mathcal{C^I_L}$ \tcp*{ \text{ with the index as vertex set and time intervals as entries}}
	$\mathcal{C}^{\mathcal{T}_1} \leftarrow \mathcal{C^I_L}$\;
	
	$\text{ALL\_MAXIMAL}=False$\;
	\While{$\lnot $ \text{ALL\_MAXIMAL}}  {
		$\mathcal{C}^{\mathcal{T}_2} \leftarrow \phi$\;
		
		\While{$\mathcal{C}^{\mathcal{T}_1} \neq \phi$}{
			Take and remove a clique $(\mathcal{X},[t_a,t_b])$\;
			$\text{IS\_MAX}=True$\;
			\For{Every $v \in \mathcal{N}_{G}(\mathcal{X}) \setminus \mathcal{X}$}{
				$\mathcal{X}_{new}=\mathcal{X} \cup \{v\}$\;
				\eIf{$\mathcal{X}_{new} \in \mathcal{D}$}{
					\If{$[t_a,t_b] \in \mathcal{D}[\mathcal{X}_{new}]$}{
						$\text{IS\_MAX}=False$\;}
				}
				{
					\If{$\forall z \in \{ C(\mathcal{X}_{new}, \mathcal{X})\} \text{ and } z \in \mathcal{D}$}{
						$\mathcal{D}_{Temp} \leftarrow \text{ Get the entries from } \mathcal{D} \text{ for C}(\mathcal{X}_{new}, \mathcal{X})$\;
						
						\ForEach{ permutation of $\mathcal{D}_{Temp}$ entries as $timeSet$}{
							$max\_t_a =[\ ]$\;
							$min\_t_b =[\ ]$\;
							\For{$t \in timeSet$}{
								$max\_t_a.append(t[1])$\;
								$min\_t_b.append(t[2])$\;
							}
							${t^{'}_{a}} = MAX(max\_t_a)$\;
							${t^{'}_{b}} = MIN(min\_t_b)$\;
							\If{$t^{'}_{b} -t^{'}_{a} \geq \Delta$}{
								$\mathcal{C}^{\mathcal{T}_2}.add(\mathcal{X}_{new},[t^{'}_{a},t^{'}_{b}])$\;
								$\mathcal{D}[\mathcal{X}_{new}].append([t^{'}_{a},t^{'}_{b}])$\;
								\If{$t^{'}_{a}= t_a \land t^{'}_{b}=t_b$}{
									$\text{IS\_MAX}=False$\;}
							}
						}
					}	
				}
			}	
			\If{$IS\_MAX$}{
				$\mathcal{C}_T.append(\mathcal{X},[t_a,t_b])$\;}
		}
		\eIf{$len(\mathcal{C}^{\mathcal{T}_2}) > 0$}{
			
			$\mathcal{C}^{\mathcal{T}_1} \leftarrow \mathcal{C}^{\mathcal{T}_2}$\;
			
		}
		{
			$\text{ALL\_MAXIMAL}=True$\;
		}
	}
	
\end{algorithm}

\par Now, from the description of the enumeration process of our proposed methodology, we have the following claims:

\begin{myclaim}
For any arbitrary clique $(\mathcal{X}, [t_a, t_b]) \in \mathcal{C}^{\mathcal{T}_{1}}$ and $v \in \mathcal{N}_G(\mathcal{X}) \setminus \mathcal{X}$, all the time intervals in the whole lifespan of the linked stream $\mathcal{L}$, at which $\mathcal{X} \cup \{v\}$ forms a $(\Delta, \gamma)$\mbox{-}clique are added in $\mathcal{D}$.
\end{myclaim}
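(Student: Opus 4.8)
The plan is to reduce a $(\Delta,\gamma)$-clique test on the $(k{+}1)$-vertex set $\mathcal{X}_{new}=\mathcal{X}\cup\{v\}$ (writing $k=|\mathcal{X}|\geq 2$) to tests on its $k$-vertex subsets, which are exactly the members of $C(\mathcal{X}_{new},\mathcal{X})$ and whose maximal intervals are already stored in $\mathcal{D}$. The first ingredient I would isolate is a \emph{decomposition principle}: $(\mathcal{X}_{new},[t'_a,t'_b])$ is a $(\Delta,\gamma)$-clique if and only if $(z,[t'_a,t'_b])$ is a $(\Delta,\gamma)$-clique for every $z\in C(\mathcal{X}_{new},\mathcal{X})$. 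The forward direction is immediate from Definition \ref{Def:DG}, since every vertex pair of a subset is a pair of $\mathcal{X}_{new}$. For the converse I would use the elementary counting fact that, because $k+1\geq 3$, every pair $\{p,q\}\subseteq\mathcal{X}_{new}$ lies in at least one $k$-subset (drop any third vertex); hence if all subsets satisfy the per-pair requirement of Definition \ref{Def:DG} on $[t'_a,t'_b]$, so does every pair of $\mathcal{X}_{new}$, and therefore $\mathcal{X}_{new}$ itself.

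The second ingredient is \emph{sub-interval monotonicity}: if $(\mathcal{Y},[p,q])$ is a $(\Delta,\gamma)$-clique and $[p',q']\subseteq[p,q]$ with $q'-p'\geq\Delta$, then $(\mathcal{Y},[p',q'])$ is again a $(\Delta,\gamma)$-clique. I would verify this directly from Definition \ref{Def:DG}: for $\tau\in[p',q'-\Delta]$ the window $[\tau,\tau+\Delta]$ lies inside $[p,q]$ and coincides with the window used by the larger clique, so the $\geq\gamma$ guaranteed edges transfer verbatim. Combining the two ingredients gives the operative statement: if $[a_i,b_i]$ is a valid interval of each subset $z_i$ and the intersection $[\hat t_a,\hat t_b]=[\max_i a_i,\min_i b_i]$ has length at least $\Delta$, then $(\mathcal{X}_{new},[\hat t_a,\hat t_b])$ is a $(\Delta,\gamma)$-clique. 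This is precisely what the length test $t'_b-t'_a\geq\Delta$ checks before the new interval is inserted into $\mathcal{D}$ and $\mathcal{C}^{\mathcal{T}_2}$, so it establishes soundness of every insertion.

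For completeness I would argue as follows. Let $[t'_a,t'_b]$ be any duration-wise maximal interval on which $\mathcal{X}_{new}$ is a $(\Delta,\gamma)$-clique. By the decomposition principle each subset $z_i$ is a $(\Delta,\gamma)$-clique on $[t'_a,t'_b]$, hence $[t'_a,t'_b]$ is contained in some duration-wise maximal interval $[a_i,b_i]$ of $z_i$; by Lemmas \ref{Lemma:ini_3} and \ref{Lemma:ini_4} for the base size $k=2$, and by the inductive correctness of $\mathcal{D}$ for size $k$ across the outer while loop, each such $[a_i,b_i]$ already resides in $\mathcal{D}[z_i]$. Since the \texttt{ForEach} loop at Line 19 ranges over the full product $\mathcal{D}[z_1]\times\cdots\times\mathcal{D}[z_n]$, the tuple $([a_1,b_1],\dots,[a_n,b_n])$ is examined, and for it the algorithm computes exactly $[\hat t_a,\hat t_b]=[\max_i a_i,\min_i b_i]\supseteq[t'_a,t'_b]$. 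Because $\hat t_b-\hat t_a\geq t'_b-t'_a\geq\Delta$, the interval passes the length test; by the operative statement it is valid for $\mathcal{X}_{new}$, and since it contains the maximal interval $[t'_a,t'_b]$, maximality forces $[\hat t_a,\hat t_b]=[t'_a,t'_b]$. Thus $[t'_a,t'_b]$ is inserted into $\mathcal{D}[\mathcal{X}_{new}]$, proving that every interval at which $\mathcal{X}\cup\{v\}$ forms a $(\Delta,\gamma)$-clique is captured.

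The main obstacle I anticipate is not any single step but the interplay between the decomposition principle and \emph{maximality}: one must confirm that intersecting the maximal subset-intervals neither over- nor under-shoots the target, i.e.\ that the intersection is simultaneously valid (needing monotonicity plus pair coverage) and no larger than the maximal $\mathcal{X}_{new}$-interval (needing the maximality of $[t'_a,t'_b]$). A secondary, structural obstacle is that the statement is genuinely a single inductive step: its validity hinges on $\mathcal{D}$ already holding \emph{all} maximal intervals of every size-$k$ subset before $\mathcal{X}_{new}$ is processed, so the cleanest presentation couples this claim with an induction on clique size whose base case is supplied by Lemmas \ref{Lemma:ini_3} and \ref{Lemma:ini_4}. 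I would also take care with the degenerate cases of Definition \ref{Def:DG} when an interval has length below $\Delta$, which the length test excludes anyway.
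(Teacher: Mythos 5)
Your argument is correct, and it is worth noting that the paper itself offers no proof of this claim at all: it is asserted to follow ``from the description of the enumeration process,'' and the nearest formal content appears only later, scattered across Lemmas~\ref{Lemma:enum_2} and~\ref{Lemma:enum_2a} (induction over iterations of the outer while loop, intersection of stored intervals). What you do differently is isolate the two facts that make the assertion actually checkable: the decomposition principle (validity of $\mathcal{X}\cup\{v\}$ on an interval is equivalent to validity of all its $|\mathcal{X}|$-subsets, since every vertex pair survives the deletion of some third vertex) and sub-interval monotonicity of Definition~\ref{Def:DG}; together these give soundness of each insertion at Line~28 and, combined with the inductive hypothesis that $\mathcal{D}$ is already complete for size-$|\mathcal{X}|$ sets, completeness via the product enumeration at Line~19. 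This buys a genuinely verifiable statement where the paper has only an appeal to the algorithm's description, and your observation that the claim is really one step of an induction on clique size (base case Lemmas~\ref{Lemma:ini_3} and~\ref{Lemma:ini_4}) is exactly the structure the paper implicitly relies on in Lemma~\ref{Lemma:enum_2a} and Theorem~\ref{Theorem:1}. Two small points you should make explicit if you write this up: first, the claim as literally worded says ``all the time intervals,'' but the algorithm only records duration-wise maximal ones and discards intervals of length below $\Delta$ (which Definition~\ref{Def:DG} technically permits), so the claim should be read as ``all duration-wise maximal intervals of length at least $\Delta$,'' as you tacitly do; second, your completeness argument covers the else-branch at Line~16, and one should add the one-line remark that the if-branch at Line~12 is reached only after the else-branch has already been executed once for $\mathcal{X}_{new}$, at which point all its intervals were inserted, so nothing is lost there.
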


\begin{myclaim}
In any arbitrary iteration $i$ of the While loop at Line 5, the cliques of $\mathcal{C}^{\mathcal{T}_{1}}$ and $\mathcal{C}^{\mathcal{T}_{2}}$ are of size $i+1$ and $i+2$ respectively.
\end{myclaim}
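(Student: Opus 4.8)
The plan is to prove both size statements simultaneously by \emph{induction on the iteration index} $i$ of the outer \texttt{while} loop at Line 5. The single structural fact driving the whole argument is that the only way a clique enters $\mathcal{C}^{\mathcal{T}_2}$ is through the vertex-addition step $\mathcal{C}^{\mathcal{T}_2}.\texttt{add}(\mathcal{X}_{new},[t'_a,t'_b])$, where a clique $(\mathcal{X},[t_a,t_b])$ drawn from $\mathcal{C}^{\mathcal{T}_1}$ is augmented to $\mathcal{X}_{new}=\mathcal{X}\cup\{v\}$ for some $v\in\mathcal{N}_G(\mathcal{X})\setminus\mathcal{X}$; since $v\notin\mathcal{X}$, this increases the vertex-set cardinality by \emph{exactly one}. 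Hence, whenever every clique currently held in $\mathcal{C}^{\mathcal{T}_1}$ has a fixed size $s$, every clique deposited into $\mathcal{C}^{\mathcal{T}_2}$ during that iteration has size $s+1$.

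For the base case ($i=1$) I would use the initialization $\mathcal{C}^{\mathcal{T}_1}\leftarrow\mathcal{C}^I_T$ at Line 3 together with Lemma \ref{Lemma:ini_2}, which guarantees that every member of $\mathcal{C}^I_T$ has size $2=i+1$. Applying the structural fact above then shows that the cliques formed during iteration $1$ and stored in $\mathcal{C}^{\mathcal{T}_2}$ have size $3=i+2$, which settles the base case.

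For the inductive step I would assume that in iteration $i$ the sets $\mathcal{C}^{\mathcal{T}_1}$ and $\mathcal{C}^{\mathcal{T}_2}$ contain only cliques of size $i+1$ and $i+2$, respectively. The crucial bookkeeping observation is the loop control at the bottom of iteration $i$: when $\mathcal{C}^{\mathcal{T}_1}$ has been exhausted and $\mathcal{C}^{\mathcal{T}_2}$ is nonempty, the assignment $\mathcal{C}^{\mathcal{T}_1}\leftarrow\mathcal{C}^{\mathcal{T}_2}$ transfers \emph{all} of $\mathcal{C}^{\mathcal{T}_2}$ into $\mathcal{C}^{\mathcal{T}_1}$, while the reset $\mathcal{C}^{\mathcal{T}_2}\leftarrow\phi$ at Line 6 empties $\mathcal{C}^{\mathcal{T}_2}$ at the start of the next iteration. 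Consequently, at iteration $i+1$ the set $\mathcal{C}^{\mathcal{T}_1}$ equals the former $\mathcal{C}^{\mathcal{T}_2}$, i.e. it contains only cliques of size $i+2=(i+1)+1$; re-applying the structural fact then yields that $\mathcal{C}^{\mathcal{T}_2}$ holds cliques of size $i+3=(i+1)+2$, completing the induction.

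The argument is conceptually routine, so there is no deep obstacle; the one point that genuinely deserves care is establishing \emph{homogeneity}, namely that at the start of each iteration \emph{every} clique in $\mathcal{C}^{\mathcal{T}_1}$ has identical size, and likewise for $\mathcal{C}^{\mathcal{T}_2}$. Without this, the phrase ``the cliques of $\mathcal{C}^{\mathcal{T}_1}$ are of size $i+1$'' is not even well-posed. I would therefore fold the homogeneity statement into the induction hypothesis itself, and justify it precisely from the reset at Line 6 and the wholesale copy-back $\mathcal{C}^{\mathcal{T}_1}\leftarrow\mathcal{C}^{\mathcal{T}_2}$, rather than treating the uniform size as self-evident.
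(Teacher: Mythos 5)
Your induction is correct, and it is essentially the argument the paper relies on: the paper states this claim without any written proof, asserting it directly ``from the description of the enumeration process,'' and your write-up simply makes explicit the two facts that description rests on (each insertion into $\mathcal{C}^{\mathcal{T}_2}$ augments a clique of $\mathcal{C}^{\mathcal{T}_1}$ by exactly one new vertex $v\notin\mathcal{X}$, and the copy-back at the end of each iteration together with the reset of $\mathcal{C}^{\mathcal{T}_2}$ shifts the sizes by one per iteration). Your explicit attention to homogeneity of sizes within each set is a sensible addition that the paper leaves tacit.
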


\begin{figure}
	\centering
	\includegraphics[scale=0.8]{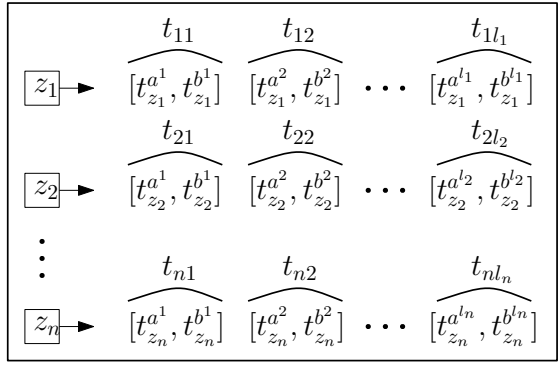}
	\caption{The entries of $\mathcal{D}_{Temp}$ and $z_i \in \mathcal{D}_{Temp}.keys()$}
	\label{fig:enum_describe}
\end{figure}

\begin{mylem}\label{Lemma:enum_1}
	In Algorithm \ref{Algo:Enum}, the elements of $\mathcal{C}_{T}$ are $(\Delta, \gamma)$\mbox{-}cliques.
\end{mylem}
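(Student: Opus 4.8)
The plan is to establish, by induction on the iteration count of the outer \texttt{while} loop (equivalently, by the second Claim above, on the cardinality of the cliques being processed), the stronger invariant that \emph{every} tuple ever inserted into $\mathcal{C}^{\mathcal{T}_1}$, $\mathcal{C}^{\mathcal{T}_2}$, or recorded in $\mathcal{D}$ is a genuine $(\Delta, \gamma)$-clique in the sense of Definition \ref{Def:DG}. Since every element appended to $\mathcal{C}_T$ at Line 40 is first drawn from $\mathcal{C}^{\mathcal{T}_1}$ at Line 8, this invariant immediately implies the lemma. For the base case, the tuples initially copied into $\mathcal{C}^{\mathcal{T}_1}$ are exactly the contents of the initial set $\mathcal{C}^I_T$, which are size-$2$ $(\Delta, \gamma)$-cliques by Lemma \ref{Lemma:ini_2}.

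For the inductive step, I would assume that at the start of an arbitrary iteration every clique already present in $\mathcal{C}^{\mathcal{T}_1}$ and every subset-clique stored in $\mathcal{D}$ is a $(\Delta, \gamma)$-clique, and then show the same for any new tuple $(\mathcal{X} \cup \{v\}, [t'_a, t'_b])$ added at Line 28. The construction forms $[t'_a, t'_b] = [\max_{z} t^a_z, \min_{z} t^b_z]$, the intersection of the time intervals of the chosen size-$\lvert \mathcal{X}\rvert$ subsets $z \in C(\mathcal{X}\cup\{v\}, \mathcal{X})$, and inserts the tuple only when the guard $t'_b - t'_a \geq \Delta$ holds. By the inductive hypothesis each such $z$ is a $(\Delta, \gamma)$-clique over its own recorded interval $[t^a_z, t^b_z] \supseteq [t'_a, t'_b]$.

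The argument then rests on two observations. The first is combinatorial: because $\lvert \mathcal{X}\rvert \geq 2$, every vertex pair $\{p, q\} \subseteq \mathcal{X}\cup\{v\}$ survives in at least one subset $z$ (namely, drop any third vertex), so the frequency requirement for that pair is certified by $z$. The second is a subinterval-preservation property: since $[t'_a, t'_b] \subseteq [t^a_z, t^b_z]$ and $t'_b - t'_a \geq \Delta$, the $(\Delta, \gamma)$-condition for the pair carries over from the larger interval to the smaller one, because for every $\tau \in [t'_a, t'_b - \Delta]$ the clamped window $[\tau, \min(\tau+\Delta, t'_b)] = [\tau, \tau+\Delta]$ already lies inside $[t^a_z, t^b_z]$ and hence still contains at least $\gamma$ links. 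Combining the two, every pair of $\mathcal{X}\cup\{v\}$ meets the $\gamma$-frequency requirement throughout $[t'_a, t'_b]$, so the new tuple is indeed a $(\Delta, \gamma)$-clique, completing the induction.

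The main obstacle I anticipate is making the subinterval-preservation step fully rigorous against the boundary clamping built into Definition \ref{Def:DG}. I must verify that the guard $t'_b - t'_a \geq \Delta$ precisely excludes the degenerate case in which the range of $\tau$ collapses to a single point and the required window would be narrower than $\Delta$, and that at the extremal choice $\tau = t'_b - \Delta$ the clamped window of the smaller interval coincides with the ordinary width-$\Delta$ window, so that no edge demanded by $[t'_a, t'_b]$ is ever ``lost'' relative to $[t^a_z, t^b_z]$. I would isolate this as a short standalone observation about restriction of $(\Delta, \gamma)$-cliques to subintervals of width at least $\Delta$, and then invoke it uniformly for each pair inside the inductive step.
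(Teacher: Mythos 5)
Your proposal is correct and follows essentially the same route as the paper: both reduce the claim to showing that every tuple inserted into $\mathcal{C}^{\mathcal{T}_2}$ (and hence recycled into $\mathcal{C}^{\mathcal{T}_1}$ and eventually $\mathcal{C}_T$) is a $(\Delta,\gamma)$-clique, using the guard $t'_b - t'_a \geq \Delta$ at Line 28 together with the fact that the new interval is the intersection of intervals over which all the constituent $\lvert\mathcal{X}\rvert$-subsets are already $(\Delta,\gamma)$-cliques, with the base case supplied by Lemma \ref{Lemma:ini_2}. Your explicit pair-coverage and subinterval-preservation observations simply make rigorous what the paper's proof leaves as ``easy to verify.''
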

\begin{proof}
%From Lemma \ref{Lemma:ini_2}, all the contents of $\mathcal{C}_{\mathcal{L}}^{I}$ are $(\Delta, \gamma)$\mbox{-}cliques. During the vertex addition (say $v$) to an initial clique $(\mathcal{X},[t_a,t_b])$ all $u \in \mathcal{X}$, the intersection of each intervals of $\{u,v\}$ are taken in the For loop at Line 19 of Algorithm \ref{Algo:Enum}. Let us assume that the obtained interval is $[t_a^{'},t_{b}^{'}]$. It is ensured that  the length of the obtained interval is greater than or equal to $\Delta$ at Line 28. This ensures that each vertex pair of $\mathcal{X}\cup\{v\}$ occurs atleast $\gamma$ times in each $\Delta$ duration within $[t_{a}^{'},t_{b}^{'}]$. Hence, the contents of $\mathcal{C}_{\mathcal{L}}$

All the cliques are added in $\mathcal{C}_{T}$, only from $\mathcal{C}^{\mathcal{T}_{1}}$ at Line 40 in Algorithm \ref{Algo:Enum}. Now, initially $\mathcal{C}^{\mathcal{T}_{1}}$ contains the elements from $\mathcal{C}_{\mathcal{L}}^{I}$, which are $(\Delta, \gamma)$\mbox{-}cliques from Lemma \ref{Lemma:ini_2} and later it is updated with the entries of $\mathcal{C}^{\mathcal{T}_{2}}$. So, if we show that the elements of $\mathcal{C}^{\mathcal{T}_{2}}$ are $(\Delta, \gamma)$\mbox{-}cliques, the statement will be proved. Now, all the cliques of $\mathcal{C}^{\mathcal{T}_{2}}$ are of atleast $\Delta$ duration, from the condition at Line 28. Also, from the description of the Algorithm \ref{Algo:Enum}, it is easy to verify that in each iteration of vertex addition to a clique of $\mathcal{C}^{\mathcal{T}_{1}}$ can only be made, if all the possible combinations of vertices form $(\Delta, \gamma)$\mbox{-}cliques. This ensures that all the vertex pairs of the clique in $\mathcal{C}^{\mathcal{T}_{2}}$ are linked atleast $\gamma$ times in each $\Delta$ duration within the intersected time interval of all the combinations. Hence, the elements of $\mathcal{C}_{T}$ are $(\Delta, \gamma)$\mbox{-}cliques.
\begin{figure}[h]
	\centering
	\includegraphics[scale=0.7]{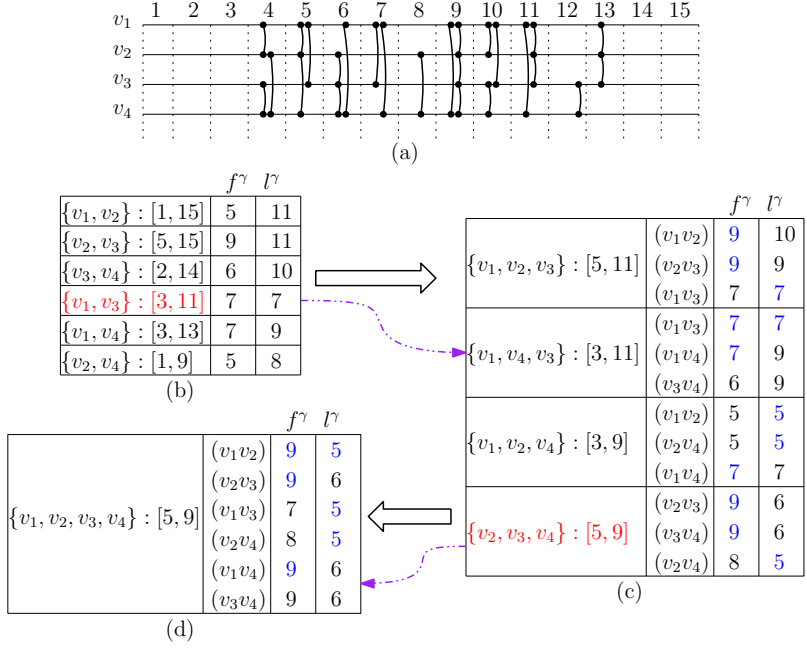}
	\caption{Illustrative example of the proposed Maximal $(\Delta, \gamma)$-Clique Enumeration Algorithm, (a) Input Temporal Graph with $\Delta=4$ and $\gamma=2$, (b) Output of the Algorithm \ref{Algo:Ini} - Stretching Phase, (c)-(d) The content of $\mathcal{C}^{\mathcal{T}_1}$ at Different Iteration of Algorithm  \ref{Algo:Enum}. The cliques in red are duration-wise maximal but not w.r.t. cardinality.} \label{Fig:Algo2_demo}  
\end{figure}

\end{proof}

\begin{mylem}\label{Lemma:enum_2}
	In Algorithm \ref{Algo:Enum}, all the intermediate cliques are duration wise maximal.
\end{mylem}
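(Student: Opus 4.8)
The plan is to prove the statement by induction on the iteration index $i$ of the outer \texttt{while} loop at Line 5, which by the second claim stated above coincides with induction on the cardinality of the cliques produced. Every intermediate clique resides in $\mathcal{C}^{\mathcal{T}_1}$ or $\mathcal{C}^{\mathcal{T}_2}$, and in iteration $i$ these have sizes $i+1$ and $i+2$ respectively, so it suffices to show that every clique deposited into $\mathcal{C}^{\mathcal{T}_2}$ is duration wise maximal whenever every clique currently in $\mathcal{C}^{\mathcal{T}_1}$ is. The base case is exactly Lemma \ref{Lemma:ini_3}: the size-$2$ cliques of $\mathcal{C}^I_T$, with which $\mathcal{C}^{\mathcal{T}_1}$ is initialized at Line 3, are duration wise maximal.

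For the inductive step I would fix a clique $(\mathcal{X}_{new}, [t^{'}_{a}, t^{'}_{b}])$ that Algorithm \ref{Algo:Enum} adds to $\mathcal{C}^{\mathcal{T}_2}$. By construction its interval arises from a single sample of $\mathcal{D}_{Temp}$, i.e. a choice of one interval per immediate sub-clique $z \in C(\mathcal{X}_{new}, \mathcal{X})$, via the assignments $t^{'}_{a} = \max_z t_a(z)$ and $t^{'}_{b} = \min_z t_b(z)$ just before Line 28. Each such $z$ is a clique of the previous cardinality whose chosen interval, by the induction hypothesis, is duration wise maximal. I would first invoke Lemma \ref{Lemma:enum_1} to note that $(\mathcal{X}_{new}, [t^{'}_{a}, t^{'}_{b}])$ is a genuine $(\Delta, \gamma)$\mbox{-}clique, and record that $t^{'}_{b} - t^{'}_{a} \geq \Delta$ holds by the guard at Line 28, so that only the two non-extendability conditions of Definition \ref{Def:maximal} remain to be checked.

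The heart of the argument is that the interval cannot grow. Since $t^{'}_{a}$ is a maximum, some sub-clique $z^{\ast}$ satisfies $t_a(z^{\ast}) = t^{'}_{a}$ with $t_b(z^{\ast}) \geq t^{'}_{b}$. As $z^{\ast}$ is duration wise maximal, $(z^{\ast}, [t^{'}_{a}-1, t_b(z^{\ast})])$ is not a $(\Delta, \gamma)$\mbox{-}clique, and since $(z^{\ast}, [t^{'}_{a}, t_b(z^{\ast})])$ \emph{is} one, the violation can only occur at the single new window $\tau = t^{'}_{a}-1$: some vertex pair in $z^{\ast}$ has fewer than $\gamma$ links in $[t^{'}_{a}-1,\, t^{'}_{a}-1+\Delta]$. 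The inequality $t^{'}_{b} \geq t^{'}_{a}+\Delta$ guarantees this window lies inside $[t^{'}_{a}-1, t^{'}_{b}]$, so the same offending pair, now a pair of $\mathcal{X}_{new}$, certifies that $(\mathcal{X}_{new}, [t^{'}_{a}-1, t^{'}_{b}])$ is not a $(\Delta, \gamma)$\mbox{-}clique. A symmetric argument using the sub-clique $z^{\ast\ast}$ with $t_b(z^{\ast\ast}) = t^{'}_{b}$ rules out extension to $[t^{'}_{a}, t^{'}_{b}+1]$, again because $t^{'}_{b}+1-\Delta \geq t^{'}_{a}$ keeps the failing window inside the interval. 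Both non-extendability conditions then hold, closing the induction.

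The step I expect to be the main obstacle is the careful localization of the failure window: I must argue that shifting one endpoint by a unit introduces exactly one new $\Delta$\mbox{-}window (earlier windows can only gain links under extension, never lose them), that this is precisely where the duration wise maximal sub-clique breaks, and then verify via the $t^{'}_{b} - t^{'}_{a} \geq \Delta$ test that the intersection does not truncate this window away, so the broken pair is a legitimate witness for the larger vertex set over the intersected interval. The boundary cases $t^{'}_{a}-1 < t$ and $t^{'}_{b}+1 > t^{'}$ are handled immediately by Definition \ref{Def:maximal}, which only imposes the corresponding condition when the shifted endpoint remains within the lifetime of the network.
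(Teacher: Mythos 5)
Your proof is correct and follows essentially the same route as the paper's: induction on the iterations of the outer \texttt{while} loop with Lemma \ref{Lemma:ini_3} as the base case, and an inductive step that pins non-extendability of each endpoint of the intersected interval on the duration wise maximal sub-clique that attains that endpoint. Your witness-pair localization of the failure to the single new $\Delta$-window is simply a more explicit rendering of the transfer step that the paper phrases via the coincidence of the first (resp.\ last) $\gamma$-th occurrence times of the resultant and constituent cliques.
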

\begin{proof}
From the proof of Lemma \ref{Lemma:enum_1}, it is sufficient to show that the contents of $\mathcal{C}^{\mathcal{T}_{1}}$ are duration wise maximal. We prove the statement by induction. From Lemma \ref{Lemma:ini_3} the contents of initial clique set are duration wise maximal. Let us assume that in the $i$\mbox{-}th iteration of the While loop at Line 5, the contents of $\mathcal{C}^{\mathcal{T}_{1}}$ are duration wise maximal. We need to show that the same will hold in the $(i+1)$\mbox{-}th iteration also. After adding a vertex to an existing clique obtained in $i$-th iteration for possible expansion, the new vertex set is considered to be a $(\Delta, \gamma)$\mbox{-}clique within the intersected interval of all $(i+2)$\mbox{-}combinations, if the length of the intersected interval is more than $\Delta$ (Line 17 to 36 in Algorithm \ref{Algo:Enum}). Now, it can be observed that the latest first $\gamma$\mbox{-}th occurrence time $(f_{i+1}^{\gamma})$ of the resultant clique must be same with the latest first $\gamma$\mbox{-}th occurrence time $(f_{i}^{\gamma})$ of the constituiting clique from which $t_a$ is coming. Similarly, the earliest last $\gamma$\mbox{-}th occurrence time $(l_{i+1}^{\gamma})$ of the resultant clique must be same with the earliest last $\gamma$\mbox{-}th occurrence time $(l_{i}^{\gamma})$ of the constituiting clique from which $t_b$ is coming. When both the $t_a$, $t_b$ are coming from the same constituting clique, the original clique is not maximal as vertex addition is possible. Now, for the resultant clique, the begining time $t_a$ can not be extended to $t_a - 1$ as in the $i$-th iteration the constituting clique is also duration wise maximal from the assumption, i.e., $f_i^{\gamma} - \Delta=t_a \implies f_{i+1}^{\gamma} - \Delta=t_a$. Similarly, $t_b$ can not be extended to $t_b + 1$ as  in the $i$-th iteration the constituting clique is also duration wise maximal from the assumption, i.e., $l_i^{\gamma} + \Delta=t_b \implies l_{i+1}^{\gamma} + \Delta=t_b$. So, the resultanat clique at $(i+1)$\mbox{-}th iteration is also duration wise maximal. This is true for all the cliques generated in each iteration. Hence, all the intermediate cliques in Algorithm \ref{Algo:Enum} are duration wise maximal.

%During the vertex addition of an arbitrary clique $(\mathcal{X}, [t_a, t_b])$ all the $(i+2)$ combinations are checked 

%Now, let us assume at any $i$\mbox{-}th iteration of the While loop at Line 5 any arbitrary clique $(\mathcal{X}, [t_a, t_b]) \in \mathcal{C}^{\mathcal{T}_{1}}$ is not duration wise maximal. Then, there exists a $t_a^{'}$ with $t_a^{'} < t_a$ such that $(\mathcal{X}, [t_a^{'},t_b])$ is a $(\Delta, \gamma)$\mbox{-}clique or a $t_{b}^{'}$ with $t_{b}^{'} > t_b$ such that $(\mathcal{X}, [t_a,t_b^{'}])$ is a $(\Delta, \gamma)$\mbox{-}clique. As 
\end{proof}

\begin{mylem}\label{Lemma:enum_2a}
In Algorithm \ref{Algo:Enum}, at the begining of any $i$-th iteration, $\mathcal{C}^{\mathcal{T}_{1}}$ holds all the duration wise maximal $(\Delta, \gamma)$\mbox{-}cliques of size $i+1$.
\end{mylem}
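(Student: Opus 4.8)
The plan is to argue by induction on the iteration index $i$ of the outer While loop (Line 5), taking the completeness of the initialization, Lemma \ref{Lemma:ini_4}, as the base case and combining the inductive hypothesis with Claim 1 and a maximality argument in the inductive step. For the base case ($i=1$), at the start of the first iteration $\mathcal{C}^{\mathcal{T}_1}$ is set to $\mathcal{C}_{\mathcal{L}}^{I}$ (Line 3), and by Lemma \ref{Lemma:ini_4} this set already contains every duration-wise maximal $(\Delta, \gamma)$-clique of size $2 = i+1$, so the statement holds.

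For the inductive step, suppose that at the start of the $i$-th iteration $\mathcal{C}^{\mathcal{T}_1}$ holds every duration-wise maximal $(\Delta, \gamma)$-clique of size $i+1$. Since every element placed in $\mathcal{C}^{\mathcal{T}_1}$ was simultaneously recorded in $\mathcal{D}$ (Lines 29--30, or Line 2 at initialization), all of these cliques together with their intervals are available in $\mathcal{D}$ throughout iteration $i$. Because $\mathcal{C}^{\mathcal{T}_1}$ is refilled from $\mathcal{C}^{\mathcal{T}_2}$ before the $(i+1)$-th iteration, it suffices to show that every duration-wise maximal $(\Delta, \gamma)$-clique $(\mathcal{Y}, [t_a', t_b'])$ of size $i+2$ is inserted into $\mathcal{C}^{\mathcal{T}_2}$ during iteration $i$. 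Fix such a clique, pick any $v \in \mathcal{Y}$, and set $\mathcal{X} = \mathcal{Y} \setminus \{v\}$, which has size $i+1$. Every $(i+1)$-element subset $z$ of $\mathcal{Y}$ is a $(\Delta, \gamma)$-clique on $[t_a', t_b']$, hence its duration-wise maximal extension $[a_z, b_z] \supseteq [t_a', t_b']$ lies in $\mathcal{C}^{\mathcal{T}_1}$ by the inductive hypothesis and is stored in $\mathcal{D}$; moreover $v \in \mathcal{N}_G(\mathcal{X})$ because $\mathcal{Y}$ is a clique of the static graph $G$. Thus the maximal extension of $\mathcal{X}$ containing $[t_a', t_b']$ is processed in iteration $i$, the test at Line 16 passes, and the entries of $C(\mathcal{Y}, \mathcal{X})$ are gathered into $\mathcal{D}_{Temp}$.

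The key step is to verify that for the sample of intervals $\{[a_z, b_z]\}_z$ drawn from the product at Line 19, the intersection computed at Lines 20--27 is exactly $[t_a', t_b']$. This intersection equals $[\max_z a_z, \min_z b_z]$ with $\max_z a_z \le t_a'$ and $\min_z b_z \ge t_b'$, since each $[a_z, b_z] \supseteq [t_a', t_b']$. On this interval every $(i+1)$-subset of $\mathcal{Y}$ is a $(\Delta, \gamma)$-clique, and as every pair of vertices of $\mathcal{Y}$ is contained in at least one such subset, each pair is linked at least $\gamma$ times in every $\Delta$ window throughout $[\max_z a_z, \min_z b_z]$; hence $\mathcal{Y}$ itself is a $(\Delta, \gamma)$-clique on this interval. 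Duration-wise maximality of $(\mathcal{Y}, [t_a', t_b'])$ then forces $\max_z a_z = t_a'$ and $\min_z b_z = t_b'$, so the computed interval is $[t_a', t_b']$, its length is at least $\Delta$, the test at Line 28 succeeds, and $(\mathcal{Y}, [t_a', t_b'])$ is added to $\mathcal{C}^{\mathcal{T}_2}$ and $\mathcal{D}$ at Lines 29--30. Whichever subset first triggers this computation performs the insertion (by Claim 2 the new clique has size $i+2$), while later encounters merely find $\mathcal{Y}$ already in $\mathcal{D}$. Consequently $(\mathcal{Y}, [t_a', t_b']) \in \mathcal{C}^{\mathcal{T}_2}$ at the end of iteration $i$, and therefore belongs to $\mathcal{C}^{\mathcal{T}_1}$ at the start of iteration $i+1$, which completes the induction.

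I expect the principal difficulty to be this last intersection argument: one must simultaneously ensure that the right combination of maximal sub-intervals is actually present in $\mathcal{D}$ (where the inductive hypothesis and Claim 1 do the work) and that the maximality of $\mathcal{Y}$ rules out a strictly larger intersection. The latter hinges on the observation that the clique condition on all $(i+1)$-subsets lifts to the whole of $\mathcal{Y}$, since every vertex pair of $\mathcal{Y}$ lies in at least one such subset; care is also needed to confirm that this lifting is valid already at $i=1$, where the relevant subsets are exactly the vertex pairs.
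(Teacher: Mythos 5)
Your proof is correct and follows essentially the same route as the paper: induction on the iteration index with Lemma \ref{Lemma:ini_4} as the base case, and exhaustiveness of the vertex-neighbour and time-interval-combination enumeration (Lines 10--38) as the inductive step. The only difference is that you spell out the key step the paper leaves implicit --- identifying the sample of maximal sub-interval extensions in the product at Line 19, showing via the pair-lifting argument that $\mathcal{X}\cup\{v\}$ is a $(\Delta,\gamma)$-clique on the resulting intersection, and using duration-wise maximality to conclude the intersection equals the target interval --- which makes your write-up a more rigorous rendering of the same argument.
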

\begin{proof}
	For $i=1$, $\mathcal{C}^{\mathcal{T}_{1}}$ holds all the duration wise maximal $(\Delta, \gamma)$\mbox{-}cliques of size $2$ from Lemma \ref{Lemma:ini_4}. Let, $\mathcal{C}^{\mathcal{T}_{1}}_{i-1}$ and $\mathcal{C}^{\mathcal{T}_{1}}_{i}$ are the clique sets at the beginning of the iteration $i-1$ and $i$ respectively and $\mathcal{C}^{\mathcal{T}_{1}}_{i-1}$ holds all the duration wise maximal $(\Delta, \gamma)$\mbox{-}cliques of size $i$. Then, we have to show that during the construction of $\mathcal{C}^{\mathcal{T}_{1}}_{i}$ from $\mathcal{C}^{\mathcal{T}_{1}}_{i-1}$, the clique set $\mathcal{C}^{\mathcal{T}_{1}}_{i}$ remains exhaustive. For a clique from $\mathcal{C}^{\mathcal{T}_{1}}_{i-1}$, we check for all the possible $i+1$ vertex combinations in Line 17 of Algorithm \ref{Algo:Enum}, which does not leave any possible vertex addition to the clique. Next, for each added vertex, all the possible time interval combinations are generated and checked from Line 19 to 35. Now, for each possible time combination, the $(\Delta, \gamma)$\mbox{-}clique is generated from the maximum possible common interval of them. This guarntees that all the possible cliques are generated during this process. Again, from Lemma \ref{Lemma:enum_2}, in the $i$-th iteration all the generated cliques are also duration wise maximal, which are now in $\mathcal{C}^{\mathcal{T}_{1}}_{i}$. So, the same can be proved in the clique building from $i$-th to $i+1$-th iteration. Hence, for any value of $i$ the claimed statement is true.
\end{proof}
\begin{mylem}\label{Lemma:enum_3}
	All the $(\Delta, \gamma)$\mbox{-}Cliques returned by Algorithm \ref{Algo:Enum} and contained in $\mathcal{C}_{T}$ are maximal .
\end{mylem}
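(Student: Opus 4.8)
The plan is to verify all three maximality conditions of Definition \ref{Def:MDG} for an arbitrary clique $(\mathcal{X},[t_a,t_b])$ that Algorithm \ref{Algo:Enum} appends to $\mathcal{C}_T$ at Line 40. The two duration conditions---that neither $(\mathcal{X},[t_a-1,t_b])$ nor $(\mathcal{X},[t_a,t_b+1])$ is a $(\Delta,\gamma)$-clique---follow immediately from Lemma \ref{Lemma:enum_2}, because every clique that ever enters $\mathcal{C}^{\mathcal{T}_1}$, and hence every candidate for $\mathcal{C}_T$, is duration wise maximal. Thus the whole argument reduces to the cardinality condition: I must show that whenever the flag IS\_MAX is still \texttt{True} just before Line 40, there is no $v\in V(\mathcal{G})\setminus\mathcal{X}$ for which $(\mathcal{X}\cup\{v\},[t_a,t_b])$ is a $(\Delta,\gamma)$-clique.

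I would argue the contrapositive. Suppose some $v$ makes $\mathcal{X}\cup\{v\}$ a $(\Delta,\gamma)$-clique on exactly $[t_a,t_b]$. The crucial first step is to show that this extended clique is itself duration wise maximal on $[t_a,t_b]$. Two elementary monotonicity facts are needed: (i) deleting a vertex preserves the $(\Delta,\gamma)$-clique property, so $\mathcal{X}\cup\{v\}$ being a clique on an interval forces $\mathcal{X}$ to be a clique on the same interval; and (ii) for intervals of length at least $\Delta$, restricting a $(\Delta,\gamma)$-clique to a sub-interval of length at least $\Delta$ preserves the property. If $\mathcal{X}\cup\{v\}$ could be stretched to a strictly larger interval, then by (i) so could $\mathcal{X}$, and by (ii) $\mathcal{X}$ would already be a clique on $[t_a-1,t_b]$ or on $[t_a,t_b+1]$, contradicting the duration wise maximality of $(\mathcal{X},[t_a,t_b])$ guaranteed by Lemma \ref{Lemma:enum_2}. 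Hence $[t_a,t_b]$ is a duration wise maximal interval of $\mathcal{X}\cup\{v\}$.

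Next I would invoke completeness of the dictionary. By Claim 1 together with Lemma \ref{Lemma:enum_2a}, every duration wise maximal interval on which $\mathcal{X}\cup\{v\}$ forms a clique is recorded in $\mathcal{D}[\mathcal{X}\cup\{v\}]$; moreover, because the else-branch computes the entire Cartesian product $z_1\otimes\cdots\otimes z_n$ over the $|\mathcal{X}|$-subsets---all of which reside in $\mathcal{D}$ throughout the iteration by Lemma \ref{Lemma:enum_2a}, so that the guard on $\{C(\mathcal{X}_{new},\mathcal{X})\}$ never blocks the computation---the entry $\mathcal{D}[\mathcal{X}\cup\{v\}]$ receives all of its intervals at the single moment of its creation. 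Consequently, when the neighbour $v$ is examined during the processing of $(\mathcal{X},[t_a,t_b])$, either $\mathcal{X}\cup\{v\}\notin\mathcal{D}$, in which case the interval $[t_a,t_b]$---arising as the intersection of $\mathcal{X}$'s own interval with the containing intervals of the remaining subsets, and of length at least $\Delta$ so that the guard at Line 28 passes---is produced and triggers the interval-match test; or $\mathcal{X}\cup\{v\}\in\mathcal{D}$ with $[t_a,t_b]$ already present, and the exact-interval test in the first branch (Line 14) fires. In either case IS\_MAX is set to \texttt{False}, contradicting the hypothesis, so no such $v$ exists and the cardinality condition holds.

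The step I expect to be the main obstacle is pinning down the timing of the dictionary updates: the correctness of the cheap exact-interval test at Line 14 hinges on $\mathcal{D}[\mathcal{X}\cup\{v\}]$ being \emph{fully} populated the first time $\mathcal{X}\cup\{v\}$ enters $\mathcal{D}$, rather than being filled incrementally as distinct $|\mathcal{X}|$-subsets are processed. I would isolate this as a separate observation---that the else-branch emits all duration wise maximal intervals of a freshly discovered vertex set at once---and lean on Lemma \ref{Lemma:enum_2a} to certify that every required subset is already present so the combination guard never spuriously suppresses that emission. A secondary care point is the boundary handling of the sub-interval fact (ii), which is valid only because every clique the algorithm manipulates has duration at least $\Delta$ (enforced at Line 28); I would therefore state (ii) with that hypothesis made explicit.
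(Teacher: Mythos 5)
Your proposal is correct and follows essentially the same route as the paper's proof: reduce maximality to the vertex-addition condition via Lemma \ref{Lemma:enum_2}, then argue by contradiction that any admissible extension $(\mathcal{X}\cup\{v\},[t_a,t_b])$ would force the IS\_MAX flag to \texttt{False} before Line 40. The paper simply asserts that the flag becomes false, whereas you supply the supporting details (duration-wise maximality of the extended clique via monotonicity, and completeness of the dictionary entries via Claim 1 and Lemma \ref{Lemma:enum_2a}); this is a more careful rendering of the same argument, not a different one.
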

\begin{proof}
	We prove this statement by contradiction. Assume that $C_{i}=(\mathcal{X} , [t_a, t_b])$ be an element of $\mathcal{C}_{\mathcal{L}}$, which is not maximal. In Algorithm \ref{Algo:Enum}, the cliques are added in $\mathcal{C}_{\mathcal{L}}$ from $\mathcal{C}^{\mathcal{T}_{1}}$ and all the cliques in $\mathcal{C}^{\mathcal{T}_{1}}$ are duration wise maximal $(\Delta, \gamma)$\mbox{-}cliques from Lemma \ref{Lemma:enum_2}. If, $C_{i}$ is not maximal, then the only thing that can happen is that one or more vertex addition is possible to make $C_{i}$ maximal. Now, let us assume that $\exists v \in \mathcal{N}_{G}(\mathcal{X})$, such that $(\mathcal{X} \cup \{v\} , [t_a, t_b])$ is a $(\Delta, \gamma)$\mbox{-}clique. From the enumaration process described in Algorithm \ref{Algo:Enum}, if a clique is added to $\mathcal{C}_{\mathcal{L}}$, it has to be in $\mathcal{C}^{\mathcal{T}_{1}}$ in any previous iteration. As $(\mathcal{X} \cup \{v\} , [t_a, t_b])$ is a $(\Delta, \gamma)$\mbox{-}clique, the $IS\_MAX$ flag becomes false so that it is not going to be added in $\mathcal{C}_{\mathcal{L}}$ but in $\mathcal{C}^{\mathcal{T}_{2}}$. Hence, the assumption $C_{i} \in \mathcal{C}_{\mathcal{L}} $ is a contradiction. So, all the elements of $\mathcal{C}_{\mathcal{L}}$ returned by Algorithm \ref{Algo:Enum} are maximal $(\Delta, \gamma)$\mbox{-}cliques.  
\end{proof}

\begin{mythem}\label{Theorem:1}
	All the maximal $(\Delta, \gamma)$\mbox{-}Cliques of $\mathcal{G}$ are contained in $\mathcal{C}_{T}$.
\end{mythem}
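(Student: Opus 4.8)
The plan is to establish completeness by reducing it to the exhaustiveness invariant of Lemma~\ref{Lemma:enum_2a} together with the behaviour of the IS\_MAX flag. First I would fix an arbitrary maximal $(\Delta, \gamma)$\mbox{-}clique $(\mathcal{X}, [t_a, t_b])$ of $\mathcal{G}$ and write $k = \vert \mathcal{X} \vert \geq 2$. By Definition~\ref{Def:maximal}, maximality is a conjunction of two independent facts, which I would exploit separately: (a) \emph{cardinality maximality}, that no $v \in V(\mathcal{G}) \setminus \mathcal{X}$ makes $(\mathcal{X} \cup \{v\}, [t_a, t_b])$ a $(\Delta, \gamma)$\mbox{-}clique; and (b) \emph{duration\mbox{-}wise maximality}, that neither $(\mathcal{X}, [t_a - 1, t_b])$ nor $(\mathcal{X}, [t_a, t_b + 1])$ is a $(\Delta, \gamma)$\mbox{-}clique.

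Next, I would invoke Lemma~\ref{Lemma:enum_2a}: since $(\mathcal{X}, [t_a, t_b])$ is duration\mbox{-}wise maximal and has size $k$, it must already reside in $\mathcal{C}^{\mathcal{T}_1}$ at the beginning of the $(k-1)$\mbox{-}th iteration of the outer While loop (Line~5). That this iteration is actually reached follows from the same inductive construction underlying Lemma~\ref{Lemma:enum_2a}: every duration\mbox{-}wise maximal clique of intermediate size $2, 3, \dots, k-1$ is generated and keeps $\mathcal{C}^{\mathcal{T}_2}$ nonempty, so the ALL\_MAXIMAL flag does not terminate the loop prematurely. Hence the clique is taken out and processed at Line~8 during that iteration.

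Finally, I would argue that the processing appends it to $\mathcal{C}_T$. When the clique is removed, IS\_MAX is set to true at Line~9. Any vertex that could extend $\mathcal{X}$ must be linked at least $\gamma$ times to every member of $\mathcal{X}$, hence must lie in the common neighbourhood $\mathcal{N}_{G}(\mathcal{X})$ in the static graph; therefore the candidate set $\mathcal{N}_{G}(\mathcal{X}) \setminus \mathcal{X}$ scanned by the inner \texttt{for}\mbox{-}loop is lossless. By the first of the two claims preceding this theorem, for each such $v$ the algorithm records in $\mathcal{D}$ \emph{every} interval in which $\mathcal{X} \cup \{v\}$ forms a $(\Delta, \gamma)$\mbox{-}clique; consequently the flag is cleared (at Line~14, or inside the interval\mbox{-}enumeration block by the matching test against $[t_a,t_b]$) precisely when some $(\mathcal{X} \cup \{v\}, [t_a, t_b])$ is a $(\Delta, \gamma)$\mbox{-}clique. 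By cardinality maximality (a), no such $v$ exists, so IS\_MAX stays true and the clique is appended to $\mathcal{C}_T$ at Line~40. Since $(\mathcal{X}, [t_a, t_b])$ was arbitrary, every maximal $(\Delta, \gamma)$\mbox{-}clique lies in $\mathcal{C}_T$. The main obstacle here is not a fresh calculation but correctly coupling the two maximality conditions with the algorithm's two mechanisms --- duration\mbox{-}wise maximality with the exhaustive generation of Lemma~\ref{Lemma:enum_2a}, and cardinality maximality with the completeness of the interval records guaranteed by the claim --- while verifying that the detection step neither overlooks an admissible extension nor spuriously clears the flag on an interval different from $[t_a,t_b]$.
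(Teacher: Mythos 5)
Your proposal is correct and follows essentially the same route as the paper's proof: both rest on Lemma~\ref{Lemma:enum_2a} to guarantee that the duration\mbox{-}wise maximal clique of size $k$ appears in $\mathcal{C}^{\mathcal{T}_1}$ at the appropriate iteration, and then on the IS\_MAX mechanism together with cardinality maximality to conclude it is appended to $\mathcal{C}_T$. The only difference is presentational --- you argue directly and uniformly in $k$, whereas the paper argues by contradiction with a case split on size $2$ versus size $\geq 3$ --- and your explicit flagging of the need to check that the interval test does not spuriously clear the flag is a point the paper also leaves implicit.
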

\begin{proof}
	We prove this statement by contradiction. For the time being assume, that a maximal clique $C_{i}=(\mathcal{X}  , [t_a, t_b])$ of the temporal network $\mathcal{G}$ is not present in $\mathcal{C}_{T}$. Now, the following two cases may happen:
   \begin{itemize}
   	\item $C_{i}$ is a maximal clique of size $2$. From Lemma \ref{Lemma:ini_4}, it is understood that at the begining of Algorithm \ref{Algo:Enum}, $\mathcal{C}^{\mathcal{T}_{1}}$ contains all the duration wise size $2$ maximal cliques. Now, in this situation if none of the following three cases happen:
   	\begin{itemize}
   		\item No vertex addition is possible. If it is so then it will not enter in the for loop at Line 10.
   		\item Vertex addition is possible. However, for the generated possible clique(s), it may happen the duration of the time interval(s) is less than $\Delta$ (Line 28). 
   		\item  Vertex addition is possible and for any of the neighboring vertices the duration of the generated possible clique(s) is greater than $\Delta$, however, none of the intervals are equal with $[t_a,t_b]$ (Line 31).    
   	\end{itemize}
   then $C_i$ is a maximal $(\Delta, \gamma)$\mbox{-}clique of size 2. So, the $IS\_MAX$ flag remains true and the clique $C_i$ is added in $\mathcal{C}_{\mathcal{L}}$.
   \item  $C_{i}$ is a maximal clique of size greater than equal to $3$. Now, without loss of generality, here, we show for $\vert \mathcal{X} \vert=3$ and assume $\mathcal{X}=\{v_i, v_j, v_k\}$. Now, $C_i$ to be a $(\Delta, \gamma)$\mbox{-}clique, it nust have generated from any one of the following three size $2$ $(\Delta, \gamma)$\mbox{-}cliques; let $(\{v_i,v_j\}  , [t_a^{'}, t_b^{'}])$, $(\{v_i,v_k\}  , [t_a^{''}, t_b^{''}])$, and $(\{v_k,v_j\}  , [t_a^{'''}, t_b^{'''}])$, where all the three intervals are super interval of $[t_a, t_b]$. With out loss of generality, we start with the cliques, say, $(\{v_i,v_j\}  , [t_a^{'}, t_b^{'}])$ from $\mathcal{C}^{\mathcal{T}_{1}}$ and $v_k$ is added (Line 10 to 38). Here, $(\{v_i, v_j, v_k\}  , [t_a, t_b])$ is duration wise maximal from Lemma \ref{Lemma:enum_2} and added in $\mathcal{C}^{\mathcal{T}_{2}}$. In the next iteration, $(\{v_i, v_j, v_k\}  , [t_a, t_b])$ is tested for further expansion and as $C_i$ is a maximal $(\Delta, \gamma)$\mbox{-}clique from the assumption, none of the subcases mentioned in Case 1 will occur. So, the $IS\_MAX$ flag will remain true and the clique $C_i$ will be added in $\mathcal{C}_{\mathcal{L}}$. Now, the same will happen for the cliques with larger size as in every iteration all the duration wise maximal $(\Delta, \gamma)$\mbox{-}cliques are generated (by Lemma \ref{Lemma:enum_2a}).
   
   \end{itemize}
Hence, we reach the contradiction. So, for the temporal network $\mathcal{G}$, $\mathcal{C}_{T}$ contains all the maximal $(\Delta, \gamma)$\mbox{-}cliques of it.
\end{proof}
 Theorem \ref{Theorem:1} is basically the correctness statement of the proposed methodology. Next, we proceed towards the analysis of Algorithm \ref{Algo:Enum} for its time and space requirement.

\par  As mentioned previously, $m$ denotes the temporal links in the time varying graph $\mathcal{G}$. At Line Number $2$, computing the static graph from the given time varying graph requires $\mathcal{O}(m)$ time. Time requirement for creating the dictionary $\mathcal{D}$ will be of $\mathcal{O}(|\mathcal{C}_{T}|. f_{max})$ time, where $f_{max}$ denotes the highest number of times a clique appeared. Copying the cliques from the list $\mathcal{O}(\mathcal{C}_{T})$ to $\mathcal{C}^{T_{1}}$ requires $\mathcal{O}(|\mathcal{C}_{T}|)$ time. Setting the $ALL\_MAXIMAL$ flag to `false' in Line Number $4$ requires $\mathcal{O}(1)$ time. So, from Line Number $1$ to $4$, the time requirement is of $\mathcal{O}(m+|\mathcal{C}_{T}^{I}|. f_{max})$. Now, it is easy to verify that the instructions in Line Number $6$, $8$, and $9$ require $\mathcal{O}(1)$ time. The \texttt{for} loop in Line Number $10$ can run at most $\mathcal{O}(n)$ time. Adding the vertex $v$ to the existing clique $\mathcal{X}$ to form $\mathcal{X}_{new}$ in Line Number $11$ requires $\mathcal{O}(1)$ time. The maximum number of comparisons in the condition of the \texttt{if} statement in Line Number $12$ will be $\mathcal{O}(|\mathcal{C}_{T}|)$. In the worst case, each comparison can take at most $\mathcal{O}(n^{2})$ time. Hence, total time requirement for Line Number $12$ requires $\mathcal{O}(|\mathcal{C}_{T}|. n^{2})$ time. Number of comparisons   in the conditional statement in Line Number $13$ requires at most $\mathcal{O}(f_{max})$ time. Setting the $IS\_MAX$ flag to `False' in Line Number $14$ requires $\mathcal{O}(1)$ time. Now, in the \texttt{if} statement of Line Number $17$, the number of combinations can be $\mathcal{O}(n)$ in the worst case. Hence, the number of comparisons for checking the existence in the dictionary $\mathcal{D}$ is of $\mathcal{O}(n|\mathcal{C}_{T}|)$. As mentioned previously, each individual comparison requires $\mathcal{O}(n^{2})$ time. Hence, total execution time for Line $17$ is of $\mathcal{O}(n^{3}.|\mathcal{C}_{T}|)$ time. Now, copying the newly generated combinations from the dictionary $\mathcal{D}$ to $\mathcal{D}_{Temp}$ requires $\mathcal{O}(n f_{max})$. It can be verified from the description of the Algorithm \ref{Algo:Enum} that the number of possible combinations among the time duration is of $\mathcal{O}(f_{max}^{n})$. Hence the \texttt{for} loop in Line Number $19$ will execute $\mathcal{O}(f_{max}^{n})$ times. Line Number $20$ and $21$ takes $\mathcal{O}(1)$ time. Executing the \texttt{for} loop from Line Number $22$ to $25$ requires $\mathcal{O}(n)$ time. Computing the maximum and minimum value among the elements of the list $max\_t_a$ and $min\_t_b$ requires $\mathcal{O}(n)$ time. It is easy to verify that execution of Line Number $28$ to $34$, $39$ to $41$, $43$ to $45$ and $46$ require $\mathcal{O}(1)$ time. Copying the cliques from in Line Number $44$ can take $\mathcal{O}(|\mathcal{C}_{T}|)$ time.  Now, we need to wrap up the computational time requirement for the looping structures to obtain the total time requirement of Algorithm \ref{Algo:Enum}. From the previous analysis, it can be verified that the time requirement for executing the \texttt{for} loop from Line Number $19$ to $35$ will be of $\mathcal{O}(f_{max}^{n}.n)$. The \texttt{for} loop from Line Number $10$ to $38$ will execute at max $\mathcal{O}(n)$ times. Hence, the running time from $10$ to $38$ is of $\mathcal{O}(n(n^{2}.|\mathcal{C}_{T}|.f_{max}+n^{3}.|\mathcal{C}_{T}|+n.f_{max}+f_{max}^{n}.n))=\mathcal{O}(n^{3}.|\mathcal{C}_{T}|.f_{max}+n^{4}.|\mathcal{C}_{T}|+n^{2}.f_{max}+f_{max}^{n}.n^{2})=\mathcal{O}(n^{3}.|\mathcal{C}_{T}|.f_{max}+n^{4}.|\mathcal{C}_{T}|+f_{max}^{n}.n^{2})$. The \texttt{while} loop from Line Number $7$ to $42$ can execute at most $\mathcal{O}(|\mathcal{C}_{T}|)$ times. Hence, execution time of this \texttt{while} loop is of $\mathcal{O}(n^{3}.|\mathcal{C}_{T}|^{2}.f_{max}+n^{4}.|\mathcal{C}_{T}|^{2}+ |\mathcal{C}_{T}|. f_{max}^{n}.n^{2})$. Also, the number of times the \texttt{while} loop from Line Number $5$ to $48$ can execute at most $\mathcal{O}(n)$ times. Hence time requirement for execution of Line Number $5$ to $48$ is $\mathcal{O}(n(n^{3}.|\mathcal{C}_{T}|^{2}.f_{max}+n^{4}.|\mathcal{C}_{T}|^{2}+ |\mathcal{C}_{T}|. f_{max}^{n}.n^{2} + |\mathcal{C}_{T}|))=\mathcal{O}(n^{4}.|\mathcal{C}_{T}|^{2}.f_{max}+n^{5}.|\mathcal{C}_{T}|^{2}+ |\mathcal{C}_{T}|. f_{max}^{n}.n^{3} + n.|\mathcal{C}_{T}|)=\mathcal{O}(n^{4}.|\mathcal{C}_{T}|^{2}.f_{max}+n^{5}.|\mathcal{C}_{T}|^{2}+ |\mathcal{C}_{T}|. f_{max}^{n}.n^{3})$. As already derived that running time from Line Number $1$ to $4$ is of $\mathcal{O}(m+|\mathcal{C}_{T}^{I}|. f_{max})$, hence, total time requirement for Algorithm \ref{Algo:Enum} is of $\mathcal{O}(n^{4}.|\mathcal{C}_{T}|^{2}.f_{max}+n^{5}.|\mathcal{C}_{T}|^{2}+ |\mathcal{C}_{T}|. f_{max}^{n}.n^{3} + m+|\mathcal{C}_{T}^{I}|. f_{max})=\mathcal{O}(n^{4}.|\mathcal{C}_{T}|^{2}.f_{max}+n^{5}.|\mathcal{C}_{T}|^{2}+ |\mathcal{C}_{T}|. f_{max}^{n}.n^{3})$. Maximum number of cliques could be at max $2^{n}$. Hence, plugging the worst case value of $|\mathcal{C}_{T}|$, we have the running time of Algorithm \ref{Algo:Enum} is $\mathcal{O}(n^{4}.2^{2n}.f_{max}+n^{5}.2^{2n}+ 2^{n}. f_{max}^{n}.n^{3})$.
\par Additional space requirement of the Algorithm \ref{Algo:Enum} is due to the `static graph' $G$, which requires $\mathcal{O}(m)$ space; dictionary $\mathcal{D}$, which requires $\mathcal{O}(|\mathcal{C}_{T}^{I}|. f_{max})$ space; dictionary $\mathcal{D}_{Temp}$ which requires $\mathcal{O}(n.f_{max})$ space, the list $\mathcal{X}_{new}$ which requires $\mathcal{O}(n)$ space, the lists $\mathcal{C}^{\mathcal{T}_{1}}$, $\mathcal{C}^{\mathcal{T}_{2}}$, and $\mathcal{C}_{T}$ in the worst case these may require $\mathcal{O}(n 2^{n})$ space; the lists $max\_t_{a}$ and  $min\_t_{b}$ which require $\mathcal{O}(|\mathcal{C}_{T}|)$ space. Hence, total space requirement of Algorithm \ref{Algo:Enum} is of $\mathcal{O}(m+|\mathcal{C}_{T}^{I}|. f_{max}+n.f_{max} + n + n.2^{n}+2^{n})=\mathcal{O}(m+|\mathcal{C}_{T}^{I}|. f_{max}+n.f_{max}+ n.2^{n})$. Hence, Lemma \ref{Lemma:Analysis_Algo_2} holds.
\begin{mylem} \label{Lemma:Analysis_Algo_2}
Running time and space requirement of Algorithm \ref{Algo:Enum} is of $\mathcal{O}(n^{4}.2^{2n}.f_{max}+n^{5}.2^{2n}+ 2^{n}. f_{max}^{n}.n^{3})$ and $\mathcal{O}(m+|\mathcal{C}_{T}^{I}|. f_{max}+n.f_{max}+ n.2^{n})$, respectively.
\end{mylem}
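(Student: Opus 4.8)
The plan is to obtain both bounds by a bottom-up accounting of Algorithm \ref{Algo:Enum}: first assign a cost to each individual instruction, then multiply per-iteration costs by the maximum number of times each enclosing loop can execute, working from the innermost loop outward, and finally substitute the worst-case bound $|\mathcal{C}_T| \leq 2^n$ to get closed-form expressions. The two preceding claims are what certify the loop-iteration counts: the claim on vertex addition guarantees that the permutation block enumerates exactly the Cartesian product of time intervals, and the size-indexing claim pins the number of outer \texttt{while}-iterations at $\mathcal{O}(n)$, one level per attainable clique size.

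For the running time I would first dispatch the setup (Lines 1--4): building the static graph $G$ costs $\mathcal{O}(m)$ since each temporal link induces one static edge, and constructing $\mathcal{D}$ from $\mathcal{C}_T^I$ costs $\mathcal{O}(|\mathcal{C}_T^I| \cdot f_{max})$. The crux is the innermost loop over permutations of $\mathcal{D}_{Temp}$ (Line 19): since each of the up to $n$ constituent cliques carries up to $f_{max}$ time intervals, the Cartesian product has $\mathcal{O}(f_{max}^n)$ samples, and each sample costs $\mathcal{O}(n)$ to intersect the intervals via Lines 22--27, giving $\mathcal{O}(f_{max}^n \cdot n)$ for that block. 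I would then bound the per-neighbour work in the \texttt{for} loop at Line 10: the membership test at Line 12 costs $\mathcal{O}(|\mathcal{C}_T| \cdot n^2)$, the combination-and-dictionary check at Line 17 costs $\mathcal{O}(n^3 \cdot |\mathcal{C}_T|)$, and summing in the permutation block yields $\mathcal{O}(n^2 \cdot |\mathcal{C}_T| \cdot f_{max} + n^3 \cdot |\mathcal{C}_T| + f_{max}^n \cdot n)$ per neighbour.

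Multiplying outward, the neighbour loop contributes a factor $\mathcal{O}(n)$, the inner \texttt{while} loop (Lines 7--42) a factor $\mathcal{O}(|\mathcal{C}_T|)$, and the outer \texttt{while} loop (Lines 5--48) a factor $\mathcal{O}(n)$, which after discarding dominated terms leaves the three surviving contributions $n^4 |\mathcal{C}_T|^2 f_{max}$, $n^5 |\mathcal{C}_T|^2$, and $|\mathcal{C}_T| f_{max}^n n^3$; substituting $|\mathcal{C}_T| \leq 2^n$ gives the stated time bound. For space I would simply sum the worst-case footprint of each persistent structure: $G$ needs $\mathcal{O}(m)$, $\mathcal{D}$ needs $\mathcal{O}(|\mathcal{C}_T^I| \cdot f_{max})$, $\mathcal{D}_{Temp}$ needs $\mathcal{O}(n \cdot f_{max})$, and the three clique lists $\mathcal{C}^{\mathcal{T}_1}, \mathcal{C}^{\mathcal{T}_2}, \mathcal{C}_T$ each hold up to $2^n$ cliques of size $\mathcal{O}(n)$, contributing $\mathcal{O}(n \cdot 2^n)$; absorbing lower-order terms yields the claimed $\mathcal{O}(m + |\mathcal{C}_T^I| \cdot f_{max} + n \cdot f_{max} + n \cdot 2^n)$.

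The main obstacle I expect is justifying the $\mathcal{O}(f_{max}^n)$ count at Line 19 honestly, i.e. arguing that the tuples of intervals the algorithm iterates over really are bounded by the product of the per-clique interval counts and that $f_{max}$ legitimately upper-bounds each count; once that combinatorial factor is pinned down, the remainder is routine bookkeeping of nested-loop multiplicities and maximisation over dominated terms. The only subtlety worth flagging is that the final substitution $|\mathcal{C}_T| \leq 2^n$ is a crude worst-case estimate, so the headline exponential bound is deliberately loose relative to the sparsity-dependent behaviour observed experimentally.
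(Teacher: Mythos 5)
Your proposal follows essentially the same bottom-up accounting as the paper: identical per-line costs (notably $\mathcal{O}(|\mathcal{C}_T|\cdot n^2)$ for Line 12, $\mathcal{O}(n^3\cdot|\mathcal{C}_T|)$ for Line 17, and $\mathcal{O}(f_{max}^n\cdot n)$ for the permutation block), the same loop multiplicities ($n$ for the neighbour loop, $|\mathcal{C}_T|$ for the inner \texttt{while}, $n$ for the outer \texttt{while}), the same final substitution $|\mathcal{C}_T|\leq 2^n$, and the same itemized space inventory. The argument and the resulting bounds match the paper's proof.
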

As mentioned previously, Algorithm \ref{Algo:Ini} and \ref{Algo:Enum} together constitute the proposed enumeration strategy for maximal $(\Delta,\gamma)$\mbox{-}Cliques of a temporal network. It has been shown in Lemma \ref{Lemma:ini_5} that the time requirement of Algorithm \ref{Algo:Ini} is of $\mathcal{O}(\gamma.m)$. Hence, total time requirement of the proposed methodology (i.e., Algorithm \ref{Algo:Ini} and \ref{Algo:Enum}) is of $\mathcal{O}(n^{4}.2^{2n}.f_{max}+n^{5}.2^{2n}+ 2^{n}. f_{max}^{n}.n^{3} + \gamma.m)$. As mentioned in Lemma  \ref{Lemma:Space:Algo1}, the space requirement is of $\mathcal{O}(n^{2}. f_{max})$. Hence, total space requirement of the proposed methodology is of $\mathcal{O}(m+|\mathcal{C}_{T}^{I}|. f_{max}+n.f_{max}+ n.2^{n}+n^{2}. f_{max})=\mathcal{O}(m+|\mathcal{C}_{T}^{I}|. f_{max}+ n.2^{n}+n^{2}. f_{max})$. Now, the Theorem \ref{Theorem:2} states regarding the time and space requirement of the proposed methodology. 

\begin{mythem} \label{Theorem:2}
The computational time and space requirement of the proposed methodology is of $\mathcal{O}(n^{4}.2^{2n}.f_{max}+n^{5}.2^{2n}+ 2^{n}. f_{max}^{n}.n^{3} + \gamma.m)$ and $\mathcal{O}(m+|\mathcal{C}_{T}^{I}|. f_{max}+ n.2^{n}+n^{2}. f_{max})$, respectively.
\end{mythem}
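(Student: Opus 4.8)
The plan is to obtain both bounds by aggregating the per-phase complexity results already established, since the proposed methodology simply executes Algorithm~\ref{Algo:Ini} (the stretching phase) followed by Algorithm~\ref{Algo:Enum} (the shrink-and-bulk phase) in sequence, with the initial clique set $\mathcal{C}_{T}^{I}$ serving as the bridge handed from the first phase to the second.

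For the running time, I would argue additively: because the two algorithms run one after the other, the total time is the sum of their individual running times. First I would invoke Lemma~\ref{Lemma:ini_5}, which bounds the running time of Algorithm~\ref{Algo:Ini} by $\mathcal{O}(\gamma m)$. Then I would invoke Lemma~\ref{Lemma:Analysis_Algo_2}, which bounds the running time of Algorithm~\ref{Algo:Enum} by $\mathcal{O}(n^{4}.2^{2n}.f_{max}+n^{5}.2^{2n}+ 2^{n}. f_{max}^{n}.n^{3})$. Summing the two contributions and absorbing the lower-order terms then yields the claimed time bound $\mathcal{O}(n^{4}.2^{2n}.f_{max}+n^{5}.2^{2n}+ 2^{n}. f_{max}^{n}.n^{3} + \gamma.m)$.

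For the space, I would take the space of Algorithm~\ref{Algo:Enum} from Lemma~\ref{Lemma:Analysis_Algo_2}, namely $\mathcal{O}(m+|\mathcal{C}_{T}^{I}|. f_{max}+n.f_{max}+ n.2^{n})$, and add the $\mathcal{O}(n^{2}. f_{max})$ term contributed by Algorithm~\ref{Algo:Ini} (Lemma~\ref{Lemma:Space:Algo1}). After observing that the $n.f_{max}$ term is dominated by $n^{2}.f_{max}$, this collapses to the claimed bound $\mathcal{O}(m+|\mathcal{C}_{T}^{I}|. f_{max}+ n.2^{n}+n^{2}. f_{max})$.

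The step deserving the most care is the space accounting, because one must justify why the two phases' footprints are \emph{summed} rather than treated as reusable and maximized over. The resolution is that $\mathcal{C}_{T}^{I}$ is the explicit input passed from Algorithm~\ref{Algo:Ini} into Algorithm~\ref{Algo:Enum}, so it cannot be freed before the enumeration phase begins; all the other first-phase structures ($\mathcal{D}_{e}$, $\mathcal{T}_{(uv)}$, and $Temp$) may be released, leaving only the $n^{2}.f_{max}$ term to be carried forward and combined with the second-phase requirement. Once this point is settled, the remainder is a routine asymptotic simplification, so I do not anticipate any genuine obstacle beyond bookkeeping.
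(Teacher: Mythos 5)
Your proposal is correct and follows essentially the same route as the paper: the paper likewise obtains the time bound by summing the $\mathcal{O}(\gamma m)$ bound of Lemma~\ref{Lemma:ini_5} with the bound of Lemma~\ref{Lemma:Analysis_Algo_2}, and the space bound by combining Lemma~\ref{Lemma:Space:Algo1} with Lemma~\ref{Lemma:Analysis_Algo_2} and absorbing the $n.f_{max}$ term into $n^{2}.f_{max}$. Your added remark on why the two phases' space footprints are summed (because $\mathcal{C}_{T}^{I}$ must persist into the second phase) is a reasonable clarification the paper leaves implicit.
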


\section{Experimental Evaluation} \label{Sec:Experiment}
In this section, we present the experimental evaluation of the proposed methodology and compare its efficacy with the existing methods from the literature. Initially, we briefly outline the background of the used datasets, followed by the objectives, comparing algorithm description, and result discussion.
\subsection{Description of the Datasets} 
In our experiments, we have used the following datasets:
\begin{itemize}
\item \textbf{Hypertext 2009 dynamic contact network (Hypertext)} \cite{isella2011s}: This dataset was collected during the ACM Hypertext 2009 conference, where the SocioPatterns project deployed the Live Social Semantics application. Conference attendees volunteered to wear radio badges that monitored their face-to-face proximity. The dataset published here represents the dynamical network of face-to-face proximity of ~110 conference attendees over about 2.5 days.
%\item \textbf{Haggle} \cite{chaintreau2007impact}: This dataset contains an undirected network formed by a group of people carried with wireless devices. A node represent a person and there will be a link between two users at a given time if they came within a threshold distance.
\item \textbf{College Message Temporal Network (College Message)}  \cite{panzarasa2009patterns}: This dataset contains the interaction information among a group of students from University of California, Irvine. It contains sequence of tuples of the form $(u,v,t)$, which signifies that the students $u$ and $v$ interacted with a private message at time $t$.
\item \textbf{Bitcoin OTC Trust Weighted Signed Network (Bitcoin)} \footnote{\url{https://snap.stanford.edu/data/soc-sign-bitcoin-otc.html}} \cite{kumar2016edge, kumar2018rev2}: This is who-trusts-whom network of people who trade using Bitcoin on a platform called \emph{Bitcoin OTC}. Since Bitcoin users are anonymous, there is a need to maintain a record of users' reputation to prevent transactions with fraudulent and risky users. Members of Bitcoin OTC rate other members in a scale of -10 (total distrust) to +10 (total trust) in steps of 1. This is a weighted, signed, and directed network. However, as per our requirement, we do not consider the direction
\item \textbf{Infectious SocioPatterns Dynamic Contact Network I \& II (Infectious I (69) \& II (old))} \cite{isella2011s}: This dataset contains the daily dynamic contact networks collected during the Infectious SocioPatterns event that took place at the Science Gallery in Dublin, Ireland, during the artscience exhibition INFECTIOUS: STAY AWAY. This dataset contains set of tuples of the form $(t,u,v)$, where $u$ and $v$ are the anonymous ids of the person who are in contact for at least $20$ seconds.
\end{itemize}
As the name of the datasets are a bit lengthy, hence through out the rest of this paper, we refer to them by their abbreviated names as mentioned in the bracket. Basic statistics of the datasets are given in Table \ref{Tab:Data_Stat}.

\begin{table}[H]
\centering
\caption{Basic statistics of the datasets (with increasing order of number of nodes)}
\label{Tab:Data_Stat}
    \begin{tabular}{ | p{2.5 cm} | p{1cm} | p{1cm} | p{2cm} | p{2cm} |}
    \hline
    Datasets & \#Nodes & \#Links & \#Static Edges & Lifetime/Total Duration \\ \hline
    Hypertext & 113 & 20818 & 2196 & 2.5 Days \\ \hline
    %Haggle & 274 & 28244 & 2899 & 4 Days \\ \hline
    Infectious II (old) & 410 & 17298 & 2765 & 8 Hours \\ \hline
    College Message & 1899 & 59835 & 20296 & 193 Days \\ \hline
    Bitcoin & 5881 & 35592 & 21492 & 5.21 Years \\ \hline
    Infectious I (69) & 10972 & 415843 & 44516 & 80 Days \\ 
     
    \hline
    \end{tabular}
\end{table}

\subsection{Setup of Our Experimentation}
This sub section reports the setup of our experimentation. The only parameters involved in our study are $\Delta$ and $\gamma$. For analyzing a temporal network datasets, one intuitive question will be just to find out the frequently connected groups for a given time duration, which is comparable with the lifetime of the network. For this reason, we select the $\Delta$ value based on the network lifetime only. For the `Infectious II (old)' dataset, we start with the $\Delta$ value of $1$ minute keep on increasing it by $1$ minute till it reaches to $10$ minute. Whereas it is increased in multiplicative order of 10 starting from 1 and 2 minutes to 100 and 200 minutes in the `Infectious (69)' dataset, due to its larger lifetime. The same is followed in `Bitcoin' as well. For the `Hypertext' dataset, we start with a $\Delta$ value of $60$ second and keep on increasing it by $60$ second till we reach to $600$ seconds and then considers $\Delta$ as $1800$ seconds, $3600$ seconds, and $7200$ seconds. For the `College Message' dataset, we choose the $\Delta$ value as $1$, $12$, $64$, $72$, $168$ hours. 
\par For $\Delta$ Clique enumeration in all the datasets, we have to set $\gamma$ value as $1$. Now, for enumerating $(\Delta, \gamma)$\mbox{-}Clique, in case of the  `Infectious II(old)', we start with the $\gamma$ value as $2$, keep on increasing it by $1$ till the maximal clique set becomes empty. In case of `Infectious69' dataset for initial $\Delta$ values (e.g., $60$, $120$) we start $\gamma$ value is chosen similarly with that of the `Infectious II(old)' dataset. However, for larger $\Delta$ values (e.g., $6000$, $12000$), we start with a $\gamma$ value of $5$, and then $10$; next incremented by $10$ till it  reaches $30$, and subsequently incremented by $30$ till it reaches $330$. For the `Bitcoin' dataset, for every $\Delta$ values, if we increase the $\gamma$ value beyond $2$, the maximal clique set becomes null. This can be explained by observing the no. of links per no. of static edges ratio, which is very small compared to the lifespan of the temporal network. Hence, we do not provide the plots in Figures \ref{Fig:results} and \ref{Fig:results_TS}. In case of `College Message' dataset, as the chosen $\Delta$ value is larger, hence the $\gamma$ value is incremented by $5$ till it goes to $20$ and then by $10$ till the maximal clique set becomes empty. 
%\par Next, we mention the aims and objectives of the experimentation.
\subsection{Aims and Objectives of the Experiment}
The goals of the experiments are $5$-folds.
\begin{enumerate}
\item With the change of $\Delta$ and $\gamma$, how the count of maximal cliques changes?
\item With the change of $\Delta$ and $\gamma$, how the highest cardinality among the vertex subsets of the maximal cliques changes?
\item With the change of $\Delta$ and $\gamma$, how the maximum duration of the contact changes?
\item From the computational perspective, with the change of $\Delta$ and $\gamma$, how computational time and space requirement change?
\item As mentioned previously, with $\gamma=1$ we can use the proposed methodology to enumerate $\Delta$ Cliques as well. Hence, our another experimental goal is to repeat all the previous $4$ objectives in the context of $\Delta$ Clique enumeration as well.
\end{enumerate}
\subsection{Algorithms Compared}
In our experiments, we compare the performance of the proposed methodology with the following methods from the literature.
\begin{itemize}
\item \textbf{Virad et al.'s Method} \cite{viard2016computing}: This is the first method proposed to enumerate maximal $\Delta$\mbox{-}Clique of a temporal network.
\item \textbf{Himmal et al.'s Method} \cite{himmel2017adapting}: This method incorporates the famous Born\mbox{-}Kerbosch Algorithm to improve the Virad et al.'s Method.
\item \textbf{Banerjee et al.'s Methods} \cite{DBLP:conf/comad/BanerjeeP19}: This is the existing maximal $(\Delta, \gamma)$\mbox{-}Clique proposed by us in one of our previous studies.
\end{itemize}
We obtain the source code of the first two methodologies as implemented by the respective authors. The proposed methodology is developed in Python 3.4 along with NetworkX 2.0. All the experiments have been carried out on a high performance computing cluster having $5$ nodes, and each of them having $40$ cores and $160$ GB of RAM. Implementations of the algorithms are available at \url{ https://github.com/BITHIKA1992/Delta-Gamma-Clique}.
\subsection{Experimental Results with Discussions}
Here, the experimental results are reported and discussed in detail. First, we focus on $\Delta$\mbox{-}Clique, which is equivalent to $(\Delta, \gamma)$\mbox{-}Clique with $\gamma=1$. The results have been given in Table \ref{Tab:Delta_Data}, \ref{Tab:Delta_Time}, and \ref{Tab:Delta_Space}.
\par Fixing $\gamma=1$, if we keep on increasing $\Delta$ value it is natural the maximum duration among the maximal cliques \footnote{In the rest of the part in this section, unless mentioned maximal clique means maximal $(\Delta, \gamma)$\mbox{-}Clique} will also be increasing. The reason behind this is that with the increase of $\Delta$ value, it is more likely that clique vertices will maintain at least one link for longer duration. Hence, for all the datasets, it has been observed that with the increase of $\Delta$, the maximum duration is also increasing. It is also important to observe that, with the increase of maximum duration for any one of the maximal cliques it may happen that not all the clique vertices will have at least one link in each $\Delta$ duration. In that case one maximal clique will be splitted into two or more cliques. Another possibility is that for a particular $\Delta$ value there are many maximal cliques having only two vertices. Now, if the $\Delta$ value is increased further, then there is a chance that this cliques will be obsolete and these may cause in decreasing the number of maximal cliques. Here, we highlight few results from Table \ref{Tab:Delta_Data}. It can be observed that when the $\Delta$ value has been increased from $3600$ to $43200$ for the College Message dataset, maximum duration is drastically increased from $21761$ to $403018$, however the number of maximal cliques reduced from $33933$ to $25635$. On the other hand for the same dataset when the $\Delta$ value has been incremented from $259200$ to $604800$, maximum duration is also changes from $2322612$ to $6334253$, however, in this case the number of maximal cliques is  increased from $21019$ to $21658$.  
\par Regarding the time and space requirement, it can be observed that for the College Message, Bitcoin, Infectious I (69), Infectious II (old)  dataset the proposed methodology is the fastest one compared to the existing methods. As an example, it can be observed from Table \ref{Tab:Delta_Time} that for $\Delta=604800$, the running time of the proposed methodology is $1.19$ seconds, whereas the same for the method proposed by Himel et al. \cite{himmel2017adapting} and Virad et al. \cite{viard2016computing} is $25.86$ and $133.53$ seconds, respectively. However, the running time of the proposed methodology is more in the  Hypertext dataset. This is due to the density of the dataset and this can be verified from Table \ref{Tab:Data_Stat}. In terms of space requirement, the proposed methodology is almost equivalent with that of the proposed by Himmel et al. \cite{himmel2016enumerating}. Other than the `Bitcoin' dataset, the space requirement of the Viard et al.'s \cite{viard2016computing} methodology is always more than both the proposed as well as the Himmel et al.'s \cite{himmel2017adapting} method. One important point is to observe from Table \ref{Tab:Delta_Space} is that, for all the datasets, in case of both the proposed and Himmel et al.'s \cite{himmel2017adapting} methodologies, with the increase of $\Delta$ value, space requirement does not increases much. In case of Virad et al.'s \cite{viard2016computing} method, computation starts with a link $(u,v,t)$ as a $\Delta$ Clique $(\{u,v\};[t_a,t_b])$, where $t_a=t_b=t$ and extending it by both vertex addition as well as time expansion. During this process, their method stores all the intermediate cliques and hence space requirement for this method is much higher compared to others. Here, we highlight few results from Table \ref{Tab:Delta_Space} as examples. For the `College Message' dataset, for $\Delta=604800$, the space requirement by Vired et al.'s \cite{viard2016computing} method, Himeal et al.'s \cite{himmel2017adapting} method and the proposed methodology are $2426$ MB, $127$ MB, and $138$ MB, respectively. However, for the `Bitcoin' dataset, for $\Delta=604800$ the space requirement for these methods are approximately $143$ MB, $366$ MB, and $161$ MB, respectively. Here, we want to highlight that for the Infectious I dataset with the $\Delta$ value as $6000$ and $12000$ both the  computational time and space requirement for Viard et al's \cite{viard2016computing} method too high, and hence we do not mention the results for this two cases.
\par Now, we proceed to describe the results for $(\Delta, \gamma)$\mbox{-}Clique. In Figure \ref{Fig:results}, we show the plots of how the number of maximal cliques, maximum duration and maximum cardinality are changing with the change in $\Delta$ and $\gamma$. It has been observed that for both the `College Message' and `Infectious II (old)' dataset for a fixed $\Delta$, if the $\gamma$ is increased the number of maximal cliques are decreasing. Recall that by the definition of $(\Delta, \gamma)$\mbox{-}Clique, if the value of $\gamma$ is more than $\Delta+1$, then certainly the maximal clique set will be empty. In our experiments, a supportive case has been found. For the `Infectious II (old)' dataset, it has been observed that when the $\Delta$ value is $300$, the $\gamma$ value can be increased till $16$ ($\frac{300}{20}+1=16$). Beyond that the maximal clique set becomes empty. For the `Infectious I (69)' dataset also we make similar observations. However, for initial $\Delta$ values such as $60$, $120$, $600$, $1200$ at the last $\gamma$ value (i.e., just before the maximal clique set becomes empty) the number of maximal cliques increased again. In both the `Hypertext' and `Infectious I (69)' dataset, we observe almost similar pattern. Now, we highlight few numerical results from our experiments. For the `Infectious I (69)' dataset for $\Delta=600$, if $\gamma$ changes from $25$ to $30$, the maximum cardinality drops down from $5$ from $4$. When the $\gamma$ value is further increased to $32$ the maximum cardinality comes down to $0$.
\par In all the datasets, it has been observed that maximum duration among the maximal cliques increases with the increase of $\Delta$ value. Also, the maximum duration decreases with the growth of the clique cardinality. For a fixed $\Delta$ value, the gradual change in $\gamma$ leads to lesser maximum duration. Regarding the maximum cardinality, all the datasets exhibit similar pattern. For a fixed $\Delta$ with a gradual change in $\gamma$ and for a fixed $\gamma$ with a  change in $\Delta$, the maximum clique cardinality decreases and increases, respectively. One major dataset specific observation is that, for fixed-small gamma, the change in maximal clique count is exponential with the increase of $\Delta$ in `Hypertext' and `Infectious I' dataset. Whereas the same is linear in `Infectious II' and `College Message' dataset. However, the increase in $\Delta$ is also exponential in `Infectious I'. This special effect of `Hypertext' can be answered by looking into the plot for maximum cardinality and maximum duration in Figure \ref{Fig:results}. It clearly indicates there exist a certain number of users which communicate very densely, resulting almost no changes in the maximum statistics w.r.t $\Delta$ and $\gamma$. Where the rest of the users follow a sparse communication and do not participate in $(\Delta, \gamma)$-clique formation.
\par Figure \ref{Fig:results_TS} shows the plots for change in computational time and space requirement with the change in $\Delta$ and $\gamma$. Regarding time and space requirement for $(\Delta, \gamma)$ Clique enumeration, the following observations are made. For all the datasets, there is a similarity between the time, space requirement and number of maximal cliques. In general, it has been observed that for a fixed $\Delta$, with the gradual change in $\gamma$, the computational time and space requirement for both the proposed as well as Himel et al.'s method \cite{himmel2017adapting} decreases sequentially, as the number of maximal cliques decreases. There are exceptions also. As an example, for the `Hypertext' dataset, for $\Delta=60$, when $\gamma$ value is increased from $2$ to $3$, the number of maximal cliques has been dropped from 4319 to 3378. However, the computational time and space requirement for the Himel et al.'s method $\Delta=60$ and $\gamma=2$ are $24.26$ sec. and $517.91$ MB., respectively. However, the same with $\gamma=3$ are $25.7$ sec. and $580.24$, respectively. Typically, both the time and space requirement depends upon the intermediate clique. During the enumeration process, if the number of intermediate cliques are more then both the time and space requirement will also be more.
\par From our experiments we can conclude both $\Delta$ and $(\Delta, \gamma)$ Clique enumeration, if the input dataset is sparse then the proposed methodology is better than the   

\begin{table}[H]
\centering
 \caption{Number of Maximal $\Delta$\mbox{-}Cliques, Maximum Duration, and Maximum Cardinality   Enumeration for different datasets}
    \label{Tab:Delta_Data}
%\resizebox{0.72 \textwidth}{!}{ 
\begin{tabular}{ | c | c | p{2 cm} | p{2 cm} | p{2 cm} | }
    \hline
    \multirow{ 2}{*}{\textbf{Dataset}} & \multirow{ 2}{*}{$\Delta$} &\multicolumn{3}{|c|}{\textbf{Algorithm}}\\
    \cline{3-5} 
     &  & \textbf{\# Maximal Cliques} & \textbf{Maximum Cardinality} & \textbf{Maximum Duration}   \\ \hline
        \multirow{ 7}{*}{ \textbf{Hypertext}}  &    60 & 7897 & 7 & 7640 \\ \cline{2-5} 
   & 120 & 6859 & 7 & 8140 \\ \cline{2-5} 
   & 180 & 6453 & 7 & 11520 \\ \cline{2-5} 
   & 240 & 6232 & 7 & 11640 \\ \cline{2-5} 
   & 300 & 6106 & 7& 11760 \\ \cline{2-5} 
   & 360 & 6025 & 7 & 11880  \\ \cline{2-5} 
   & 420 & 5980 & 7& 12000 \\ \cline{2-5}
   & 480 & 5952 & 7 & 12120 \\  \cline{2-5}
   & 540 & 5930 & 7 & 17600 \\  \cline{2-5}
   & 600 & 5913 &  7& 17720 \\  \cline{2-5}
   & 1800 & 5966 & 7 & 31980 \\  \cline{2-5}
   & 3600 & 6473 &  7& 35580 \\  \cline{2-5}
   & 7200 & 7727 &  7& 52020 \\  
    \hline

%    \multirow{ 7}{*}{ \textbf{Haggle}}  &    60 & 32889 & 6 & 237 \\ \cline{2-5} 
%   & 120 & 41868 & 7 & 580 \\ \cline{2-5} 
%  & 180 & 57085 & 10 & 854 \\ \cline{2-5} 
%   & 240 & 74398 & 11 & 1863 \\ \cline{2-5} 
%   & 300 & 96279 & 12 & 2655 \\ \cline{2-5} 
%   & 360 & 112751 & 13 &  4131 \\ \cline{2-5} 
%   & 420 & 136718 & 15 & 4480 \\ \cline{2-5}
%   & 480 & 141089 & 16 &  \\ \cline{2-5}
%   & 540 & 154774 & 18 &  \\ \cline{2-5}
%   & 600 & 136450 & 19 &  \\
%    \hline   

  \multirow{ 5}{*}{\textbf{College Message}}  
    &3600 & 33933  & 4   &  21761   \\ \cline{2-5} 
    &43200 & 25635 &  5  &  403018  \\ \cline{2-5} 
    &88640 & 22701 &  5 &  896134 \\ \cline{2-5} 
    &259200 & 21019 &  5  & 2322612  \\ \cline{2-5}
    &604800 & 21658 &   6 &  6334253 \\ 
    \hline

      \multirow{ 7}{*}{\textbf{Bitcoin}}   &   60 & 32144  & 3 & 180  \\ \cline{2-5} 
    &600& 27572  & 4 & 1800   \\ \cline{2-5} 
    &6000 & 26381  & 8 & 17986  \\ \cline{2-5} 
    &60000 & 26071 & 8 & 179640   \\ \cline{2-5} 
    &3600 & 26577 &  7 & 10791   \\  \cline{2-5} 
    &43200 & 26091 & 8  & 129422  \\  \cline{2-5}
    &88640 & 25970 & 8 &  265798  \\ \cline{2-5}
    &259200 & 26290 & 8 &  777572   \\   \cline{2-5}
    &604800 & 27149 & 8 & 1814344   \\     
   \hline

    \multirow{ 7}{*}{ \textbf{Infectious I (69)}}  &   60 & 161066 & 6 &  3760 \\ \cline{2-5} 
   & 120 & 138662 & 7 & 5180 \\ \cline{2-5} 
   &600 & 128392 & 10 & 11200 \\ \cline{2-5} 
   & 1200 & 139684 & 13 & 12400 \\ \cline{2-5} 
   & 6000 & 152121 & 16 & 22740 \\ \cline{2-5} 
   & 12000 & 152198 & 16 & 34740
 \\ 
    \hline

    \multirow{ 7}{*}{ \textbf{Infectious II (old)}}  &    60 & 9776 & 5 & 1860 \\ \cline{2-5} 
   & 120 & 9397 & 6 & 2900 \\ \cline{2-5} 
   & 180 & 9565 & 7 & 4280 \\ \cline{2-5} 
   & 240 & 9849 & 7 & 5160 \\ \cline{2-5} 
   & 300 & 10192 & 8 & 5280 \\ \cline{2-5} 
   & 360 & 10734 & 8 &  6480 \\ \cline{2-5} 
   & 420 & 11287 & 8 & 6600 \\ \cline{2-5}
   & 480 & 11571 & 9 & 8540 \\ \cline{2-5}
    & 540 & 11781 & 9 & 9580 \\ \cline{2-5}
     & 600 & 12123 & 10 & 9700 \\ 
    \hline
    \end{tabular}
%}
\end{table}

\begin{table}[H]

 \caption{Computational time requirement (in Secs.) for Maximal $\Delta$\mbox{-}clique ($(\Delta, \gamma)$\mbox{-}clique with $\gamma=1$) Enumeration for different datasets}
    \label{Tab:Delta_Time}
%\resizebox{0.72 \textwidth}{!}{ 
\begin{tabular}{ | c | c | c | c | c | }
    \hline
    \multirow{ 2}{*}{\textbf{Dataset}} & \multirow{ 2}{*}{$\Delta$} &\multicolumn{3}{|c|}{\textbf{Algorithm}}\\
    \cline{3-5} 
     &  & \textbf{Viard et al. \cite{viard2016computing}} & \textbf{Himmel et al. \cite{himmel2017adapting}} & \textbf{Proposed}   \\ \hline
        \multirow{ 7}{*}{ \textbf{Hypertext}}  &    60 & 16.02 & 6.14 & 53.9 \\ \cline{2-5} 
   & 120 & 18.11 & 4.48 & 26.14 \\ \cline{2-5} 
   & 180 & 20.17 & 3.8 & 16.35 \\ \cline{2-5} 
   & 240 & 21.73 & 3.73 & 11.76 \\ \cline{2-5} 
   & 300 & 23.11 & 3.62 & 9.73 \\ \cline{2-5} 
   & 360 & 24.02 & 3.31 & 8.55  \\ \cline{2-5} 
   & 420 & 25.54 & 3.41 & 7.76 \\ \cline{2-5}
   & 480 & 26.61 & 3.28 & 7.27 \\  \cline{2-5}
   & 540 & 28.61 & 3.23 & 6.6 \\  \cline{2-5}
   & 600 & 29.83 & 3.08 & 6.12 \\  \cline{2-5}
   & 1800 & 51.22 & 2.56 & 3.22 \\  \cline{2-5}
   & 3600 & 81.15 & 2.35 & 2.4 \\  \cline{2-5}
   & 7200 & 178.79 & 2.4 & 1.8 \\  
    \hline 
%    \multirow{ 7}{*}{ \textbf{Haggle}}  &    60 & 58.95 & 132.28 & 17249.28 \\ \cline{2-5} 
%   & 120 & 129.21 & 121.79 & 5940.86 \\ \cline{2-5} 
%  & 180 & 454.04 & 116.74 & 2964.28 \\ \cline{2-5} 
%   & 240 & 1777.06 & 120.81 & 1933.35 \\ \cline{2-5} 
%   & 300 & 5981.29 & 157.09 & 1603.76 \\ \cline{2-5} 
%   & 360 & 15679.59 & 262.94 & 2128.83  \\ \cline{2-5} 
%   & 420 & 43137.68 & 511.16 & 11862.56 \\ \cline{2-5}
%   & 480 &  & 1073.33 & 56397.09 \\ \cline{2-5}
%   & 540 &  & 2074.01 & x \\ \cline{2-5}
%   & 600 &  & 3498.75 & x \\
%    \hline   
    
  \multirow{ 5}{*}{\textbf{College Message}}  
    &3600 & 35.25  & 41   & 19.84    \\ \cline{2-5} 
    &43200 & 43.02 & 31.38   & 4.19   \\ \cline{2-5} 
    &88640 & 52.29 & 28.56  & 2.28   \\ \cline{2-5} 
    &259200 & 84.05  & 27.61   & 1.41  \\ \cline{2-5}
    &604800 & 133.53 & 25.86   &1.19   \\ 
    
    \hline
    
      \multirow{ 7}{*}{\textbf{Bitcoin}}   &   60 & 16.9  & 200.1 & 2.99  \\ \cline{2-5} 
    &600& 17.62  & 195.75 & 2.48   \\ \cline{2-5} 
    &6000 & 18.51  & 193.48 & 2.39  \\ \cline{2-5} 
    &60000 & 20.29 & 195.51 & 2.32   \\ \cline{2-5} 
    &3600 & 18.31 & 196.49  & 2.36   \\  \cline{2-5} 
    &43200 & 19.58 & 193.74  & 2.41  \\  \cline{2-5}
    &88640 & 20.69 & 190.93 & 2.3    \\ \cline{2-5}
    &259200 & 22.6 & 191.49 & 2.26    \\   \cline{2-5}
    &604800 & 29.69 & 193.52 & 2.34    \\     
   \hline
    \multirow{ 7}{*}{ \textbf{Infectious I (69)}}  &   60 & 274.13 & 1025.01 & 80.75 \\ \cline{2-5} 
   & 120 & 405.84 & 998.53 & 46.88 \\ \cline{2-5} 
   &600 & 2659.82 & 1043.46 & 30.87\\ \cline{2-5} 
   & 1200 & 9824.58 & 1062.69 & 84.13 \\ \cline{2-5} 
   & 6000 & NA & 1238.75 & 108.34 \\ \cline{2-5} 
   & 12000 & NA & 1266.8 & 108.03
 \\ 
    \hline
    \multirow{ 7}{*}{ \textbf{Infectious II (old)}}  &    60 & 10.77 & 4.71 & 4.41 \\ \cline{2-5} 
   & 120 & 17.36 & 4.12 & 2.39 \\ \cline{2-5} 
   & 180 & 26.41 & 4.04 & 1.7 \\ \cline{2-5} 
   & 240 & 37.74 & 3.97 & 1.81 \\ \cline{2-5} 
   & 300 & 51.8 & 4.13 & 1.76 \\ \cline{2-5} 
   & 360 & 68.27 & 4.34 & 1.88  \\ \cline{2-5} 
   & 420 & 92.06 & 4.68 & 2.21 \\ \cline{2-5}
   & 480 & 122.26 & 5.12 & 2.85 \\ \cline{2-5}
    & 540 & 159.73 & 5.45 & 2.75 \\ \cline{2-5}
     & 600 & 201.34 & 5.96 & 3.5 \\ 
    \hline
    \end{tabular}

%}
\end{table}

\begin{table}[H]

 \caption{Space requirement (in MB) for Maximal $\Delta$\mbox{-}clique ($(\Delta, \gamma)$\mbox{-}clique with $\gamma=1$) Enumeration for different datasets}
    \label{Tab:Delta_Space}
%\resizebox{0.72 \textwidth}{!}{ 
\begin{tabular}{ | c | c | c | c | c | }
    \hline
    \multirow{ 2}{*}{\textbf{Dataset}} & \multirow{ 2}{*}{$\Delta$} &\multicolumn{3}{|c|}{\textbf{Algorithm}}\\
    \cline{3-5} 
     &  & \textbf{Viard et al. \cite{viard2016computing}} & \textbf{Himmel et al. \cite{himmel2017adapting}} & \textbf{Proposed}   \\ \hline
        \multirow{ 7}{*}{ \textbf{Hypertext}}  &    60 & 208.414 & 104.05 & 108.99 \\ \cline{2-5} 
   & 120 & 220.8515 & 103.37 & 108.42 \\ \cline{2-5} 
   & 180 & 237.2461 & 103.23 & 108.3672 \\ \cline{2-5} 
   & 240 & 246.3867 & 103.01 & 108.2734 \\ \cline{2-5} 
   & 300 & 256.2031 & 102.9 & 108.3984 \\ \cline{2-5} 
   & 360 & 267.3906 & 102.85 & 108.4921  \\ \cline{2-5} 
   & 420 & 280.3828 & 102.8 & 108.4726 \\ \cline{2-5}
   & 480 & 290.6992 & 102.78 & 108.539 \\  \cline{2-5}
   & 540 & 307.7148 & 102.74 & 108.6132 \\  \cline{2-5}
   & 600 & 318.1289 & 102.72 & 108.6836 \\  \cline{2-5}
   & 1800 & 512.9609 & 102.7 & 109.8047 \\  \cline{2-5}
   & 3600 & 787.2031 & 103 & 111.836 \\  \cline{2-5}
   & 7200 & 1739.3164 & 103.78 & 116.1601 \\  
    \hline 
     
%    \multirow{ 7}{*}{ \textbf{Haggle}}  &    60 & 233.3398 & 123.3945 & 142.2851 \\ \cline{2-5} 
%   & 120 & 410.289 & 129.0273 & 175.8203 \\ \cline{2-5} 
%  & 180 & 1165.8359 & 141.1718 & 291.1289 \\ \cline{2-5} 
%   & 240 & 4260.4609 & 156.3672 & 561.125 \\ \cline{2-5} 
%   & 300 & 13803.1796 & 179.8242 & 1177.8789 \\ \cline{2-5} 
%   & 360 & 51469.7656 & 200.3125 &  2722.2304 \\ \cline{2-5} 
%   & 420 &  & 228.6796 & 5829 \\ \cline{2-5}
%   & 480 &  & 239.3515 &  \\ \cline{2-5}
%   & 540 &  & 258.1562 & \\ \cline{2-5}
%   & 600 &  & 240.3125 &  \\
%    \hline       
  \multirow{ 5}{*}{\textbf{College Message}}  
    &3600 & 371.3281 & 139.3593  &  147.6992   \\ \cline{2-5} 
    &43200 & 545.2851 &  133.039  & 141.7187   \\ \cline{2-5} 
    &88640 & 726.289 & 130.9336 &  139.9922  \\ \cline{2-5} 
    &259200 &  1280.8828 & 127.4179   & 135.2226  \\ \cline{2-5}
    &604800 & 2426.1211  &  127.4531  & 138.7031  \\ 
    
    \hline

      \multirow{ 7}{*}{\textbf{Bitcoin}}   &   60 & 147.12  & 200.61 &  159.4 \\ \cline{2-5} 
    &600& 143.85  & 210.43 &  156.08  \\ \cline{2-5} 
    &6000 & 143.01  & 222.72 & 156.9  \\ \cline{2-5} 
    &60000 & 142.88 & 240.96 &  157.83  \\ \cline{2-5} 
    &3600 & 142.97 &  221.11 & 156.39  \\  \cline{2-5} 
    &43200 & 142.71 & 234.76  & 157.39 \\  \cline{2-5}
    &88640 & 142.57 & 249.75 &  158.06   \\ \cline{2-5}
    &259200 & 142.73 & 287.98 &  159.32   \\   \cline{2-5}
    &604800 & 143.39 & 366.88 & 161.57    \\     
   \hline

    \multirow{ 7}{*}{ \textbf{Infectious I (69)}}  &   60 & 2590.3007 & 285.1953 & 356.5664 \\ \cline{2-5} 
   & 120 & 3914.3789 & 265.9257 & 353.2187 \\ \cline{2-5} 
   &600 & 18116.6328 & 261.2109 & 522.9687 \\ \cline{2-5} 
   & 1200 & 72627.1015 & 277.9687 & 1016.875 \\ \cline{2-5} 
   & 6000 & NA & 292.6367 & 3075.8906 \\ \cline{2-5} 
   & 12000 & NA & 292.6992 & 3096.7539
 \\ 
    \hline

    \multirow{ 7}{*}{ \textbf{Infectious II (old)}}  &    60 & 196.4843 & 106.2383 & 111.0976 \\ \cline{2-5} 
   & 120 & 265.6796 & 105.8007 & 112.2539 \\ \cline{2-5} 
   & 180 & 338.6093 & 105.8242 & 113.7656 \\ \cline{2-5} 
   & 240 & 428.7656 & 105.9609 & 115.3945 \\ \cline{2-5} 
   & 300 & 540.0898 & 106.1875 & 117.7031 \\ \cline{2-5} 
   & 360 & 660.4726 & 106.6562 & 120.957  \\ \cline{2-5} 
   & 420 & 815.0351 & 107.1406 & 124.4296 \\ \cline{2-5}
   & 480 & 1012.617 & 107.5234 & 129.9687 \\ \cline{2-5}
    & 540 & 1244.976 & 107.7695 & 133.1758 \\ \cline{2-5}
     & 600 & 1498.25 & 108.1679 & 137.4414 \\ 
    \hline

    \end{tabular}

%}
\end{table}

\begin{figure*}[!htbp]
	\centering
	\begin{tabular}{ccc}
		\textbf{\underline{Clique Count}} & \textbf{\underline{Maximum Cardinality}} & \textbf{\underline{Maximum Duration}} \\
		\includegraphics[width=5cm,height=4.5cm]{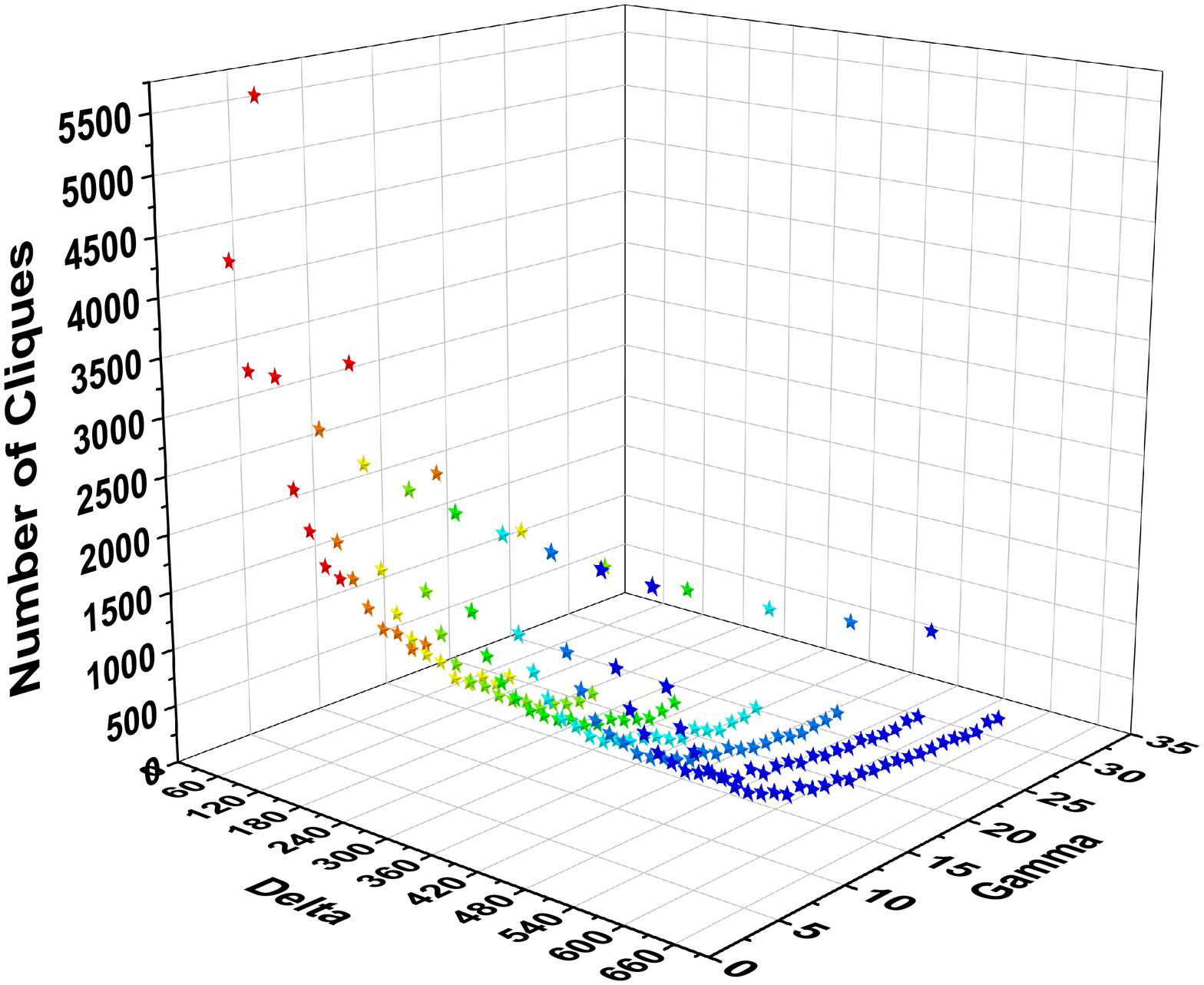} 
		&
		\includegraphics[width=5cm,height=4.5cm]{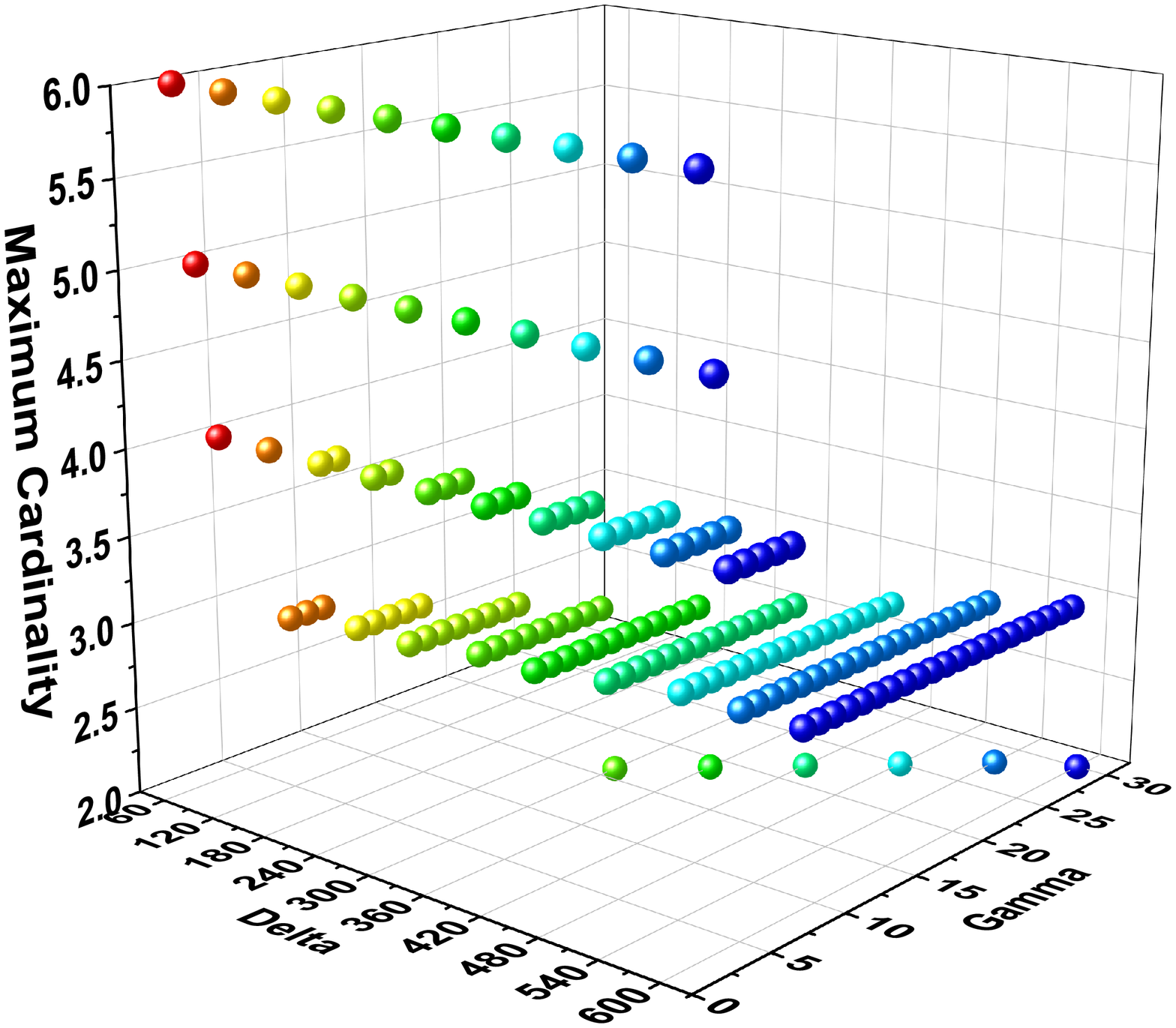}
		&
		\includegraphics[width=5cm,height=4.5cm]{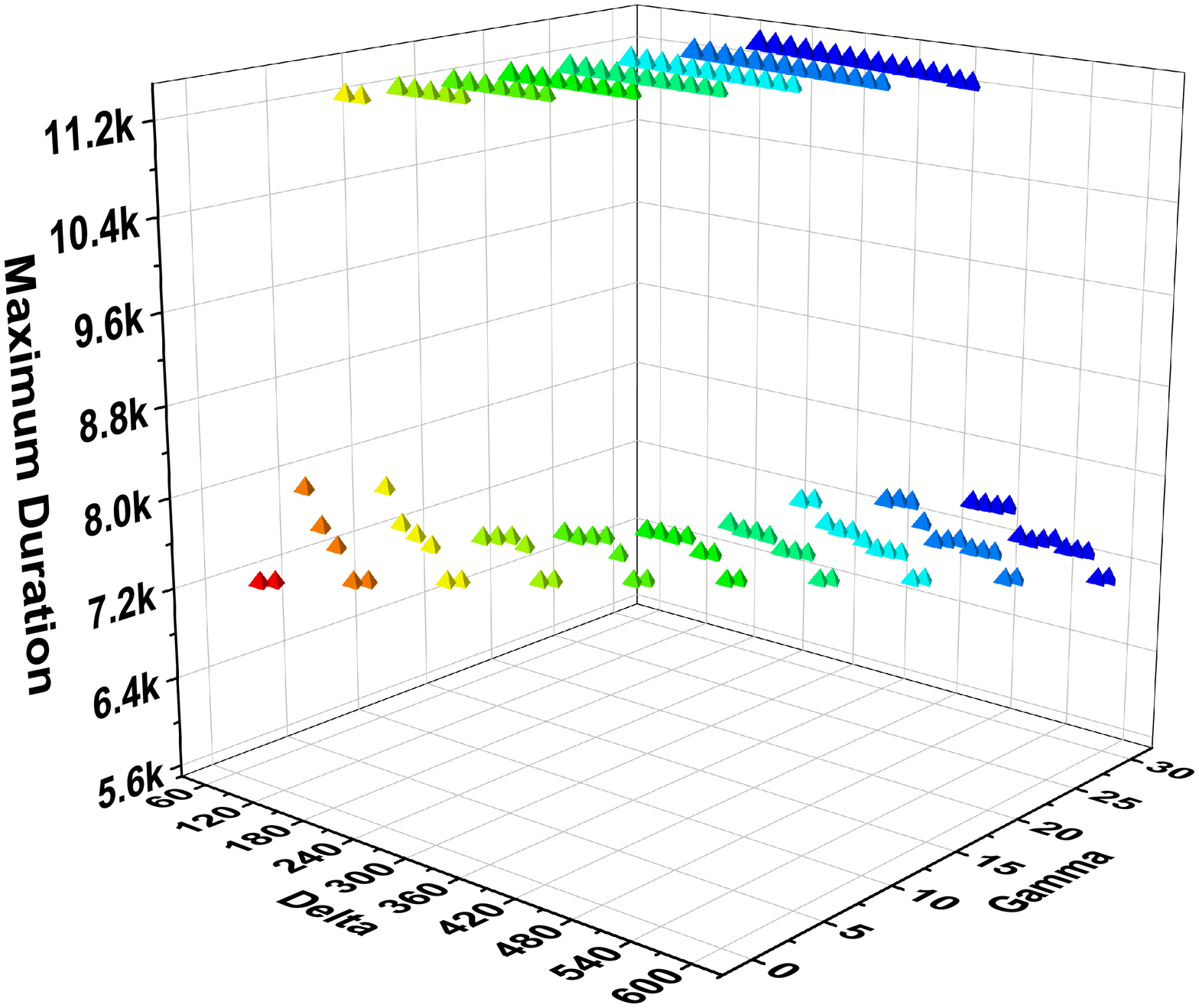}
		\\
		
		& (a) Hypertext & \\
		%
		% \includegraphics[width=5cm,height=4.0cm]{haggle_clique_count_plot.png} 
		% &
		% \includegraphics[width=5cm,height=4.0cm]{Haggle_Maximum_Cardinality} 
		% &
		% \includegraphics[width=5cm,height=4.0cm]{Haggle_Maximum_Duration}
		% \\
		%& (b) Haggle Dataset & \\
		
		\includegraphics[width=5cm,height=4.0cm]{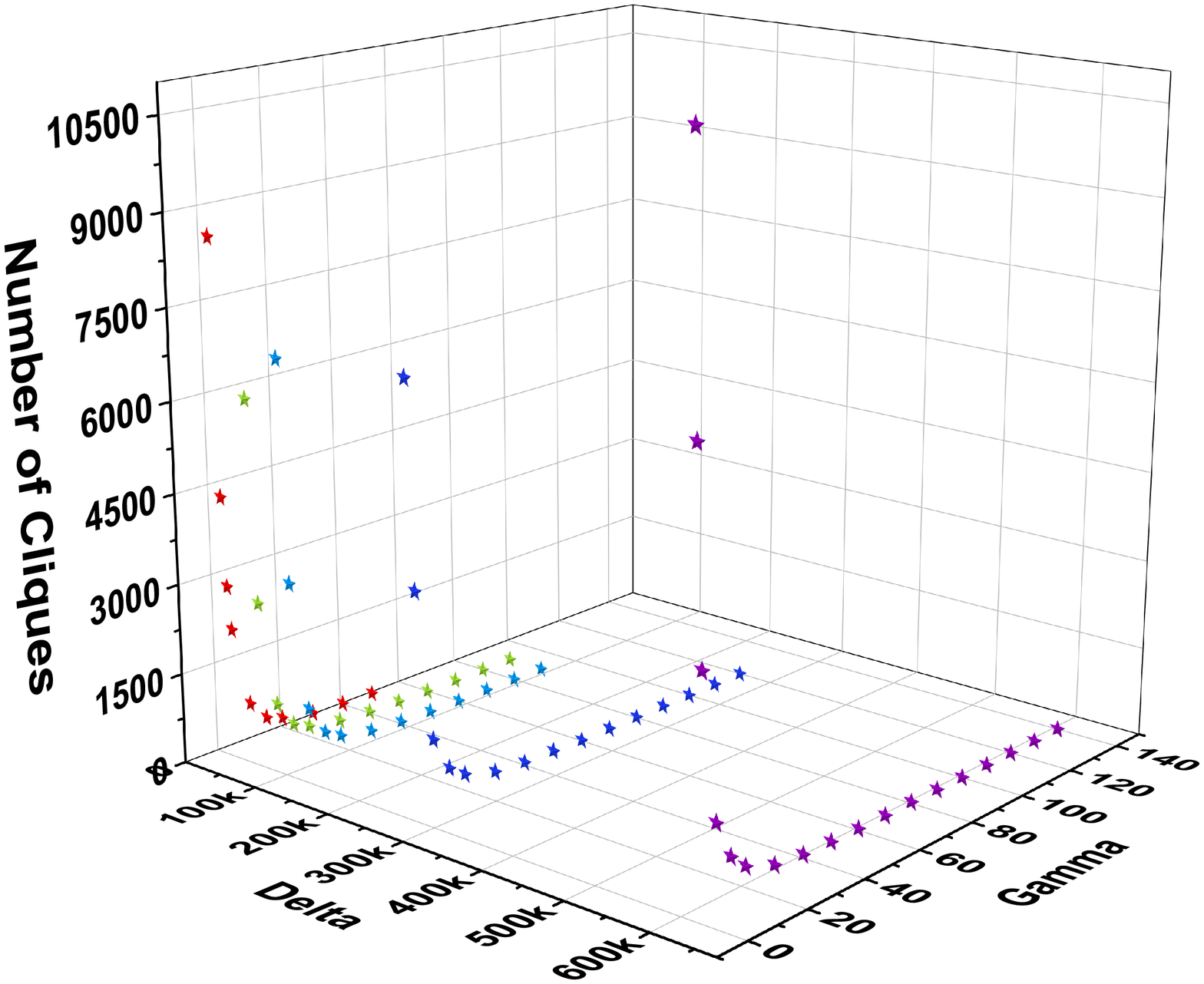} 
		&
		\includegraphics[width=5cm,height=4.0cm]{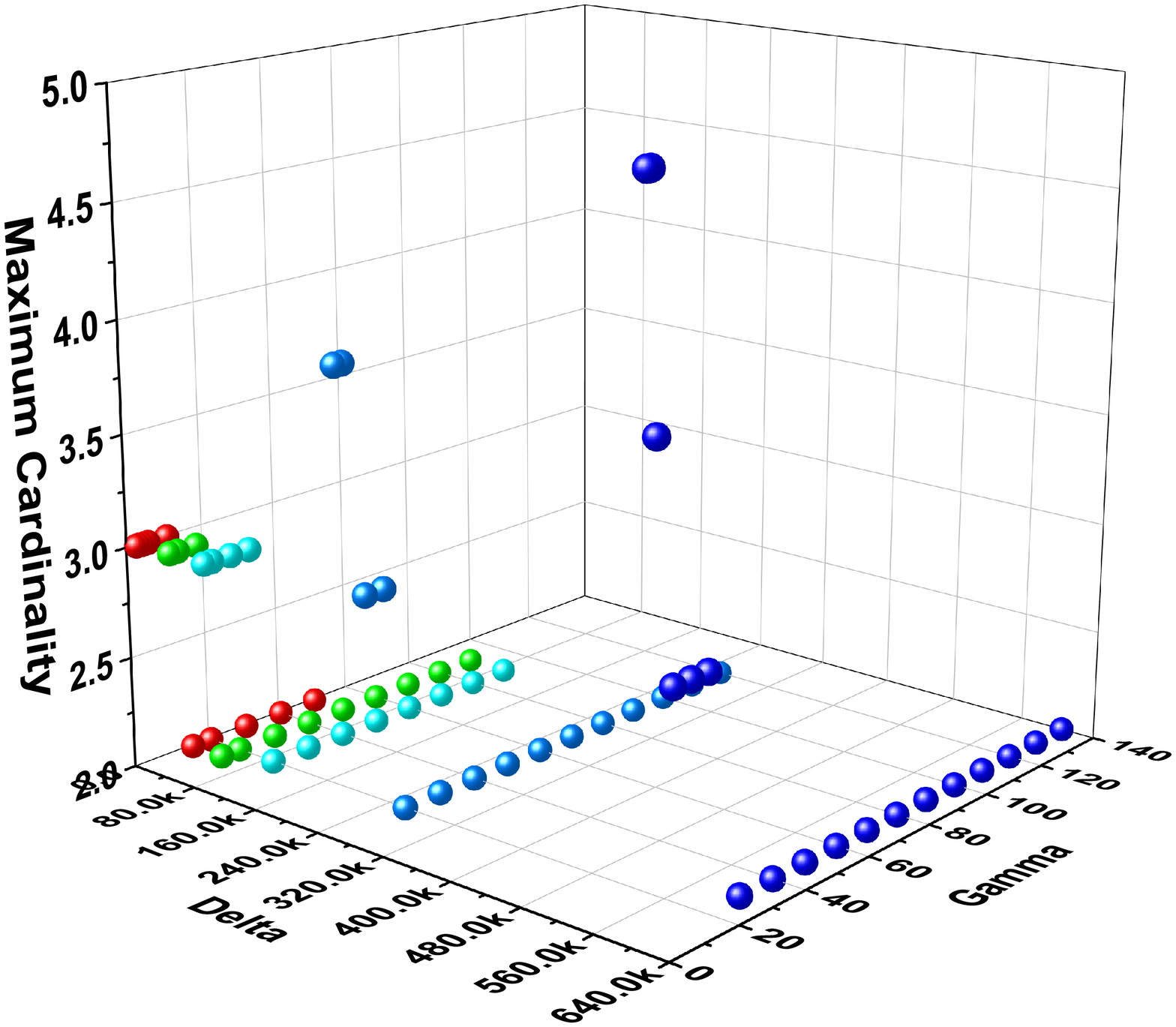} 
		&
		\includegraphics[width=5cm,height=4.0cm]{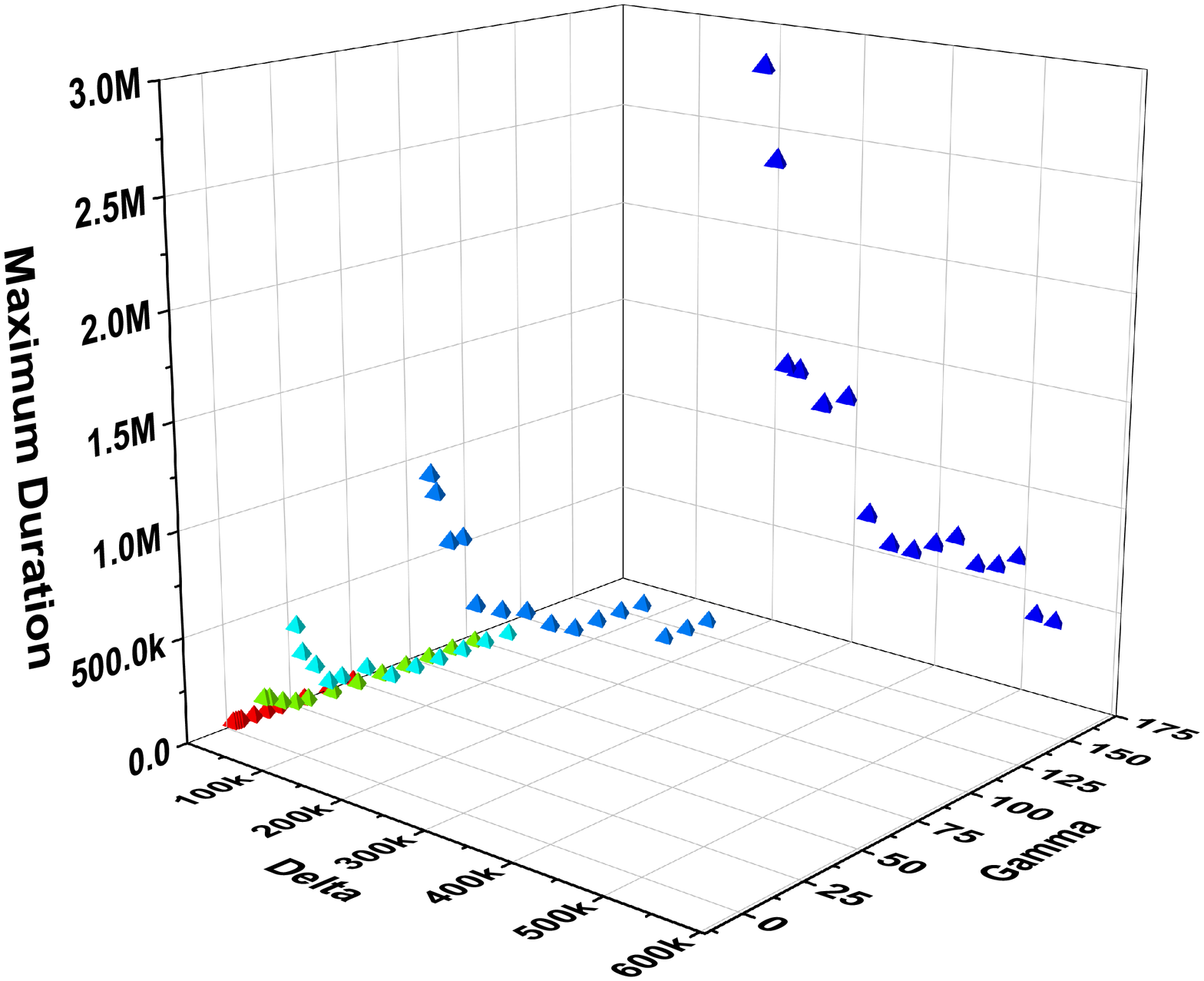} 
		\\
		& (b) College Message &\\
		\includegraphics[width=5cm,height=4.0cm]{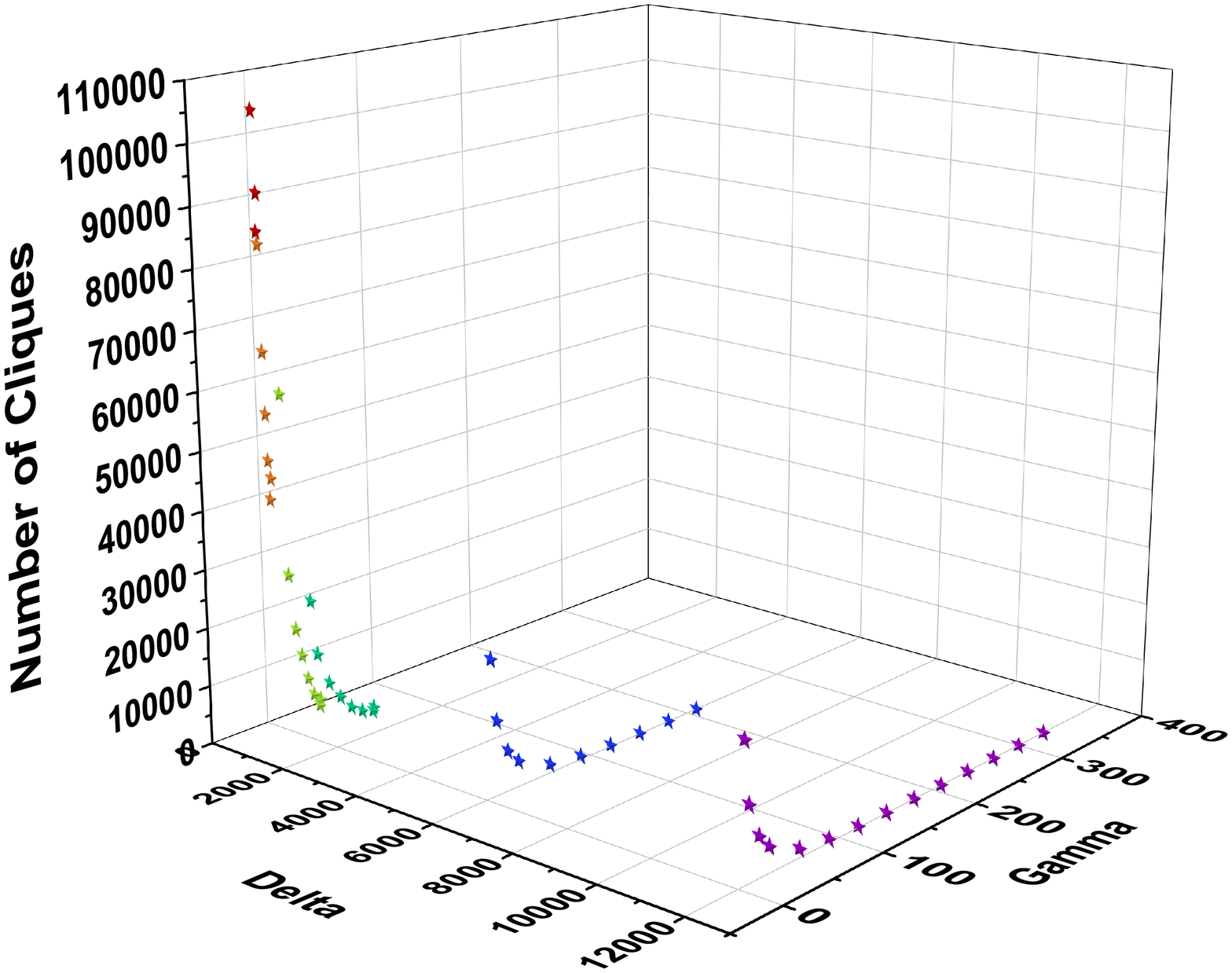} 
		&
		\includegraphics[width=5cm,height=4.0cm]{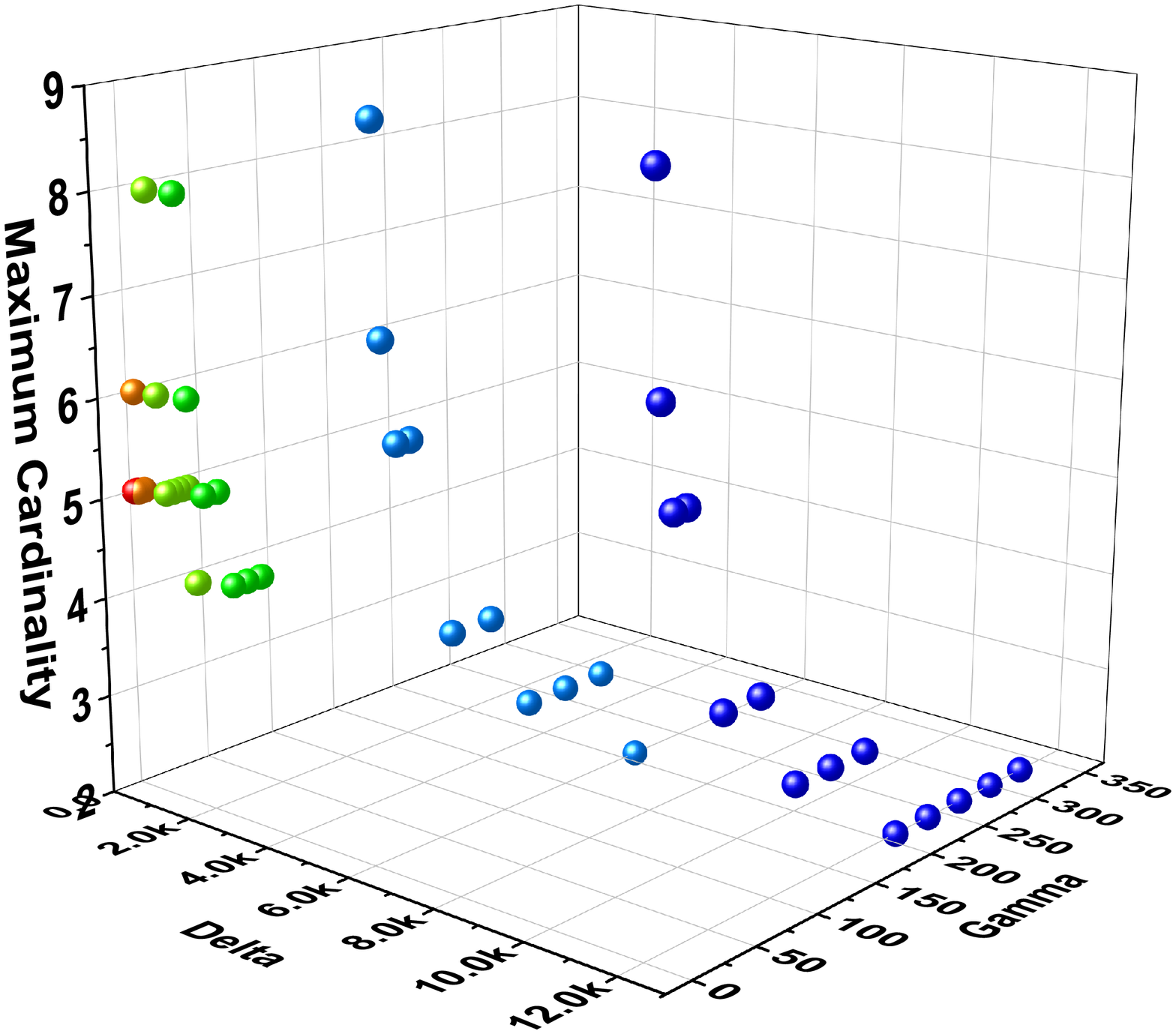} 
		&
		\includegraphics[width=5cm,height=4.0cm]{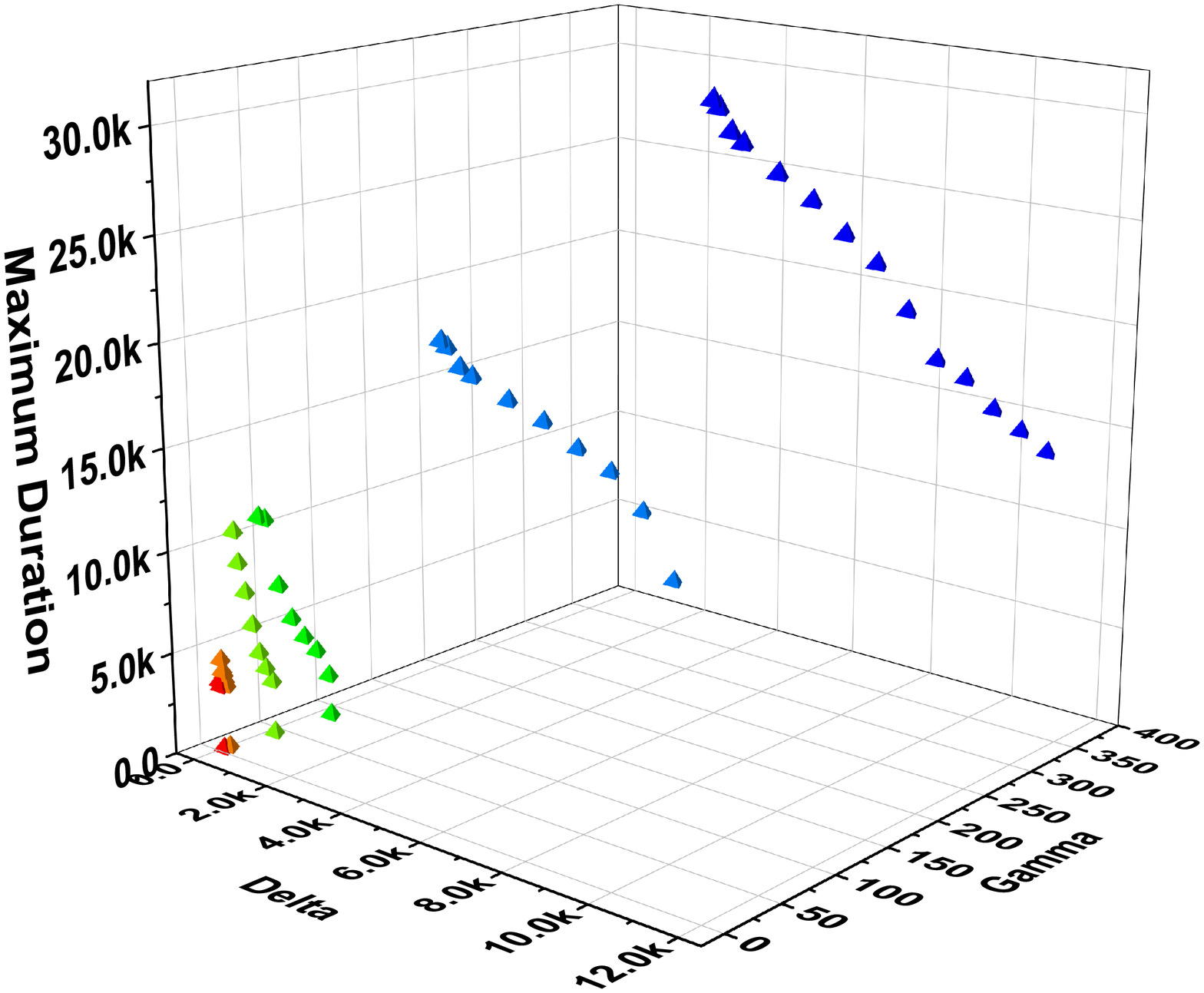} 
		\\
		& (c) Infectious I & \\
		
		\includegraphics[width=5cm,height=4.0cm]{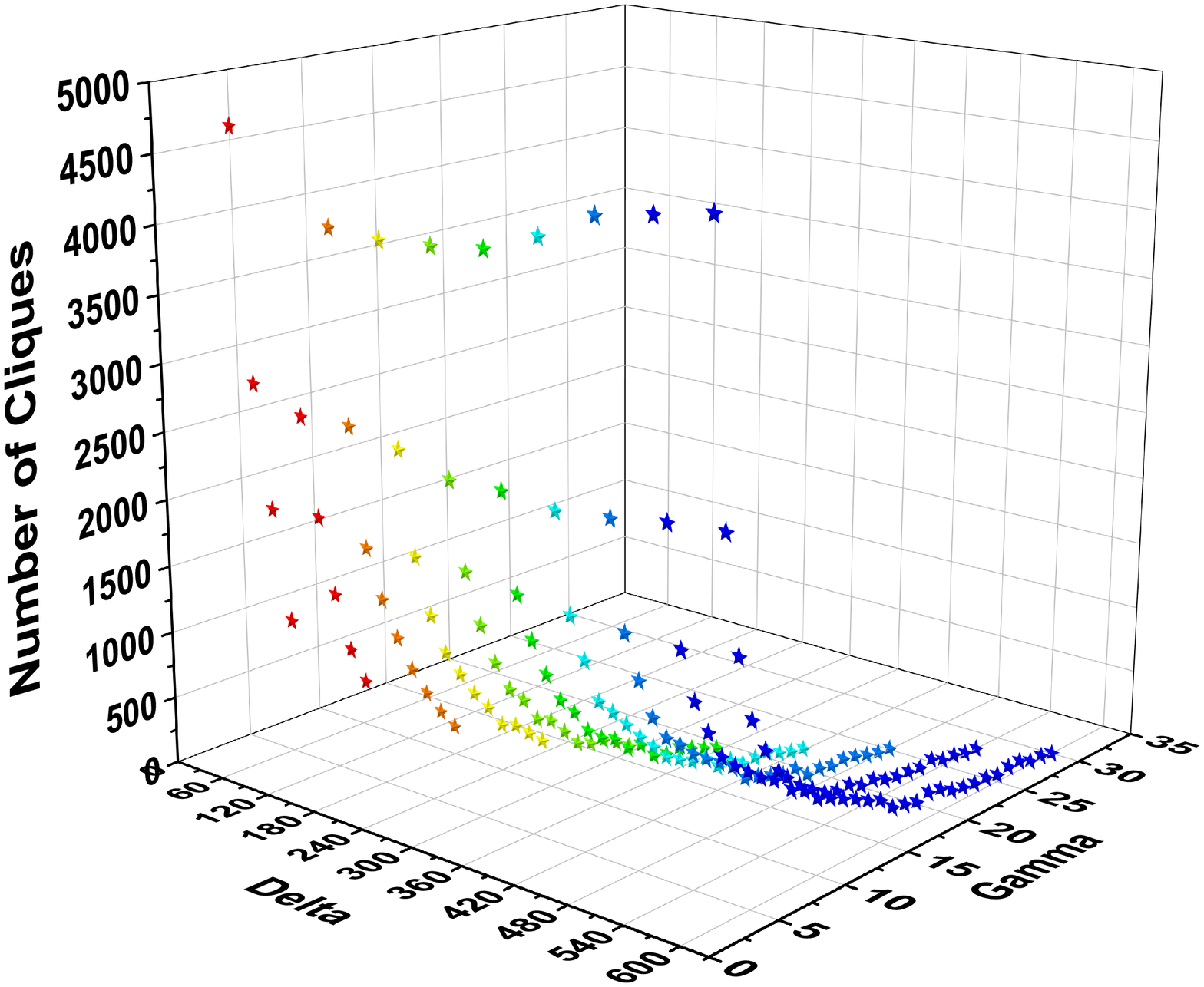}
		&
		\includegraphics[width=5cm,height=4.0cm]{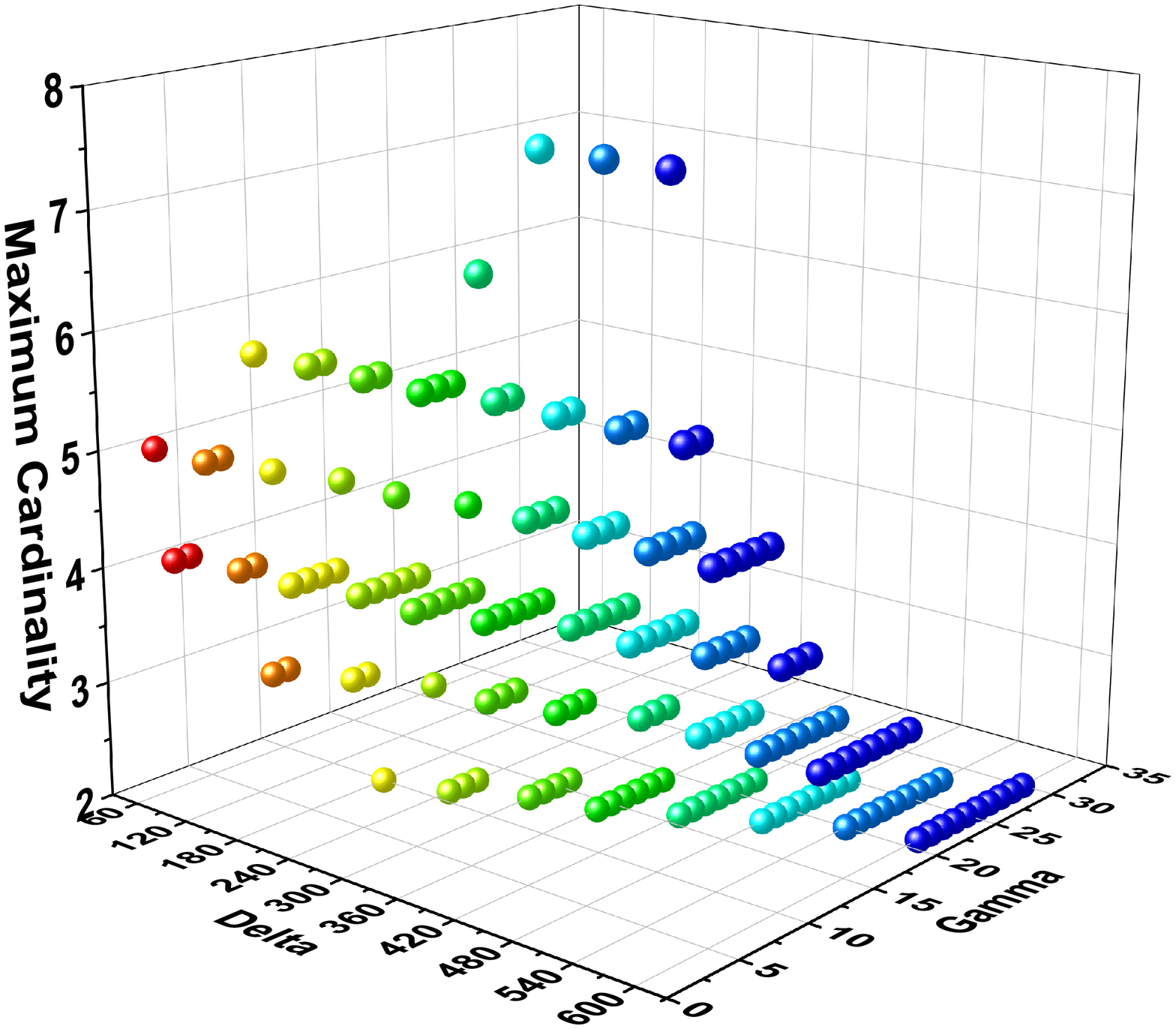}
		&
		\includegraphics[width=5cm,height=4.0cm]{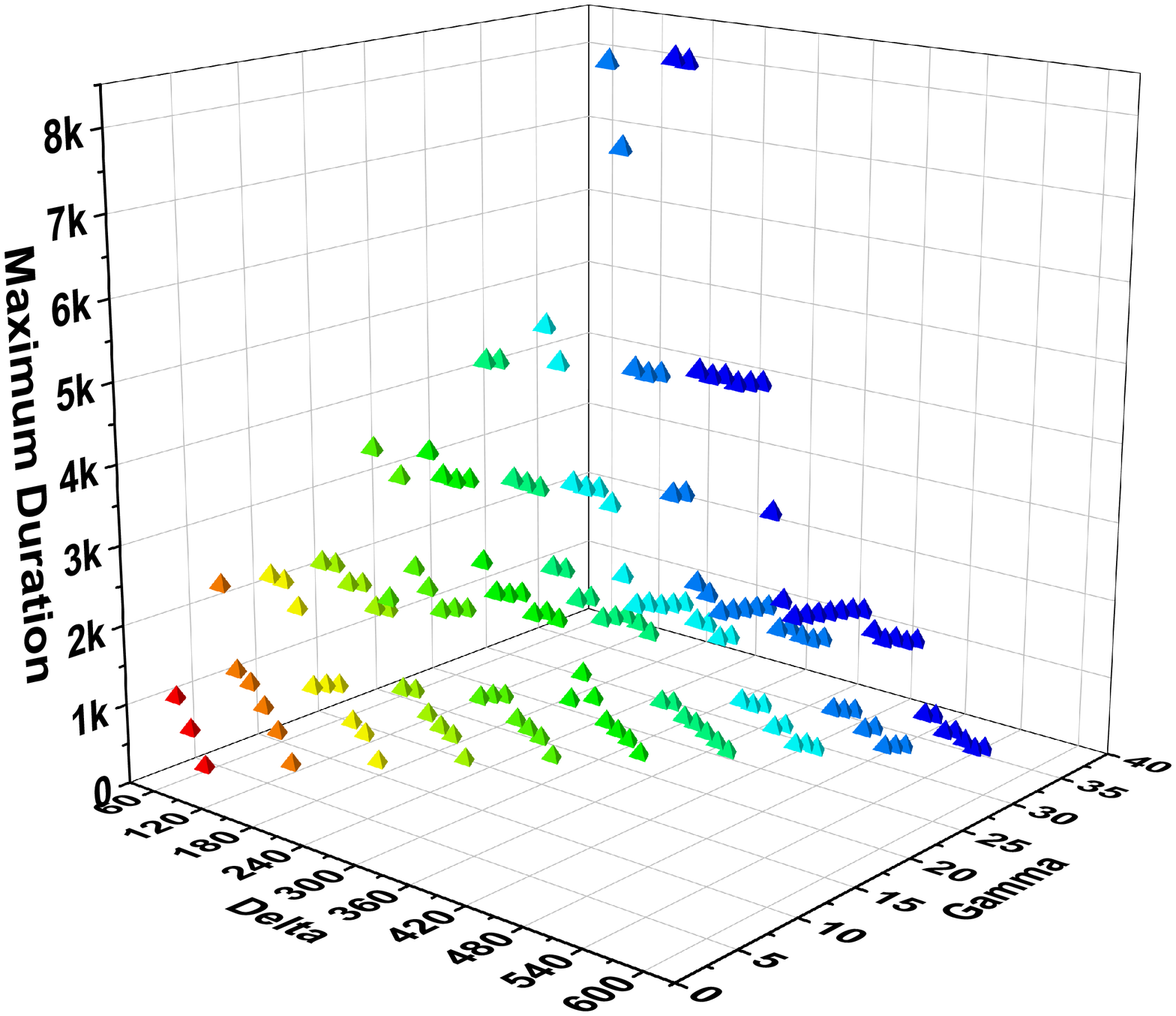}
		\\
		& (d) Infectious II  &\\
		
	\end{tabular}
	\caption{Plots for the change in Clique Count, Maximum Caridinality, and Maximum Duration with the change of $\Delta$ and $\gamma$ for different datasets}
	\label{Fig:results}
\end{figure*}

\begin{figure*}[!htbp]
	\centering
	\begin{tabular}{cc}
		\textbf{\underline{Computational Time}} & \textbf{\underline{Space Requirements}}  \\
		\includegraphics[width=5cm,height=4.5cm]{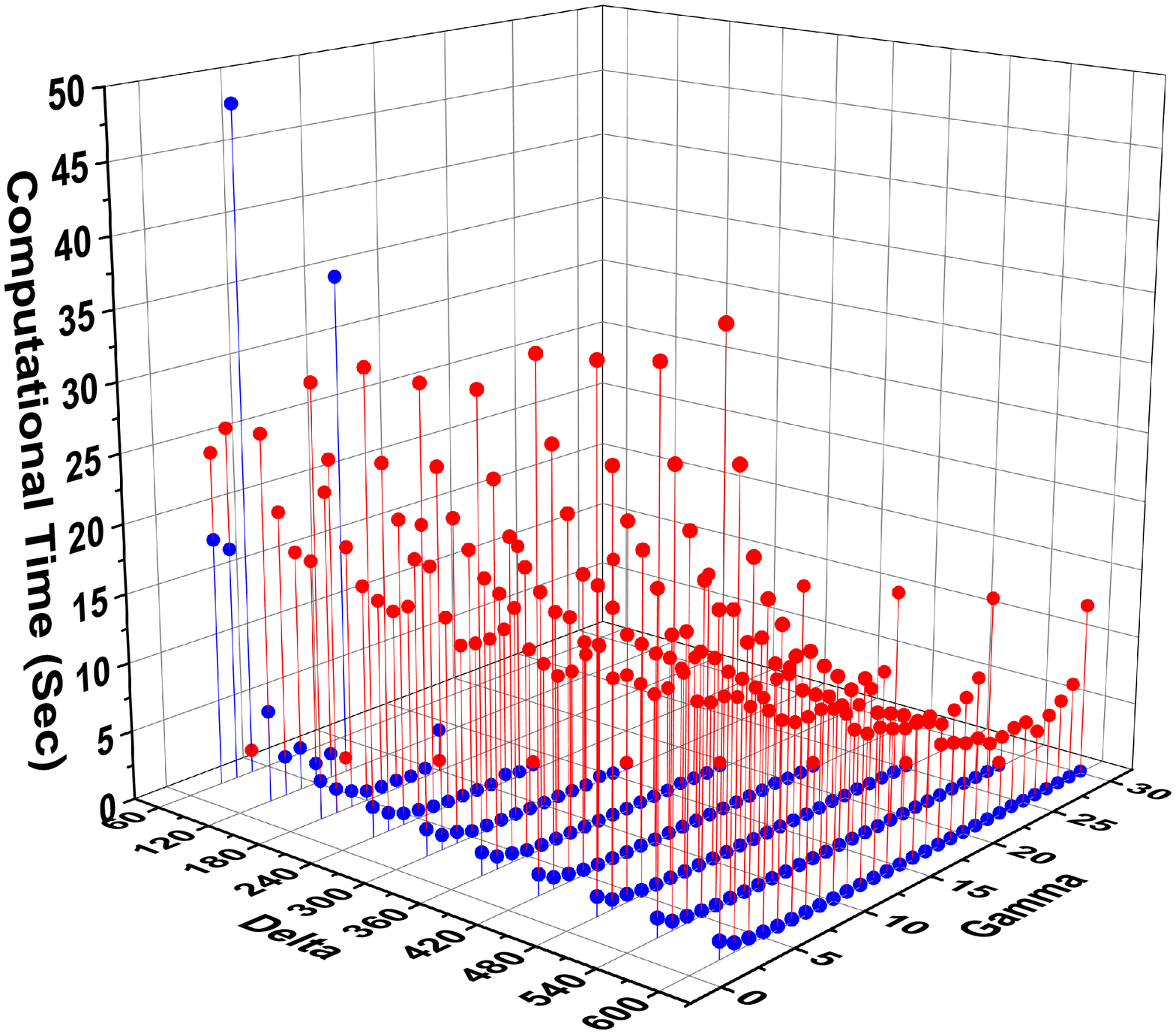} 
		&
		\includegraphics[width=5cm,height=4.5cm]{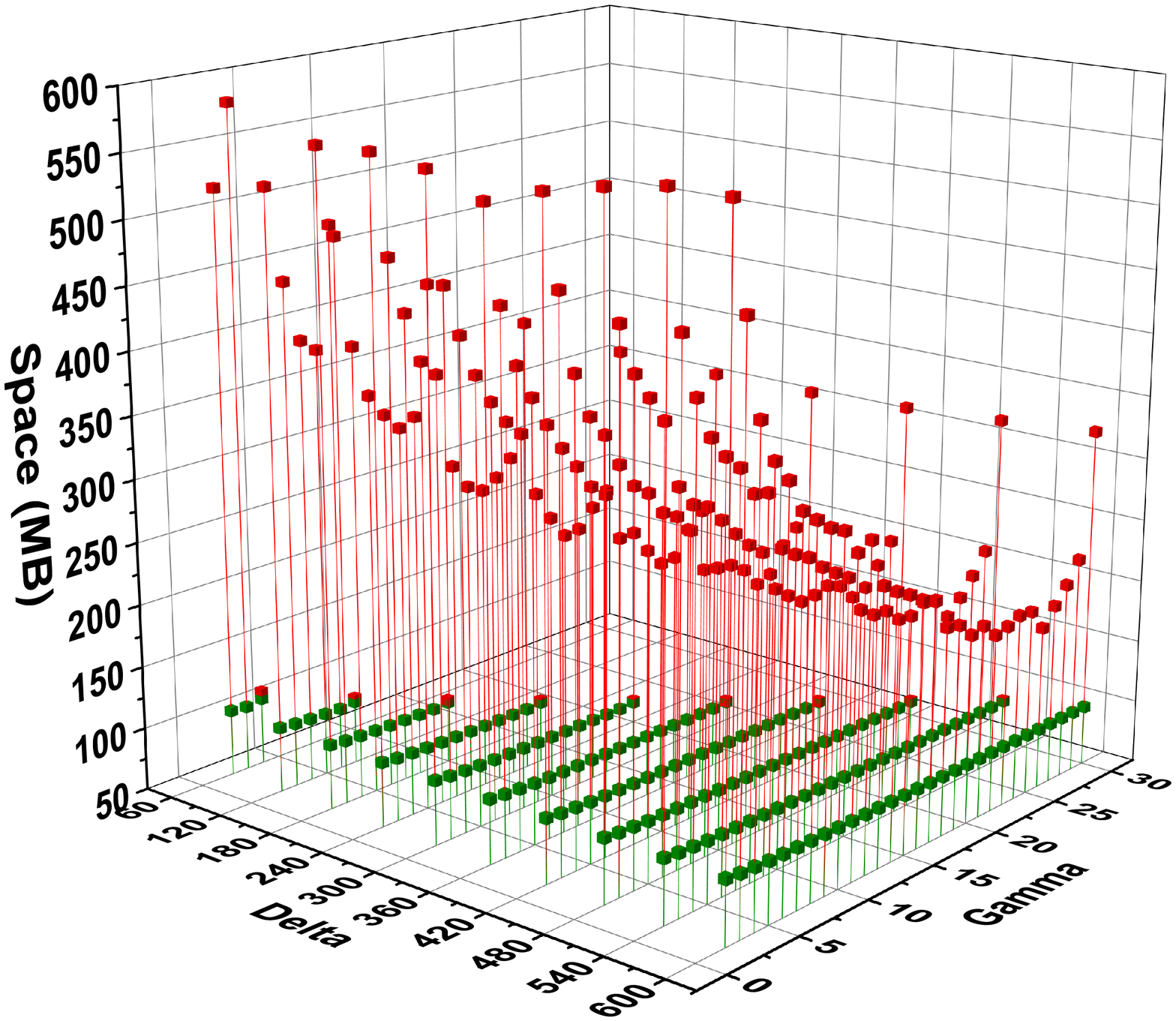}
		
		\\
		
		\hspace{2 cm}  (a) Hypertext \\
		%
		% \includegraphics[width=5cm,height=4.0cm]{haggle_Computational_Time} 
		% &
		% \includegraphics[width=5cm,height=4.0cm]{haggle_Space} 
		% 
		% \\
		%\hspace{2 cm} (b) Haggle Dataset  \\
		\includegraphics[width=5cm,height=4.0cm]{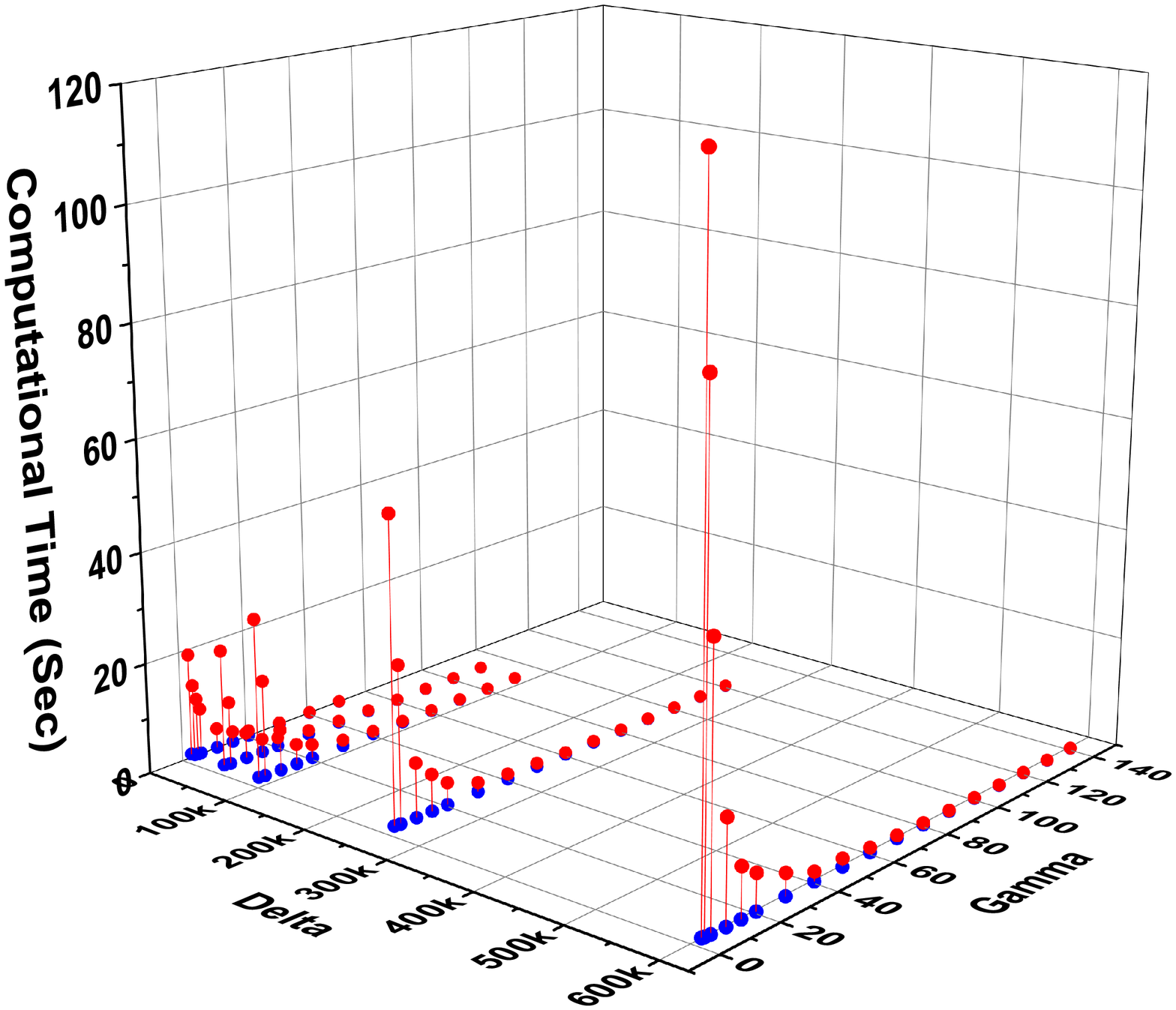} 
		&
		\includegraphics[width=5cm,height=4.0cm]{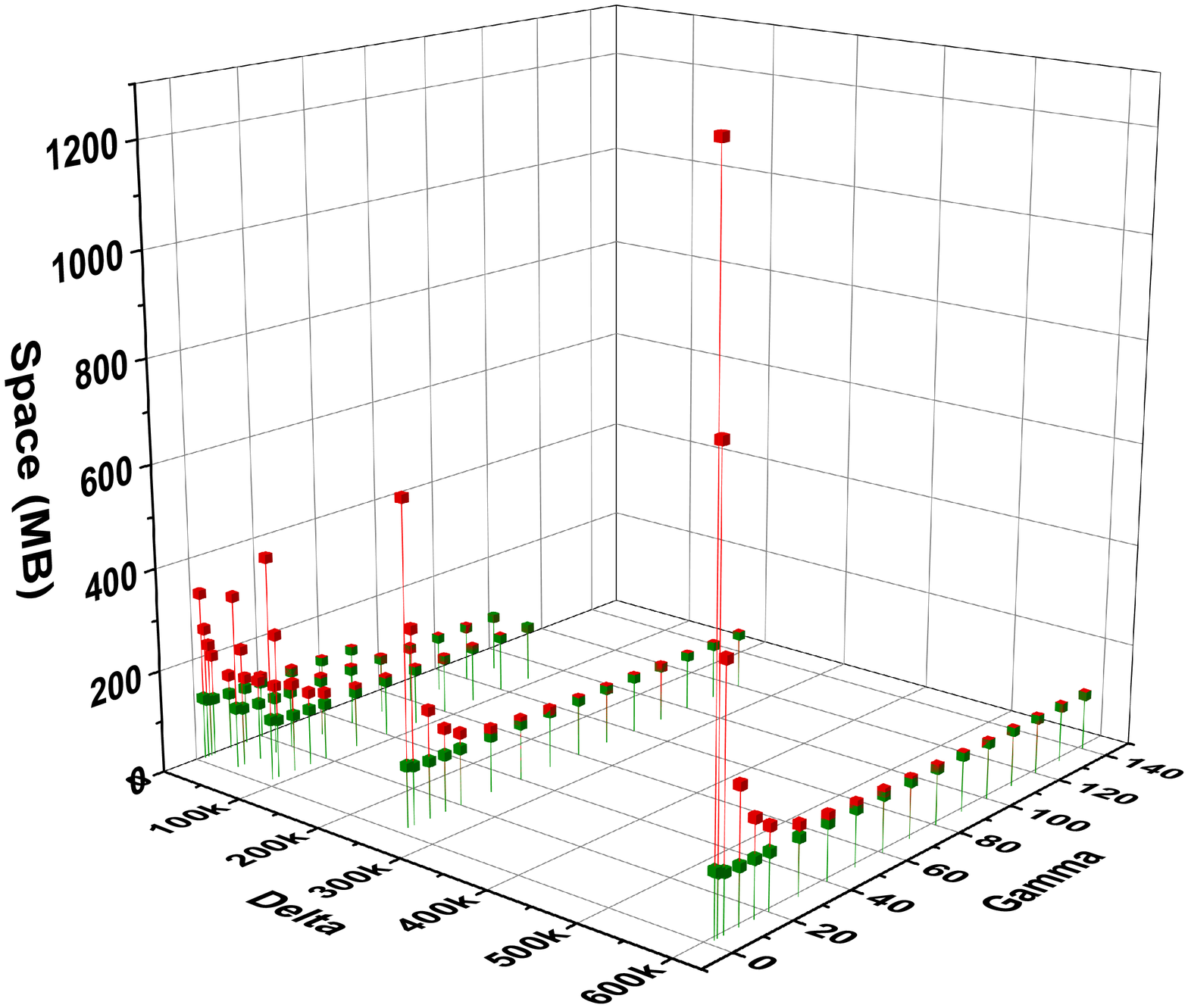} 
		
		\\
		\hspace{2 cm} (b) College Message  \\
		\includegraphics[width=5cm,height=4.0cm]{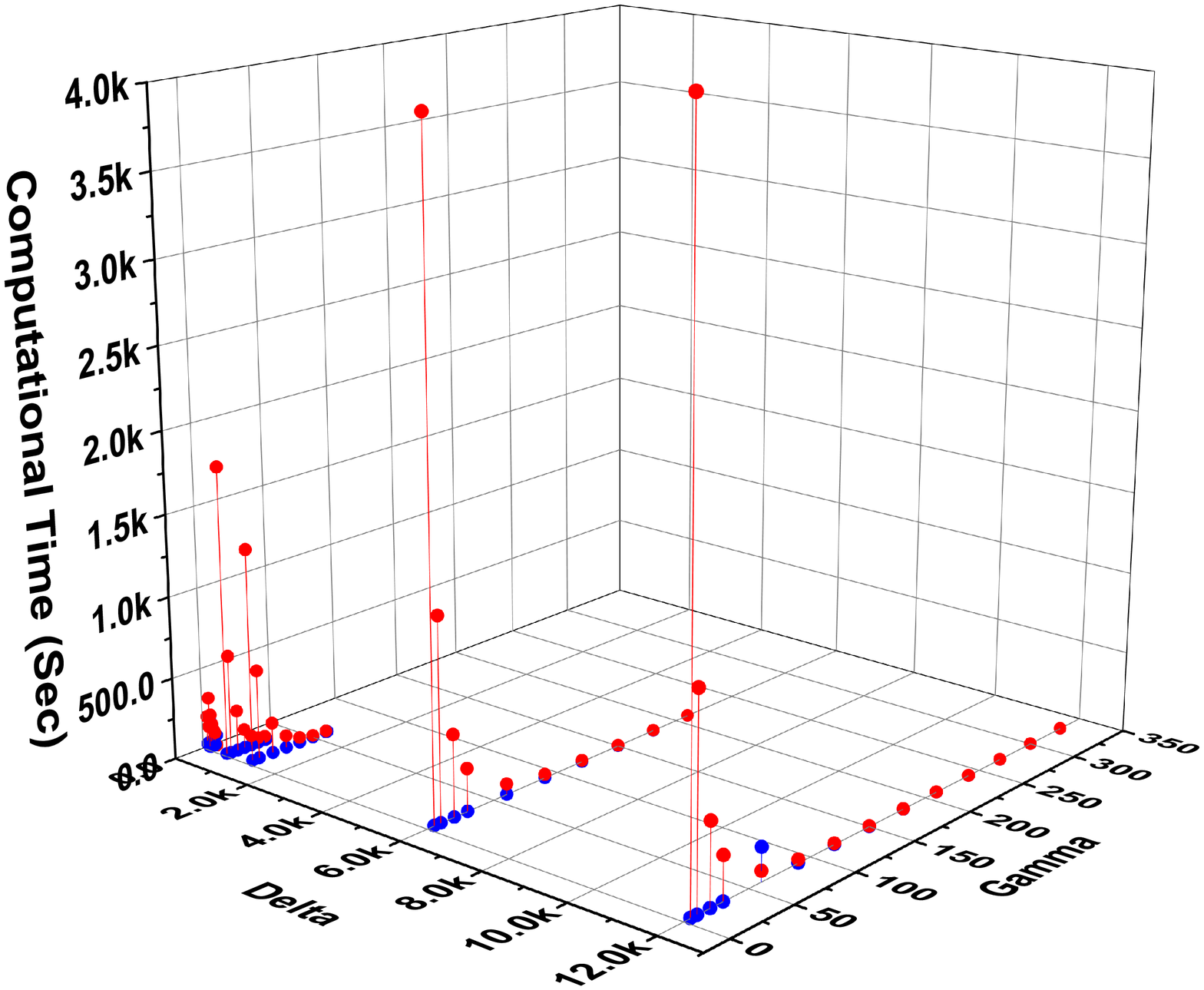} 
		&
		\includegraphics[width=5cm,height=4.0cm]{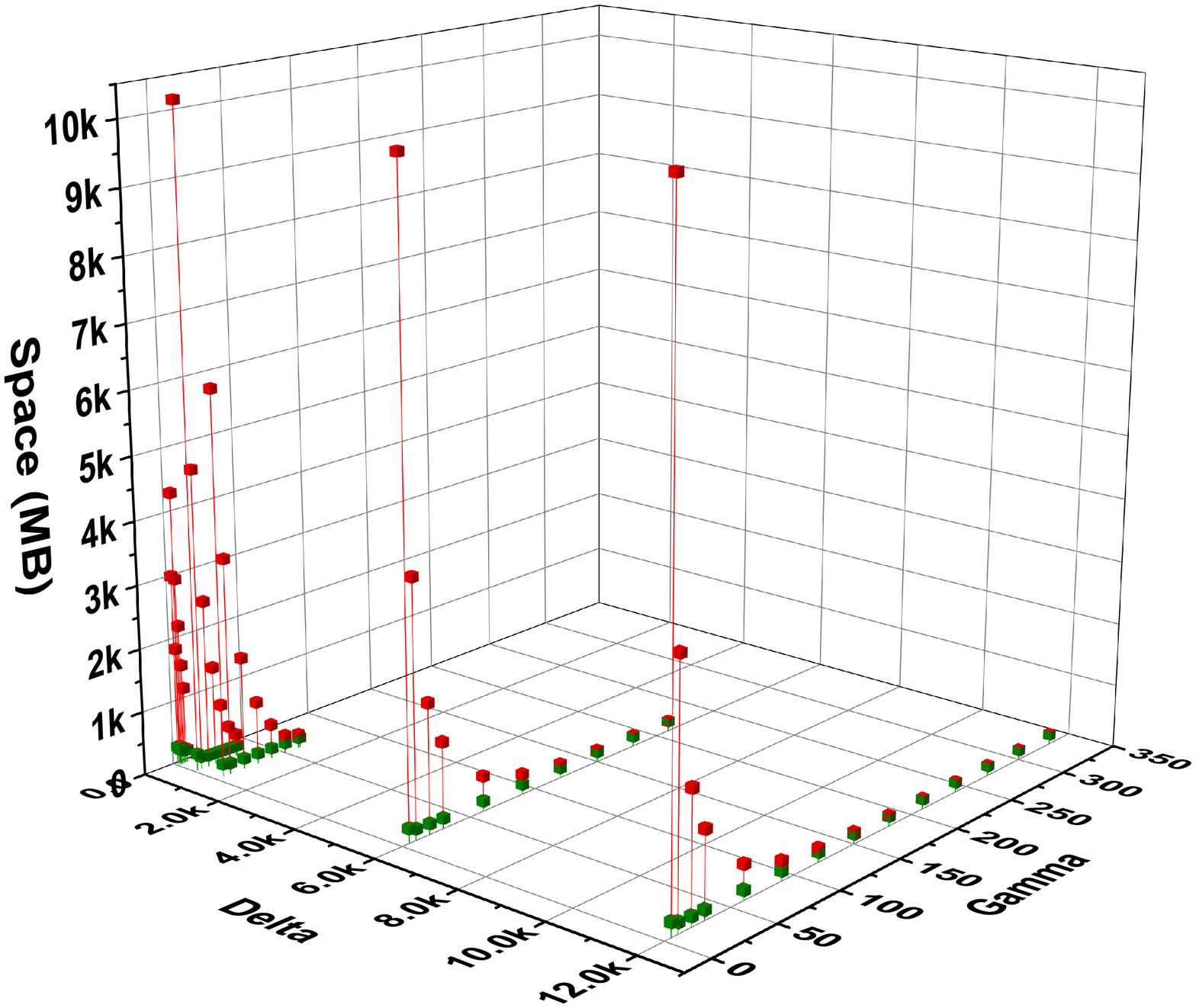} 
		
		\\
		\hspace{2 cm} (c) Infectious I  \\
		
		\includegraphics[width=5cm,height=4.0cm]{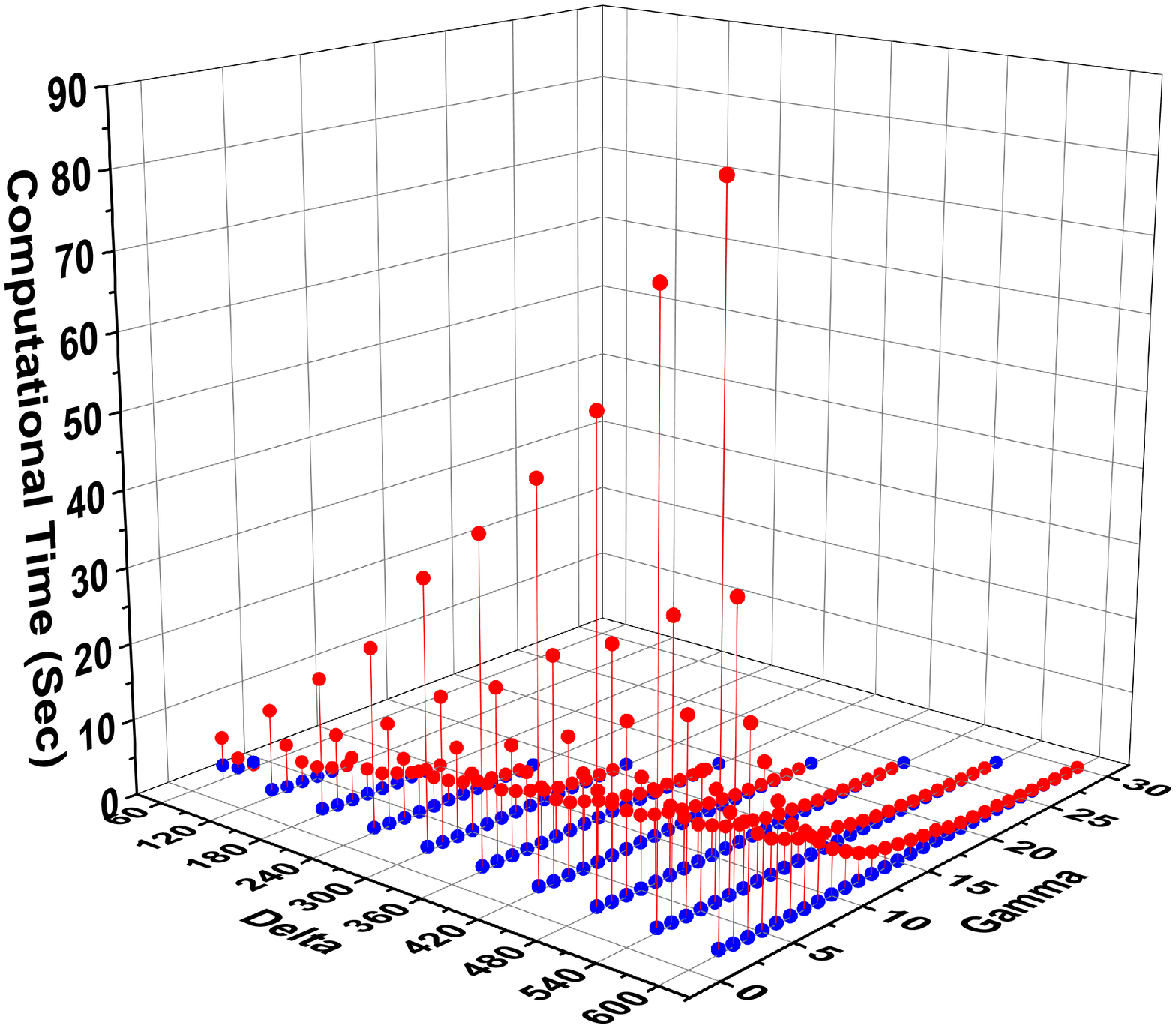}
		&
		\includegraphics[width=5cm,height=4.0cm]{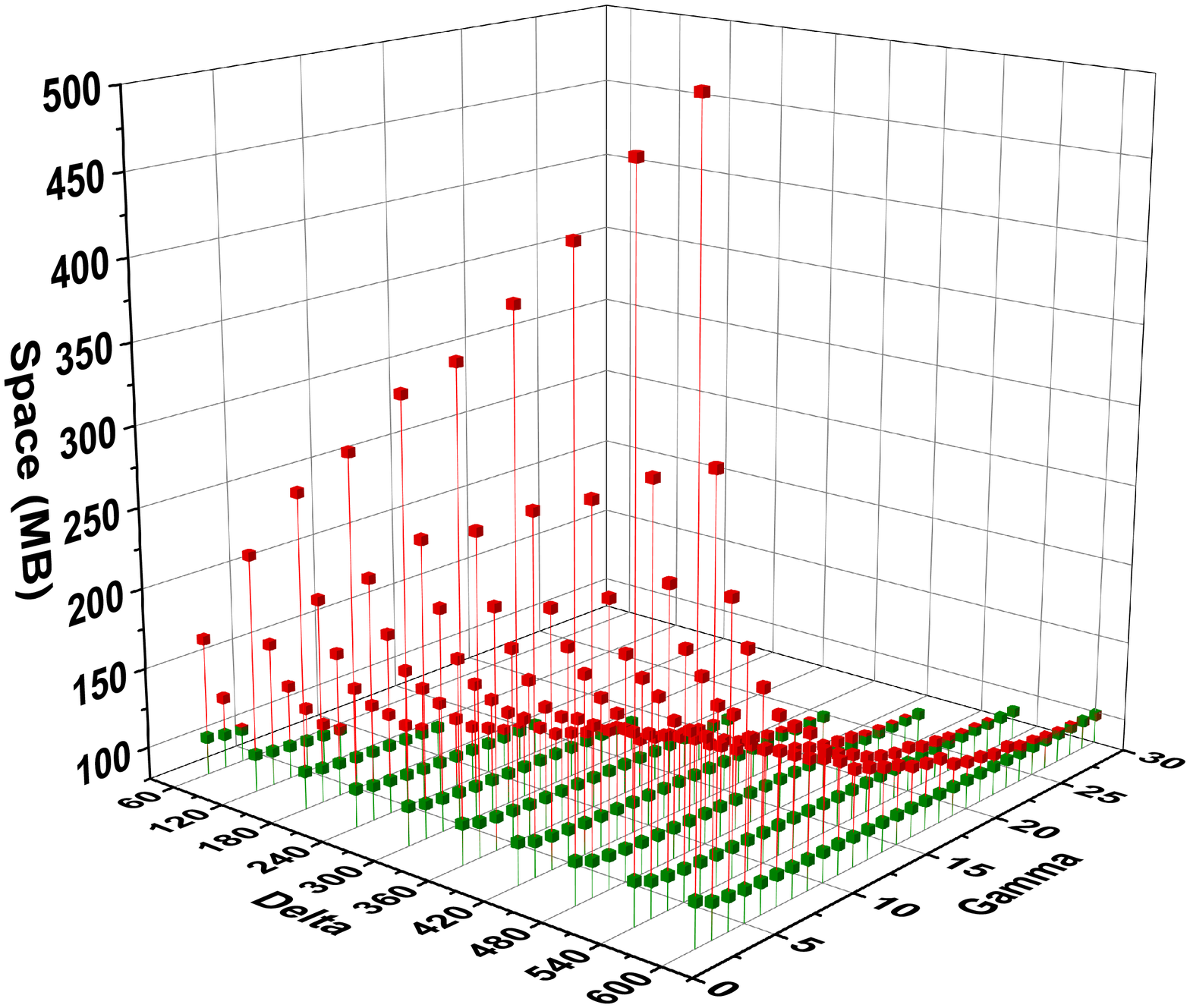}
		
		\\
		\hspace{2 cm} (d) Infectious II 
	\end{tabular}
	\caption{Plots for the change in Computational Time (in Secs), Space Requirement (in MB) with the change of $\Delta$ and $\gamma$ for different datasets}
	\label{Fig:results_TS}
\end{figure*}

\section{Conclusion and Future Directions} \label{Sec:CFD}
In this paper, we have proposed a methodology to enumerate all the maximal $(\Delta, \gamma)$\mbox{-}cliques present in a temporal network. The proposed methodology has been analyzed for time and space requirements, and also its correctness has been shown. To highlight its effectiveness, we have compared the execution time of the proposed methodology on five real\mbox{-}world publicly available datasets over the existing methods from the literature. Now, this study can be extended in the following directions. A different methodology can be processed for enumerating maximal $(\Delta,\gamma)$\mbox{-}cliques. In many real\mbox{-}world applications, links are probabilistic in nature. This phenomenon can be incorporated into our study. %Finally, different datasets arising in different situations can be analyzed by the proposed methodology.
%\begin{acknowledgements}
%If you'd like to thank anyone, place your comments here
%and remove the percent signs.
%\end{acknowledgements}

% Authors must disclose all relationships or interests that 
% could have direct or potential influence or impart bias on 
% the work: 
%
% \section*{Conflict of interest}
%
% The authors declare that they have no conflict of interest.

% BibTeX users please use one of
\bibliographystyle{spbasic}      % basic style, author-year citations
\bibliography{bare_conf}   % name your BibTeX data base

% Non-BibTeX users please use

\end{document}